\title{Rate-Limited Quantum-to-Classical Optimal Transport in Finite and Continuous-Variable Quantum Systems}
\author[1]{Hafez M. Garmaroudi}
\author[2]{S. Sandeep Pradhan}
\author[3]{Jun Chen}
\affil[1,3]{\small Department of Electrical and Computer Engineering, McMaster University\\
                    1280 Main St W, Hamilton, ON L8S 4L8, Canada. 
                    Email: \{mousas15, chenjun\}@mcmaster.ca}
\affil[2]{Department of Electrical Engineering and Computer Science, University of Michigan\\ 1301 Beal Avenue, Ann Arbor, MI 48109, USA.
                     Email: pradhanv@umich.edu }
\date{\today}
\begin{document}
\maketitle


{\setstretch{.8}
\begin{abstract}



We consider the rate-limited quantum-to-classical optimal transport in terms of output-constrained rate-distortion coding for both finite-dimensional and continuous-variable quantum-to-classical systems with limited classical common randomness. The main coding theorem provides a single-letter characterization of the achievable rate region of a lossy quantum measurement source coding for an exact construction of the destination distribution (or the equivalent quantum state) while maintaining a threshold of distortion from the source state according to a generally defined distortion observable. The constraint on the output space fixes the output distribution to an IID predefined probability mass function. Therefore, this problem can also be viewed as information-constrained optimal transport which finds the optimal cost of transporting the source quantum state to the destination classical distribution via a quantum measurement with limited communication rate and common randomness.



We develop a coding framework for continuous-variable quantum systems by employing a clipping projection and a dequantization block and using our finite-dimensional coding theorem. 
Moreover, for the Gaussian quantum systems, we derive an analytical solution for rate-limited Wasserstein distance of order 2, along with a Gaussian optimality theorem, showing that Gaussian measurement optimizes the rate in a system with Gaussian quantum source and Gaussian destination distribution.
The results further show that in contrast to the classical Wasserstein distance of Gaussian distributions, which corresponds to an infinite transmission rate, in the Quantum Gaussian measurement system, the optimal transport is achieved with a finite transmission rate due to the inherent noise of the quantum measurement imposed by Heisenberg's uncertainty principle. 

\noindent
\textit{\textbf{Keywords: }%
quantum information;  quantum optimal transport; rate-limited optimal transport; continuous variable quantum; Gaussian quantum system; Gaussian observables; Wasserstein distance; quantum source coding;} \\ 
\noindent

\end{abstract}
}


\section{Introduction}
\label{sec:introduction}
\subsection{Overview of Recent Works}
The goal of optimal transport is to map a source probability measure into a destination one with the minimum possible cost \cite{talagrand1996transportation,villani2009optimal}. 
 Let $X$ be a random variable in the source probability space $(\mathcal{X}, \~F_\~X, P_X)$, where $\mathcal{X}$ is the support, $\~F_\~X$ is the event space defined by the $\sigma$-algebra of  sets on $\~X$, and $P_X$ is the probability measure. Let $Y$ be a random variable with the target probability space $(\mathcal{Y}, \~F_\~Y, P_Y)$. The optimal transport problem aims at finding an optimal mapping $f:\~X \to\~Y$ that minimizes the expectation of the transportation cost $c(x,y)$ \cite{monge1781memoire}. 
However, as such deterministic mappings do not exist in many cases, one has to resort to stochastic channels to transform the source distribution to the target distribution. Thus the problem boils down to finding the optimal \textit{coupling}   $\pi^* \in \~P(\~X \times \~Y)$ of marginal distributions  $P_X$ and $P_Y$ that minimizes the transportation cost \cite{kantorovich1942translocation}:
\begin{align*}
    \pi^* = \argmin_{\pi \in \~P(\~X \times \~Y)} &\int_{\mathcal{X}\times \mathcal{Y}} c(x,y) \pi(dx,dy),
    \end{align*}
subject to
    $\int_{\mathcal{X} \times B_\~Y}\;  \pi(dx,dy) = P_Y(B_\~Y)$, and  $\int_{B_\~X \times \mathcal{Y}}  \pi(dx,dy) = P_X(B_\~X)$,
for any $B_\~X \in \~F_\~X$ and $B_\~Y \in \~F_\~Y$.
For a metric space $(\~X\times \~Y, d)$, 
and the cost given by $d^p$, for some $p\geq 1$,  
Wasserstein $p$-distance between the two probability measures $P_X$ and $P_Y$ is defined as the $p$-th power of the optimal transportation cost \cite{villani2009optimal,vaserstein1969markov, monge1781memoire,kantorovich1942translocation}.
This problem has been studied extensively in the literature 
with applications in many areas such as 
information theory, machine learning and statistical inference \cite{marton1986simpleproofofblowup,talagrand1996transportation}. 

The lossy source-coding problem in information theory aims to determine the minimum required rate for compressing a given source in asymptotically large blocks
so that it can be reconstructed to meet a prescribed distortion constraint.
The fundamental trade-off between the asymptotic compression rate and the reconstruction distortion is given by the \textit{rate-distortion function} \cite{cover_elements_of_information_theory_2005}.  A single-letter characterization of the optimal rate-distortion trade-off was provided by Shannon \cite{shannon1959coding}. 
The concept of simulating a channel was considered in the classical setting by Cuff with the notion of coordination capacity \cite{cuff2009dissertation, cuff2010coordination},  and the problem of  \textit{distributed channel synthesis} \cite{cuff2013distributed}.
Cuff provided an exact characterization of the associated performance limits. 
A closely related problem, known as Output-Constrained (OC)  lossy source coding  was considered in \cite{tamas_output_constrained_2015,wagner_2022_ratedistortionperecptiontradeoff},
and the corresponding performance limits were characterized.
This problem has recently found applications in multiple areas \cite{blau2018_distortion_perception,blau_2019-rethinking_ratedistortion_perception, zhang_junchen2021universal_ratedistortionperception, junchen_2022_onratedistortionperception,liu_junchen2022crossdomain_lossycompression}. 
In contrast to distributed channel synthesis which attempts to simulate a fixed channel, in OC lossy source coding,
only the output distribution of the reconstruction is constrained. 
Moreover, in the latter,
the n-letter output distribution has to 
satisfy an exact product form of the desired single-letter distribution, whereas, in the former, it is only constrained to be close to this product form in variational distance. 
This renders the problem intimately connected to optimal transport.

In \cite{Ozgur_2020_information_constrained_optimal_transport}, the authors formally introduced the problem of Information-Constrained Optimal Transport by imposing an additional constraint on coupling $\pi$ in the form of a threshold on the mutual information between $X$ and $Y$, and established an upper bound on the information-constrained Wasserstein distance by generalizing Talagrand's transportation cost inequality. It is worth noting that the information-cost function in \cite{Ozgur_2020_information_constrained_optimal_transport} is equivalent to the rate-distortion function of OC lossy source coding with unlimited common randomness  \cite{tamas_output_constrained_2015}. The protocol associated with this source coding problem provides an operational significance to the information constraint in \cite{Ozgur_2020_information_constrained_optimal_transport}, which can be interpreted as the optimal transport problem when the communication rate between the source and the destination is limited.
We refer to this problem as \textit{Rate-Limited Optimal Transport} (RLOT).

The quantum version of the optimal transport problem has also been investigated in recent years \cite{datta2020relativeentropy_optimaltransport,ikeda2020foundationofquantumoptimaltransport, depalma_quantum_optimal_transport_2021,cole2021quantumoptimaltransport,depalma_2021_quantumwasserstein_order1}. 
In \cite{depalma_quantum_optimal_transport_2021}, the authors proposed a generalization of the quantum Wasserstein distance of order 2 and proved that it satisfies the triangle inequality. They further showed that the associated quantum optimal transport schemes are in one-to-one correspondence with the quantum channels, and in the case of quantum thermal states, the optimal transport schemes can be realized by quantum Gaussian attenuators/amplifiers. In  \cite{depalma_2021_quantumwasserstein_order1},  the quantum Wasserstein distance of order 1 was introduced, and several quantum measure concentration inequalities were established.


The topic of rate-distortion theory has also drawn significant attention in the field of quantum information theory. In an early attempt, Barnum \cite{barnum_quantum_2000} conjectured a lower bound on the rate-distortion function for a quantum channel with entanglement fidelity as the distortion measure, based on the coherent information quantity. His lower bound was later proved to be not tight in \cite{datta_wilde_2012_quantum_ratedistortion_reverseshanon}. The authors established the quantum rate-distortion theorems for both entanglement-assisted and unassisted systems in terms of quantum information quantities such as entanglement of purification and quantum mutual information. The key analysis in \cite{datta_wilde_2012_quantum_ratedistortion_reverseshanon} rely on
the reverse Shannon theorem \cite{winter_devetak_2014_reverse_shannon_theorem}, which addresses the problem of simulating a noisy channel with the help of a noiseless channel, or more generally, simulating one noisy channel with another noisy channel.

In \cite{winter2004extrinsic}, Winter introduced the notion of information in quantum measurements and established the \textit{measurement compression theorem}, which delineates the required classical rate and common randomness to faithfully simulate a  measurement $\Lambda$ for an input state $\rho$. 
In \cite{wilde2012information_theoretic_costs}, variants of this measurement compression theorem were studied for the case of non-feedback simulation and the case with the presence of quantum side information. 
Further, in \cite{Datta_2013_quantumtoclassical}, Datta et. al. invoked this measurement compression theorem to give a proof of the quantum-to-classical (QC) rate-distortion theorem. 
This idea of measurement simulation was further extended to distributed measurement simulation for composite quantum states in \cite{pradhan_atif2021distributed,pradhan_atif2021faithful}, where the required classical rates and common randomness to faithfully simulate a bipartite state $\rho_{AB}$ using distributed measurements are characterized.

Most of these works provide their coding theorems only for finite-dimensional quantum systems. The problems of measurement compression, the quantum rate distortion, and the reverse Shannon theorem have not been addressed in the continuous variable quantum systems. 
The key analytical tools that form the basis of these works such as operator Chernoff bounds \cite{wilde2013quantum_information_theory_book} and one-shot quantum covering lemma \cite{tomamichel2015quantum} have not been studied in the continuous-variable quantum systems. However, the problems of channel capacity of quantum Gaussian states, and accessible information of Gaussian quantum states have been studied extensively  \cite{holevo_hirota1999capacityofquantumgaussianchannels,holevo2001evaluating-capacities-bosonic, holevo2020-gaussian-maximizers-observables}.



\subsection{Contributions}
The present paper introduces the QC \textit{Rate-Limited Optimal Transport}  problem involving the measurement of a quantum source to produce a classical outcome with the desired distribution. 
The cost associated with this transformation is measured by a distortion observable. The system avails limited-rate classical communication and common randomness resources.
We establish a single-letter characterization of the achievable rate region of the measurement protocols acting on the quantum source state for the construction of a  prescribed destination classical distribution with a specific tolerance for distortion (see Theorem \ref{th:maintheorem}). We further consider the QC RLOT problem for the Continuous-Variable (CV) quantum systems
and characterize the associated performance limits with the development of a novel continuous measurement coding protocol (see Theorem \ref{th:continuous-theorem}). 
Our work enables the generalization of quantum optimal transport to the Rate-Limited (RL) setting as well as the generalization of classical information-constrained optimal transport to the quantum setting. 
Moreover, the analysis of the CV quantum systems is also one of the key contributions of this paper (Section \ref{sec:continuous-systems}).

In particular, we provide a detailed analysis of quantum-to-classical RLOT for the case of qubit source state and entanglement fidelity distortion measure (see \cite{barnum_quantum_2000,datta_wilde_2012_quantum_ratedistortion_reverseshanon}); the minimum transportation cost is explicitly characterized and is shown to be achievable with a finite transmission rate (see Theorem \ref{th:qubit_ratedist_theorem} and \ref{th:qubit_optimal_transport}). 
We further provide an evaluation of the optimal performance limit for the Gaussian QC systems with unlimited common randomness using a quadratic distortion observable constructed from canonical quadrature operators. 
First, we develop a Gaussian measurement optimality theorem (see Theorem \ref{th:gaussian_optimality_theorem}) which shows for a continuous QC system with a Gaussian quantum source and Gaussian destination distribution, the RLOT is a Gaussian measurement. Moreover, a detailed analytical formulation provides the parameters of the optimal Gaussian measurement. In this special case with unlimited common randomness, the RL optimal transportation cost corresponds to the RL  Wasserstein 2-distance between the quantum source state and the classical outcome distribution. In stark contrast to the classical Gaussian systems,  in the Gaussian QC system, the QC Wasserstein distance is attainable at a finite communication rate. This is a direct consequence of the Heisenberg uncertainty principle and the fact that the measurement noise cannot be made smaller than a threshold \cite{Holevo_2019_quantum_systems_book, serafini_quantum_2017}.

We would also like to mention some key differences between our source coding theorem and other works. Specifically, in comparison to \cite{Datta_2013_quantumtoclassical}, our work on finite-dimensional systems (Section \ref{sec:finite-systems}),  
has the additional constraint that the output must follow a predetermined distribution in the exact IID format. This provides a multi-letter protocol that governs the optimal transportation of a quantum source state to a target distribution, through a RL classical channel and with limited common randomness. In contrast to the conventional rate-distortion theorem for which the common randomness provides no performance improvements, in this problem, the common randomness can help reduce the communication rate by providing the extra randomness required to ensure the output has the desired IID distribution.



\subsection{Prospective Applications}
The QC RLOT problem, besides its theoretical significance, has possible applications in some practical quantum systems. 
In this work, we study the systems in which, Alice has many copies of a quantum source and has a description of the source in terms of its density operator. Alice has the freedom to perform any sort of measurement she wants on the copies of quantum source states. The goal is to have a compressed classical approximation of the quantum source (with respect to some distortion measure) while requiring this classical approximation to have a specific IID distribution. 

To understand the significance of this output constraint, we first look into the benefits of these output-preserved compressions in the classical systems described in \cite{blau_2019-rethinking_ratedistortion_perception, tamas_output_constrained_2015, wagner_2022_ratedistortionperecptiontradeoff,li_klejsa2011distribution}. 
Specifically, the following issues occur in conventional rate-distortion codings and quantization systems:
    (I) Discontinuities emerge within the range of outcome values, presenting as discrete steps in the outcomes.
    (II) In specific setups, the reconstruction assigns null values to some parts of the signal. For the example of Gaussian systems with mean-squared error distortion measure, the problem reduces to reverse waterfilling which produces null values at the tails of the Gaussian spectrum.
    (III) In the extreme case of zero communications, the output would generate constant null values even if it has some statistical information about the source.
To combat these issues, in the classical systems, randomized (dithered) uniform quantizers were used to add some random noise to the signal before the quantization which resulted in better reconstructions that avoided nullified tails or discrete steps \cite{tamas_output_constrained_2015,li_klejsa2011distribution}. These dithered forms of quantization were studied with their benefits in both uniform quantizers and entropy-coded quantizers in \cite{roberts1962picture,schuchman1964dither,gray1993dithered,zamir1992universal,zamir1996information}.

Using a similar idea of dithering, \cite{li_klejsa2011distribution, wagner_2022_ratedistortionperecptiontradeoff} provided the distribution-preserving quantizers which ensure that the reconstructed signal has the exact distribution as the source. 
This technique has been shown to be very effective in maintaining the perceptional quality of the voice and image codings. Especially, with the compressors using generative adversarial networks (GAN)  which have discriminator networks designed to perceptually distinguish between the source and reconstruction \cite{tschannen2018deepgan,rippel2017realtime_gan,agustsson2019generativeadvnet,wagner_2022_ratedistortionperecptiontradeoff}.

Furthermore, the Wasserstein distance satisfactorily addressed the problem of vanishing gradient in the training that afflicted classical GANs trained with other distance measures such as total variation or Jensen-Shannon divergence \cite{arjovsky2017wassersteingan}. 
 In quantum GAN systems, it is suggested in several works \cite{chakrabarti2019quantumwassersteingans,becker2021quantumstatisticallearning,depalma_2021_quantumwasserstein_order1} that the quantum Wasserstein may address similar issues.
 Hence, 
 the choice of Wasserstein distance as the loss function for these GAN models relates them to the rate-limited optimal transport problem, which was alluded to earlier in the introduction.\footnote{In other words, the way the lossless source coding problem gives an operational interpretation of the concept of entropy, the problem addressed in this paper gives a similar interpretation of the rate-limited Wasserstein distance.}

In our QC Gaussian measurement OC rate-distortion problem, the Gaussian optimality Theorem \ref{th:gaussian_optimality_theorem} basically shows that the OC rate-distortion function can be modeled with a Gaussian measurement with a specific inherent noise independent of the source. This is equivalent to a similar observation in the classical conventional systems that the dithered quantizer can be modeled using additive Gaussian noise \cite{zamir1996information}.
In the light of these results, the QC rate-limited optimal transport problem can find application in analyzing the recently emerging quantum-classical GAN systems where quantum sources are used to generate classical data (image, voice, etc) as described in \cite{huang2021experimentalquantumgan,tsang2023hybrid_qc_gan,zoufal2019quantum_gan_distribution,romero2021variationalquantumgen_continuous,situ2020quantum_gan_discrete}.

The definition of the QC Wasserstein distance and rate-limited QC Wasserstein distance provided in Definition \ref{def:ratelim_wasserstein} may lead to new measure concentration inequalities in the QC framework. For example, in the classical systems, an information-constrained OT inequality is provided in \cite{Ozgur_2020_information_constrained_optimal_transport}, which both sharpens Talagrand's inequality and extends that inequality to the rate-limited setting.
Then following Marton's approach (blowing-up lemma) \cite{marton1986simpleproofofblowup}, the authors obtained new measure concentration inequalities. 
Similarly, in the quantum setting, 
a quantum version of the Gaussian concentration inequality
was provided in
\cite{depalma_2021_quantumwasserstein_order1} 
based on the quantum Wasserstein distance of order 1 introduced therein. Therefore, we expect similar development of the optimal transport inequalities and concentration inequalities based on the introduced rate-limited QC Wasserstein distance.
These inequalities in the classical setting were further used by \cite{Ozgur_2020_information_constrained_optimal_transport,wu_ozgur2018capacityofrelaychannel} to prove a strong data processing inequality which further provides an upper bound on the capacity of primitive relay channels. These results are also expected to be extended to the quantum-classical systems which can have similar applications in analyzing the quantum relay channels, for example, see \cite{pereg_boche2021quantumbroadcastdecoder}. 



The contents of this paper are organized as follows. Finite dimensional QC systems are addressed in Section \ref{sec:finite-systems} with the statement of the coding theorem, the proof of the achievability and the proof of converse part, are given in Sections \ref{sec:theorem_statement}, \ref{sec:achievability} and \ref{sec:converse}, respectively.
We develop the coding theorem of CV quantum systems in Section \ref{sec:continuous-systems}. Specifically, we introduce the continuity theorems that are needed for generalizing the definitions of measurement systems to continuous Hilbert spaces in Section \ref{subsec:generalized-definitions}, state the coding theorem for continuous Hilbert spaces in Section \ref{subsec:theorem-continuous}, and prove the achievability part in Section \ref{subsec:achievability-continuous}. 
In Section  \ref{sec:evaluations_qubit}, we consider the case of qubit measurement systems with unlimited common randomness, for which a detailed analysis of RLOT is provided. 
The Gaussian measurement systems are analyzed in Section \ref{sec:evaluation_gaussian} which introduces the rate-limited QC 2nd-order Wasserstein distance. 
Finally, proofs of the important theorems and lemmas are provided in the appendices.


\subsection{Notation}
Let $\~H_A$ denote the separable Hilbert space of system $A$. 
Further, Let $\~L(\~H)$ be the set of linear operators on $ \~H$, and $\~G(\~H) \subset \~L(\~H)$ be the set of all density operators on $\~H$.  
A density operator $\rho^A \in \~G(\~H_A)$ in Hilbert space $A$ is denoted by the superscript $A$. Moreover, a composite density operator $\tau^{AB} \in \~G(\~H_A \otimes \~H_B)$ is defined over the tensor product Hilbert space of $\~H_A \otimes \~H_B$. 
The trace norm is defined for an operator $P \in \~L(\~H)$ by $\norm{P}_1 = \mathrm{Tr}\big[\sqrt{P^\dagger P}\big]$.
Also, $\~{QN}(m_X, \Sigma_X)$ denotes the quantum Gaussian state with mean vector $m_X$ and covariance matrix $\Sigma_X$.
A sequence of $n$ samples taking values in a set $\mathcal{X}$ is expressed by the $n$ superscript and each sample by a subscript, $v^n = \{v_1, v_2, \cdots , v_n\}$. The tensor product of $n$ Hilbert spaces  is denoted as $\~H_{A_1} \otimes \~H_{A_2} \otimes \cdots \otimes \~H_{A_n}$. The composite density operator of the state of $n$ systems is denoted by $\rho^{A^n}$, where the local state of the $i$-th system is given by
$\rho^{A_i}$.
The term $d_{TV}(P, Q)$ is the total variation distance of between distributions $P$ and $Q$.
 The trace of a matrix object is denoted by $\sp{.}$, as opposed to  $\Tr{.}$, the trace on a Hilbert space.

\section{Finite Dimensional Quantum Systems} 
\label{sec:finite-systems}
The system is comprised of an n-letter memoryless source with its product state $\rho^{\otimes n}$ as the input of an encoder on Alice's side, where 
$\rho$ is a density operator defined on a Hilbert space $\mathcal{H}_A$. On Bob's side, 
we have the reconstruction Hilbert space 
$\mathcal{H}_X$, representing the classical outcomes as quantum registered classical states, with an orthonormal basis indexed by a finite set $\mathcal{X}$. 
We also let the quantum state $R$ denote the reference of the source with the associated Hilbert space $\mathcal{H}_R$ with $\mbox{dim}(\mathcal{H}_R)=\mbox{dim}(\mathcal{H}_A)$. The composite state $\ket{\Psi_\rho}^{RA}$ of the source $\rho^A$ and the reference is obtained by the canonical purification \cite{wilde2013quantum_information_theory_book,wilde2012information_theoretic_costs,winter2004extrinsic}. The corresponding density operator of this purified state is denoted by $\Psi_\rho^{RA}$.

\subsection{System Model and Problem Formulation}
\subsubsection{Distortion Measure} \label{subsec:distortionmeasure}
The distortion measure between two systems $R$ and $X$ is defined in the general form using a distortion observable $\Delta_{RX}>0$
defined on $\mathcal{H}_R \otimes \mathcal{H}_X$ for the single-letter composite state $\tau^{RX}$, as described in \cite{Datta_2013_quantumtoclassical}: 
\begin{align}
    d(\Delta_{RX}, \tau^{RX}) := \Tr{\Delta_{RX} \tau^{RX}}.
\end{align}
Then, having an n-letter composite state $\tau^{R^n X^n}$, and the distortion observable for each $i$-th system defined as $\Delta_{R_i X_i}$, the average n-letter distortion is defined as
\begin{align}   \label{eq:average-singleletter-representation-distortion}
     d_n(\Delta^{(n)}, \tau^{R^n X^n}) &:= \Tr{\Delta^{(n)} \tau^{R^n X^n}}
    = \frac{1}{n} \sum_{i=1}^n \Tr{\Delta_{RX}^{R_i X_i}\tau^{R_i X_i}},
\end{align}
where $\tau^{R_i X_i} = \Tr_{[n] \setminus i} \{\tau^{R^n X^n}\}$ is the localized $i$-th composite state, and $\Delta^{(n)}$ is the average n-letter distortion observable defined as $\Delta^{(n)} := \frac{1}{n} \sum_{i=1}^n \Delta_{RX}^{R_i X_i} \otimes (\mathrm{id}^{RX})^{\otimes [n] \setminus i}$.

In the case of a discrete QC system, the composite state has the form $\tau^{RX} = \sum_x P_X(x) \hat \rho_x^R \otimes \ketbra{x}{x}^X$, where $\hat \rho_x$ is the post-measurement reference (PMR) state and $P_X$ is the PMF of outcomes. We further decompose the distortion observable as $\Delta_{RX} = \sum_{t=1}^T \Delta_{R}^t \otimes \Delta_{X}^t$ using the tensor product decomposition. Thus, we get
\begin{align}
    \Tr{\tau^{RX} \Delta_{RX}} &= \Tr_R \left[  \Tr_X \left[  \left(\sum_x P_X(x) \hat \rho_x^R \otimes \ketbra{x}{x}^X\right) \left(\sum_{t=1}^T \Delta_{R}^t \otimes \Delta_{X}^t\right)  \right]  \right] \nonumber\\
    &= \sum_x P_X(x) \Tr_R \left\{ \hat \rho_x^R \Delta_R(x) \right\} = \Exx{X}{\Tr_R \left\{   \hat \rho_X^R \Delta_R(X)   \right\} },
\end{align}
where $\{\Delta_R(x): x\in \~X\}$ is a mapping $\Delta_R: \~X \to \~H_R$ of the form, $\Delta_R(x) := \sum_{t=1}^{T} \Delta_{R}^t \braket{x|\Delta_{X}^t}{x}$.

\ifmycomments
The overall composite state of the post-measurement system with the measurement $\~M^{A \to \~X}: \{\Lambda_x\}_{x\in \~X}$, is given by
\begin{align}
    \tau_{RX} :=\sum_x \Tr_A\Big\{(\text{id}\otimes \Lambda_x)\Psi^{RA}\Big\} \otimes \ketbra{x}{x}
\end{align}
\fi

\subsubsection{Source Coding Scheme}
The system is comprised of an n-letter source coding scheme defined below.
\begin{definition} (\textbf{Discrete Source Coding Scheme})\label{def:source-coding}
An $(n,R, R_c)$ source-coding scheme for this QC system is comprised of an encoder $\~E_n$ on Alice's side and a decoder $\~D_n$ on Bob's side, with the following elements. The encoder is a set of $|\~M| = 2^{nR_c}$ collective n-letter measurement POVMs $\Upsilon^{(m)}\equiv \{\Upsilon^{(m)}_l\}_{l\in \~L}$, each comprised of $|\~L| = 2^{nR}$ POVM operators corresponding to $|\~L|$ outcomes and the common randomness value $m$, which determines the specific POVM that will be applied to the source state. 
Bob receives the outcome $L$ of the measurement through a classical channel and applies a randomized decoder to this input pair $(L,M)$ to obtain the final sequence $X^n$ stored in a quantum register. 
Thus, the composite state of the reference and output induced by this coding scheme is
\begin{align}
    \tau_{\text{ind}}^{R^n X^n}
    &=\sum_{x^n} \sum_{m,l} \frac{1}{|\~M|} \Tr_{A^n}\left\{(\mathrm{id}^R \otimes \Upsilon^{(m)}_l)[\Psi_\rho^{RA}]^{\otimes n})\right\} \otimes \mathcal{D}_n(x^n|l,m) \ket{x^n}\bra{x^n}.
\end{align}
\end{definition}
We define the average n-letter distortion for the source coding system with encoder-decoder pair $(\~E_n,\~D_n)$,  distortion observable $\Delta_{R X}$ and source product state $\rho^{\otimes n}$ as
\begin{align}   \label{eq:average-distortion-definition}
     d_n(\rho^{\otimes n}, \~D_n \circ \~E_n) &= \Tr{\Delta^{(n)} (\mathrm{id}^{R^n} \otimes \~D_n \circ \~E_n) (\Psi_\rho^{R A})^{\otimes n}}.
\end{align}
The goal is to prepare the destination quantum ensemble on Bob's side while maintaining the distortion limit from the input reference state.

\subsection{Main Results: Achievable Rate Region for Discrete States}\label{sec:theorem_statement}

We use the following definition of achievability for the discrete quantum systems.
\begin{definition} (\textbf{Achievable pair})\label{def:achievable_pair}
A desired PMF $Q_X$ on the output space $\~X$ and a maximum tolerable distortion level $D$ are given. A rate pair $(R,R_c)$ is said to be achievable if 
for any $\epsilon>0$, and 
 all sufficiently large $n$, there exists an  $(n, R, R_c)$ coding scheme  comprising of a measurement encoder $\~E_n$ and the decoder $\~D_n$ that satisfy:
\begin{align}
    X^n \sim  Q_X^n,   \quad                      
     d_n(\rho^{\otimes n}, \~D_n \circ \~E_n) \leq D+\epsilon.
     \label{eq:definition_achievability_distortion}
\end{align}
\end{definition}
The expression \eqref{eq:definition_achievability_distortion}
indicates that the output sequence must be IID with fixed distribution $Q_X$ and that the n-letter distortion between the input and output state must be asymptotically less than a threshold $D$. We further define the achievable rate region as follows.
\begin{definition}(\textbf{Achievable Rate Region})\label{def:achievable_rateregion}
    Having the desired output PMF $Q_X$, the input state $\rho$ and a distortion threshold $D$, the achievable rate region $\~R(D,\rho || Q_X)$ is defined as the closure of the set of all achievable rate pairs with respect to the given $\rho$, $Q_X$ and $D$. The output-constrained rate-distortion function is defined for any specific value of $R_c$ as $R(D;R_c,\rho||Q_X) \equiv \inf \left\{R: (R,R_c)\in \~R(D,
     \rho||Q_X) \right\}$.
 The inverse of the above function which for any fixed $R_c$, is a mapping from the communication rates to their corresponding minimum transportation cost, is called the RLOT Cost function and expressed by $D(R; R_c, \rho||Q_X)$.  
\end{definition}

Based on the above definitions, we establish the main theorem 
which provides the single-letter characterization of the achievable rate region 
as follows:
\begin{definition}\label{def:maintheorem-definition}
    Given the distortion threshold $D$, the output PMF $Q_X$ and having a product input state $\rho^{\otimes n}$, define $\~R_I(D,\rho\| Q_X)$ as the closure of the set of all rate pairs $(R,R_c)$, for which 
    there exists an intermediate state $W$ with a corresponding measurement POVM $\Lambda \equiv \{\Lambda_w\}_{w\in \~W}$ and randomized post-processing transformation $P_{X|W}$  satisfying
    \begin{align} \label{eq:info_inequality1}
        R &\geq I(R;W)_\tau, \\
        R+R_c &\geq I(W;X)_\tau,\label{eq:info_inequality2}
    \end{align}
    where $W$, with a Hilbert space $\mathcal{H}_W$  along with an orthonormal basis indexed by a finite set $\mathcal{W}$,  constructs a quantum Markov chain $R-W-X$ with the overall post-measured composite state
  \begin{align*}
      \tau^{RWX} = \sum_{w,x} P_{X|W}(x|w) \left(\sqrt{\rho} \Lambda_w \sqrt{\rho} \right)^R \otimes \ketbra{w}{w}^W \otimes \ketbra{x}{x}^X,  
  \end{align*}
  from the set 
		\begin{align} \label{eq:set_feasible_theorem}
		    \~S(D) = \left\{ \tau^{RWX} \middle\vert 
		    \begin{array}{l}
                    \sum_w P_{X|W}(x|w) \Tr{ \Lambda_w \rho}  =  Q_X(x)  \quad \text{for }x\in \~X\\
                    \Tr{\Delta_{RX} \tau^{RX}} \leq D\\
                     |\~W| \leq  (\dim \~H_A)^2 +|\~X|+ 1  
            \end{array}
            \right\}.
		\end{align}
\end{definition}
\begin{theorem}\label{th:maintheorem}
 A single-letter characterization of the achievable rate region is given by
$\~R(D,\rho\| Q_X)=\~R_I(D,\rho\| Q_X)$.
\end{theorem}

\begin{definition}
  Given an input state $\rho$ and an output classical distribution $Q_X$, and a distortion observable $\Delta_{RX}$, we separately define the information-constrained optimal transport by, 
  \begin{align*}
      D_\text{Info}(R; \rho || Q_X) :=\min_{\Lambda_X} &\Tr{\Delta_{RX} \tau^{RX}},\\
      \text{subject to: }& I(R;X)_\tau \leq R,\quad \Trxx{R}{\tau^{RX}} = \rho,\quad  X \sim Q_X.
  \end{align*}
\end{definition}

\begin{remark}
    A consequence of the above theorem is that in the case of unlimited common randomness, the RLOT cost function is equal to the information-constrained optimal transport
        $D(R;\infty, \rho || Q_X)  = D_\text{Info}(R; \rho || Q_X)$.
\end{remark}

\begin{remark}
   With tensor-product source state $\rho^{\otimes n}$ and IID output distribution $Q_X^n$,  and the cost function
    \eqref{eq:average-singleletter-representation-distortion} of any quantum-classical state $\tau^{R^n X^n}$, 
        the information-constrained optimal transportation cost is tensorizable in the following sense,
        $D_\text{Info}(nR; \rho^{\otimes n} || Q^n_X) =  D_\text{Info}(R; \rho || Q_X)$. This result is also applicable to the CV quantum systems considered in Section \ref{sec:continuous-systems}.
\end{remark}
\subsection{Proof of Achievability for Finite Systems} \label{sec:achievability}
In this section, we prove the achievability of  Theorem \ref{th:maintheorem}, given that the pair $(Q_X, D)$ is provided by the setting of the theorem.

\subsubsection{Codebook and Encoder Construction}
In this section, by following the random codebook construction of \cite{winter2004extrinsic}, we generate a codebook in the intermediate space $\~W$ from the probability $P_W(w ) = \Tr{\Lambda_w \rho}$, which is derived from applying the measurement POVM $\Lambda \equiv \{\Lambda_w\}_{w \in \~W}$ to source state $\rho$.
Then a sequence of $n$ independent outcomes $W$ has the IID distribution  $P^n_W(w^n) $. The pruned distribution  is then defined by only selecting $w^n$ from the typical set of $W$,
 \begin{align*}
     P_{W'^n}(w^n) = \begin{cases}
                        P_{W}^n(w^n) /P_\~T    &\text{if } w^n \in \~T_W^{n,\delta}\\
                        0 &\text{otherwise}
                    \end{cases},
 \end{align*}
 where $P_\~T := \Pr{W^n \in \~T_W^{n,\delta}} \geq 1 - \epsilon$ and $\epsilon$ is a presumed fixed parameter.
  Consequently, a total of $|\~M|\times |\~L|$ random codewords $w^n$ are generated from the pruned distribution $P_{W'^n}$ and indexed with $(m,l)$ pair, comprising a random codebook. We then repeat this process to generate $|\~K|$ codebook realizations. The codewords in each codebook are indexed as $W^n(m,k,l)$. 
 The random variable $K$ is introduced as additional randomness for analytical purposes which will be de-randomized at the end. 
 
 Also, for each $ w^n$ sequence, the following set of typically projected PMR operators is defined
 \begin{align*}
    \hat\rho'_{w^n} := \Pi_{\rho, \delta}^n \;\Pi_{\hat\rho_{w^n},\delta}^n \;\hat\rho_{w^n} \;\Pi_{\hat\rho_{w^n},\delta}^n \;\Pi_{\rho, \delta}^n,
\end{align*}
where $\Pi_{\rho, \delta}^n$ and $\Pi_{\hat\rho_{w^n},\delta}^n$ are the typical set and conditional typical set projectors respectively \cite{wilde2012information_theoretic_costs}, and where
 $  \hat \rho_w := \frac{1}{P_W(w)}\sqrt{\rho} \Lambda_w \sqrt{\rho}$, 
are the conditional PMR states given the outcome $w$.
We are also interested in the expectation of the above operators over the typical sequence $W^n \in \~T_W^{n,\delta}$, which is
\begin{align*}
    \hat \rho'^n &:= \Exx{W'^n}{\hat \rho'^n_{W'^n}} = \sum_{w^n \in \~T_W^{n,\delta}} P_{W'^n}(w^n) \hat \rho'^n_{w^n}.
\end{align*}

Further define  a cut-off projector $\Pi$, which projects to the subspace spanned by the eigenstates of $\hat\rho'^n$ with eigenvalues larger than $\epsilon \alpha$, where $\alpha:= 2^{-n[H(R) + \delta]}$. Then the cut-off version of the operators and the expected cut-off operator are given by
\begin{align}
    \hat\rho''_{w^n} := \Pi \;\hat\rho'_{w^n} \;
    \Pi,  \quad
    \hat \rho''^n := \Pi\;    \hat \rho'^n    \;\Pi. \label{eq:cutoff_expected_operator}
\end{align}
\ifmycomments
{\color{blue}
The reason why the $\Pi$ operator is defined is that unlike the classical typical set for which the probability of any sequence in the typical set is lowerbounded by $2^{-n [H(X)+ \delta]}$, in the quantum case we have
\begin{align}
 \Pi_{\rho,\delta}^n \rho^{\otimes n}  \Pi_{\rho,\delta}^n \geq 2^{-n[H(X)+ \delta]}     \Pi_{\rho,\delta}^n.
\end{align}
This does not imply a lower-bound on the probability of each individual typical sequence, therefore, there still exists a need for a projector to cut off the low-probability typical sequences, as would be needed later in the Chernoff bound conditions \cite{wilde2012information_theoretic_costs}. 
}
\fi
 
 Consequently, similar to \cite{wilde2012information_theoretic_costs}, for each $(k,m)$ we define the POVM operators
\begin{align}
    \Upsilon^{(k,m)}_l := \frac{1-\epsilon}{1+\eta} \frac{1}{|\~L|} (\rho^{\otimes n})^{-1/2}    \hat\rho''_{W^n(m,k,l)}     (\rho^{\otimes n})^{-1/2},
\end{align}
with  $\eta \in (0,1)$ a parameter which can be determined later. One can alternatively define the POVM operators such that it directly outputs $W^n$:
\begin{align}
    \Gamma_{w^n}^{(m,k)} &:= \sum_{l} \mathds{1}\{W^n(m,k,l) = w^n\} \Upsilon_l^{(k,m)}
     = \gamma_{w^n}^{(m,k)} \;(\rho^{\otimes n})^{-1/2} \;{\hat\rho''_{w^n}}\; (\rho^{\otimes n})^{-1/2}, \label{eq:Gamma_POVM}
\end{align}
where
\begin{align} \label{eq:gamma_coefficient}
   \gamma_{w^n}^{(k,m)} := \frac{1}{|\~L|}\sum_{l=1}^{|\~L|} \frac{1-\epsilon}{1+\eta} \mathds{1}\left\{  W^n(m,k,l) = w^n  \right\}.
\end{align}
  By applying the operator Chernoff bound according to \cite{wilde2012information_theoretic_costs}, we claim that for $|\~L|  \geq 2^{n(I(R;W)+ 3\delta)}$,  the intersection of the following events $E_{m,k}$ happen with probability close to 1 for all  $\eta\in (0,1)$:
\begin{align} \label{eq:Em-event-chernoff}
    E_{m,k}: \quad \frac{1}{|\~L|} \sum_l \hat\rho''_{W^n(m,k,l)} \in [(1 \pm \eta) \hat\rho''^n] \quad \forall m \in \~M, \; k \in \~K.
\end{align}
\ifmycomments
{\color{blue}
Then by using the  following expressions we have for each $ m \in \~M$ \cite{wilde2012information_theoretic_costs}:
\begin{align}
    \sqrt{(\rho^{\otimes n})}\sum_{w^n\in \~T^{n,\delta}_W} \Gamma_{w^n}^{(m)}\sqrt{(\rho^{\otimes n})} &= \sum_{w^n\in \~T^{n,\delta}_W}\gamma_{w^n}^{(m)} \;{\hat\rho''_{w^n}}\\
    &= \frac{1-\epsilon}{1+\eta}    \frac{1}{|\~L|}    \sum_{w^n\in \~T^{n,\delta}_W}\sum_{l=1}^{|\~L|}  \mathds{1}\left\{  W^n(m,l) = w^n  \right\} \;{\hat\rho''_{w^n}}\\
    &= \frac{1-\epsilon}{1+\eta}    \frac{1}{|\~L|}    \sum_l \hat\rho''_{w^n(m,k,l)}\\
    &\leq  \frac{1-\epsilon}{1+\eta} (1 + \eta) \hat\rho''^n =  (1-\epsilon)  \hat\rho''^n
\end{align}
where the inequality appeals to Chernoff bound \eqref{eq:Em-event-chernoff}. Then substituting the expression of the expected cut-off state, we continue:
\begin{align}
    (1-\epsilon)  \hat\rho''^n &= (1-\epsilon) \sum_{w^n \in \~T_W^{n,\delta}} P_{W'^n}(w^n) \Pi\;    \Pi_{\rho, \delta}^n \;\Pi_{\hat\rho_{w^n},\delta}^n \;\hat\rho_{w^n} \;\Pi_{\hat\rho_{w^n},\delta}^n \;\Pi_{\rho, \delta}^n   \;\Pi\\
    &\leq  (1-\epsilon) \sum_{w^n \in \~T_W^{n,\delta}} P_{W'^n}(w^n) \Pi\;    \Pi_{\rho, \delta}^n \;\hat\rho_{w^n} \;\Pi_{\rho, \delta}^n   \;\Pi\\
    &=  (1-\epsilon) \sum_{w^n \in \~T_W^{n,\delta}} \frac{P_{W^n}(w^n)}{1-\epsilon} \Pi\;    \Pi_{\rho, \delta}^n \;\hat\rho_{w^n} \;\Pi_{\rho, \delta}^n   \;\Pi\\
    &\leq  \sum_{w^n \in \~W^{n}} P_{W^n}(w^n) \Pi\;    \Pi_{\rho, \delta}^n \;\hat\rho_{w^n} \;\Pi_{\rho, \delta}^n   \;\Pi\\
    &=   \Pi\;    \Pi_{\rho, \delta}^n \rho^{\otimes n} \;\Pi_{\rho, \delta}^n   \;\Pi\\
    &\leq \rho^{\otimes n} =\omega.
\end{align}
The first inequality follows from the fact that projection reduces the space, i.e.,
\begin{align}
\Pi_{\hat\rho_{w^n},\delta}^n \;\hat\rho_{w^n} \;\Pi_{\hat\rho_{w^n},\delta}^n \leq \hat \rho_{w^n}.
\end{align} 
}
\fi
Then, for each $m\in \~M$ and $k \in \~K$, the set of measurement POVMs $\{\Gamma_{w^n}^{(m,k)}\}_{w^n \in \~T^{n,\delta}_W}$ forms a sub-POVM, i.e.,
\begin{align*}
    \sum_{w^n\in \~T^{n,\delta}_W} \Gamma_{w^n}^{(m,k)} \leq I \quad \forall m\in \~M, \; k \in \~K.
\end{align*}
We further complete this sub-POVM by appending an extra operator 
\begin{align*}
 \Gamma_{w_0^n}^{(m,k)}:= I - \sum_{w^n} \Gamma_{w^n}^{(m,k)} .
\end{align*}
The new set of operators $\left\{\{\Gamma_{w^n}^{(m,k)}\}_{w^n \in \~T^{n,\delta}_W}\ , \ \Gamma_{w_0^n}^{(m,k)} \right\}$  is a valid POVM. The intermediate POVM is established, by picking one of these  $ |\~M|$ POVMs according to the uniformly distributed common randomness,
 \begin{align*}
     {\Tilde\Lambda}_{w^n}^{{(k)}^A} := \frac{1}{|\~M|}\sum_{m=1}^{|\~M|} \Gamma_{w^n}^{(m,k)}, \qquad \forall k\in\~K.
 \end{align*}
 
The decoder applies the  $P_{X|W}$ classical memoryless channel to each element of $w^n$ sequence. By combining the encoder and decoder, the overall encoder-decoder POVM is obtained by
 \begin{align}    \label{eq:fullencoderdecoder}
     \tilde{\tilde{\Lambda}}^{(k)}_{x^n} \equiv \sum_{w^n\in \~W^n} P^n_{X|W}(x^n|w^n) {\tilde\Lambda}_{w^n}^{{(k)}^A}, \quad \forall x^n \in \~X^n,
 \end{align}
for all $k \in \mathcal{K}$. It should be noted that the above memoryless decoder is not the final form. We will later modify this decoder to yield a non-product batch decoder in section \ref{sec:exactly_iid}. This is particularly important to ensure a perfect IID output distribution.
  Using the above POVMs one can write the induced composite state of the reference and output for each random codebook realization $k\in \~K$ as
 \begin{align}\label{eq:nletterinducedstate}
     \tau_{\text{ind},k}^{R^n X^n} = \sum_{x^n} \Tr_{A^n}\left\{     ({(\text{id}^R)}^{\otimes n} \otimes \tilde{\tilde{\Lambda}}^{(k)}_{x^n})(\Psi_\rho^{R A})^{\otimes n}      \right\} \otimes \ketbra{x^n}{x^n}.
 \end{align}

\subsubsection{Proof of Near IID Output Distribution} \label{subsec:proof_of_near_iid}
It turns out that the proof of near IID output distribution does not depend on the codebook index $k \in \~K$. Therefore, we hereby remove the index $k$ from all expressions of this subsection, which means the following formulations apply to any fixed $k\in \~K$.
By tracing over the reference state in \eqref{eq:nletterinducedstate} we write the output state
\begin{align}\label{eq:sigma_induced_outputstate}
     \sigma_{\text{ind}}^{X^n} = \sum_{x^n} \Tr\left\{     ({(\text{id}^R)}^{\otimes n}\otimes \tilde{\tilde{\Lambda}}_{x^n})(\Psi_\rho^{R A})^{\otimes n}      \right\}  \ketbra{x^n}{x^n}.
 \end{align}
 Also, from  the conditions of the $\~S(D) $ feasible set \eqref{eq:set_feasible_theorem}, the output desired tensor state has the form,
 \begin{align}\label{eq:sigma_desired_outputstate}
     \big(\sigma_\text{des}^X\big)^{\otimes n} =  \sum_{x^n} Q^n_X(x^n) \ketbra{x^n}{x^n} =  \sum_{x^n} \left[\prod_{i=1}^n \sum_{w \in \~W} P_{X|W}(x_i|w_i) \Tr{ \Lambda_{w_i} \rho}\right]\ketbra{x^n}{x^n} .
 \end{align}
 
 Consequently, the trace distance between the induced output state and the desired product state is,
 \begin{align*}    
 &\norm{\big(\sigma_\text{des}^X\big)^{\otimes n} - \sigma_{\text{ind}}^{X^n}  }_1 \\
    &= \sum_{x^n} \abs{
            \sum_{w^n} P^n_{X|W}(x^n|w^n) P_W^n(w^n)  -  
            \frac{1}{|\~M|} \sum_{w^n \in \~W^n}   P^n_{X|W}(x^n|w^n) 
            \Tr\Bigg\{    
                \bigg(\sum_{m=1}^{|\~M|} \Gamma_{w^n}^{(m)}\bigg)  (\rho^{\otimes n})
            \Bigg\}
        } \leq S_1 + S_2,
 \end{align*}
 where we split and bound the above term by separating the extra operator from the rest of the POVM:
 \begin{align}
     S_1     &\triangleq \sum_{x^n} \abs{
            \sum_{w^n} P^n_{X|W}(x^n|w^n) P_W^n(w^n)  -  
            \sum_{w^n\neq w_0^n}   P^n_{X|W}(x^n|w^n) \Tr{    \left(\frac{1}{|\~M|}\sum_{m=1} \Gamma_{w^n}^{(m)}\right)  (\rho^{\otimes n})}
        }, \label{eq:S1_def}\\
     S_2  &\triangleq \sum_{x^n} \abs{
            P^n_{X|W}(x^n|w_0^n)  \Tr \Bigg\{
                \frac{1}{|\~M|}\sum_{m=1} \Bigg(
                    I - \sum_{w^n \neq w_0^n} \Gamma_{w^n}^{(m)}
                \Bigg) 
                (\rho^{\otimes n})
            \Bigg\}
        }.\label{eq:S2_def}
 \end{align}
 We further simplify  $S_1$ by substituting \eqref{eq:Gamma_POVM} into \eqref{eq:S1_def} and bound it again by $S_1 \leq S_{11}+ S_{12}$ by adding and subtracting a proper term and using triangle inequality:
\begin{align}
    S_{11}  &\triangleq \sum_{x^n} \abs{
            \sum_{w^n} P^n_{X|W}(x^n|w^n) P_W^n(w^n)  -  
            \frac{1}{|\~M||\~L|} \frac{1-\epsilon}{1+\eta} \sum_{m,l} P^n_{X|W}(x^n|W^n(l,m)) },\\
    S_{12}  &\triangleq \frac{1}{|\~M||\~L|} \frac{1-\epsilon}{1+\eta}\sum_{x^n} \abs{
             \sum_{m,l} P^n_{X|W}(x^n|W^n(l,m))  \left(1 -  
              \Tr{\hat \rho''_{W^n(l,m)}}\right)
        } \nonumber\\
        &=\frac{1}{|\~M||\~L|} \frac{1-\epsilon}{1+\eta}
             \sum_{m,l}  \left(1 -  
              \Tr{\hat \rho''_{W^n(l,m)}}\right).           
\end{align}
For $S_{11}$, using the classical soft-covering lemma   \cite[Lemma 2]{cuff2013distributed} with the condition that $R+R_c > I(X;W)$, one can provide a decaying upper bound for its expectation as
\begin{align}\label{eq:softcoveringS11}
 \Ex{S_{11}}\leq \frac{3}{2}\exp{-t n},   
\end{align}
 for some $t>0$. Also, by taking the expectation of $S_{12}$  we have
\begin{align} \label{eq:ES12}
    \Ex{S_{12}} = \frac{1-\epsilon}{1+\eta} (1 - \Tr{ \hat \rho''^n})
              &\leq \frac{1-\epsilon}{1+\eta} (2\epsilon + 2\sqrt{\epsilon} ) \triangleq \epsilon_2,  
\end{align}
where the equality follows from \eqref{eq:cutoff_expected_operator} and the inequality appeals to the properties of the typical set and the Gentle Measurement Lemma \cite{wilde2012information_theoretic_costs,winter_1999_codingtheorem_quantumchannels,hiroshi_2007_makinggoodcodes}.
\ifmycomments
{\color{blue}
The second equality is from \eqref{eq:cutoff_expected_operator} and the inequality is from the following lower bound on the trace:
\begin{align}
    \Tr{\hat\rho''^n} &\geq  \Tr{\hat\rho'^n} - \norm{\hat\rho''^n - \hat\rho'^n}_1\\
    &=  \Tr{\hat\rho'^n} - \sum_{i }  \Big(\theta_{\hat \rho'^n,i} - \theta_{\hat \rho''^n,i}\Big) \\
     &\geq  \Tr{\hat\rho'^n} - \alpha\epsilon .\text{rank}(\hat\rho'^n) \\
     &\geq \Tr{\hat\rho'^n} - \alpha \epsilon. \,\alpha^{-1} \\
     &\geq 1 - \epsilon - 2\sqrt{\epsilon} - \epsilon = 1 - 2\epsilon - 2\sqrt{\epsilon}
\end{align}
where  $\theta_{\hat \rho'^n, i}$ and $\theta_{\hat \rho''^n, i}$  are the 
 $i$-th eigenvalues of the operators $\hat \rho'^n$ and $\hat \rho''^n$ respectively. 
 Also, the last inequality follows from the following bound on the trace of $\hat \rho'^n$ expected operator  \cite{wilde2012information_theoretic_costs}:
 \begin{align}
     \Tr{\hat \rho'^n} &= \sum_{w^n \in \~T^{n,\delta}_W} P_{W'^n}(w^n)\Tr{\hat \rho'^n_{w^n}}\\
     &= \sum_{w^n \in \~T^{n,\delta}_W} P_{W'^n}(w^n)\Tr{\Pi_{\rho, \delta}^n \;\Pi_{\hat\rho_{w^n},\delta}^n \;\hat\rho_{w^n} \;\Pi_{\hat\rho_{w^n},\delta}^n}\\
     &\geq \sum_{w^n \in \~T^{n,\delta}_W} P_{W'^n}(w^n)\bigg(\Tr{\Pi_{\rho,\delta}^n \hat \rho_{w^n} } - \norm{\hat \rho_{w^n} - \Pi_{\hat\rho_{w^n},\delta}^n \;\hat\rho_{w^n} \;\Pi_{\hat\rho_{w^n},\delta}^n}_1\bigg)\\
     &\geq 1 - \epsilon - 2 \sqrt{\epsilon} 
 \end{align}
 where the last inequality follows from the properties of the typical set and the Gentle Measurement Lemma \cite{wilde2012information_theoretic_costs,winter_1999_codingtheorem_quantumchannels,hiroshi_2007_makinggoodcodes}.
 }
 \fi
Next, we bound and simplify the expectation of $S_2$ by substituting \eqref{eq:Gamma_POVM} into \eqref{eq:S2_def}:
\begin{align}
       \Ex{S_2} &\leq  
       \Ex{
            \frac{1}{|\~M|} \sum_{m} \sum_{x^n } P^n_{X|W}(x^n|w_0^n) 
            \abs{ 
                \Tr\Bigg\{
                    (\rho^{\otimes n}) - \sum_{w^n \neq w_0^n}\gamma_{w^n}^{(m)} \hat\rho''_{w^n} 
                \Bigg\} 
            }
        } \nonumber \\
        &\stackrel{a}{=}  1 - \frac{1}{|\~M|} \sum_{m}   \Tr \Bigg\{\sum_{w^n \neq w_0^n}\Ex{\gamma_{w^n}^{(m)}} \hat\rho''_{w^n} \Bigg\} 
        \stackrel{b}{\leq} 1- \frac{1-\epsilon}{1+\eta} (1 - 2\epsilon - 2\sqrt{\epsilon} )
        = \frac{\eta + \epsilon}{1 + \eta} + \epsilon_2 \triangleq \epsilon_3 ,         \label{eq:expected_inequality_S2}
\end{align}
where in (a) we remove the absolute sign because the trace is always less than or equal to one, and (b) uses the result from \cite{wilde2012information_theoretic_costs}. Hence, combining \eqref{eq:expected_inequality_S2}, \eqref{eq:ES12} and \eqref{eq:softcoveringS11} we show that the expected distance between the output state induced by the random codebook and the product single-letter state is arbitrarily small for sufficiently large n:
\begin{align} \label{eq:outputstate-expected-inequality}
    \Ex{\norm{(\sigma_{\text{des}}^X)^{\otimes n} - \sigma_{\text{ind}}^{X^n}}_1} \leq \epsilon_2 + \epsilon_3 + \frac{3}{2}\exp{-tn} \triangleq \epsilon_{os}.
\end{align}
\ifmycomments
\myqu{right?}\myans{Yes, the answer is provided in video meeting 10.} And the last line is similar to \eqref{eq:ES12}. 

\mynote{we did not need to use the gentle measurement for the ensembles in the above steps. The lower bound on the trace of the ensemble state which was defined in Eq. (28) of Wilde was enough.}
\fi

\subsubsection{Proof of Distortion Constraint}
The average distortion for a codebook $k\in \~K$ is given by
\begin{align}
   {d_n}^{\{k\}} (\rho^{\otimes n}&, \~D_n \circ \~E_n) = \Tr{\Delta^{(n)} \tau_{\text{ind},k}^{R^n X^n}} \nonumber\\
   &= \Tr{\Delta^{(n)} \bigg(     \sum_{x^n} \Tr_{A^n}\left\{     (\text{id}^{\otimes n}_R\otimes \tilde{\tilde{\Lambda}}^{(k)}_{x^n})(\Psi_\rho^{R A})^{\otimes n}      \right\} \otimes \ketbra{x^n}{x^n}      \bigg)} \nonumber\\
   &= \Tr \bigg\{\Delta^{(n)} \bigg(       \sum_{m,l} \frac{1}{|\~M|} \Tr_{A^n}\left\{     (\text{id}^{\otimes n}_R\otimes \Upsilon^{(k,m)}_{l})(\Psi_\rho^{R A})^{\otimes n}      \right\} \otimes  \sigma_{W^n(m,k,l)} \bigg)\bigg\}.
\end{align}
where 
    $\sigma_{w^n} = \sum_{x^n} P_{X|W}^n (x^n| w^n) \ketbra{x^n}{x^n}$
is the classical decoder channel.
Recall from Section \ref{subsec:proof_of_near_iid} that in order to have a faithful near IID output state, we need to satisfy the conditions of soft-covering lemma $|\~M||\~L| > 2^{n I(X;W)}$, which is needed for \eqref{eq:softcoveringS11}. On the other hand, according to the non-feedback measurement compression theorem \cite{wilde2012information_theoretic_costs}, we need a sum rate of at least $I(XR;W)$ to have a faithful measurement simulation. Thus, by setting $|\~K|> 2^{n(I(X R;W) - I(X;W))}$, we define an inter-codebook average state
\begin{align}
    \tau_{\text{avg}}^n \equiv \sum_{k,m,l} \frac{1}{|\~K||\~M|} \Tr_{A^n}\left\{\left(\text{id}_R \otimes \Upsilon_l^{(k,m)}\right)(\Psi^{RA}_\rho)^{\otimes n}\right\} 
            \otimes \sigma_{W^n(m,k,l)}.
\end{align}
Consequently, according to non-feedback measurement compression theorem \cite{wilde2012information_theoretic_costs}, this  inter-codebook average state is a faithful simulation of the ideal product measurement system; i.e., for any $\epsilon_{mc}>0$ and for all sufficiently large $n$, 
\begin{align} \label{eq:faithful_simulation_ineq}
    \Exx{c}{\norm{\tau^{\otimes n} - \tau_{\text{avg}}^n}_1} \leq \epsilon_{mc},
\end{align}
where the expectation is over all codebook realizations.
\ifmycomments
\myqu{In general case, the term $\epsilon_{mc}$ is a fixed parameter which is not necessarily related to $\epsilon$. But by looking carefully through the proof of measurement compression theorem, we realize that the $\epsilon_{mc}$ is defined the same way as $\epsilon$ is defined in our system which is the probability of not being in the typical set. So I believe in our specific case $\epsilon_{mc} = \epsilon$. Prove me wrong...}

{ \color{blue} This part was omitted. At first, I followed the notation of Wilde in which he bounds the probability. But in Proof of Prof. Pradhan they used expectation so I changed to that. But here is the previous version for reference:

we can claim that the source coding scheme in the above expression faithfully simulates the product reference and output composite state \eqref{eq:nletterproductstate} with probability close to one:
\begin{align} \label{eq:faithful_joint_state}
    \Pr{\norm{\tau^{\otimes n} - \tau_{\text{avg}}^n}_1 \leq \epsilon} \geq 1 - \delta(\epsilon)
\end{align}
or in the corrected proof of professor pradhan, the expectation of the term was bounded as follows:
\begin{align}
    \Exx{c}{\norm{\tau^{\otimes n} - \tau_{\text{avg}}^n}_1} \leq \epsilon
\end{align}}
\mynote{ if we where to use the probability bound, we have to transform it into bound on expectation first. That is done by showing that the $l_1$ norm is bounded so that it will not go to infinity. And for that we can use the fact that $\tau$ are distributions which means they will be bounded.}
\fi
Then, we  bound the expected average distortion as follows:
\begin{align}
      &\Exx{K}{{d_n}^{\{K\}}(\rho^{\otimes n}, \~D_n \circ \~E_n)} = \frac{1}{|\~K|}\sum_k {d_n}^{\{k\}}(\rho^{\otimes n}, \~D_n \circ \~E_n) \nonumber\\
    &= \Tr\Bigg\{
            \Delta^{(n)} \Bigg(
                \sum_{k,m,l} \frac{1}{|\~K||\~M|}
                \Tr_{A^n}\left\{\left(\text{id}^{\otimes n}_R \otimes \Upsilon_l^{(k,m)}\right)(\Psi^{\otimes n}_\rho)^{R A}\right\} \otimes \sigma_{W^n(m,k,l)}
            \Bigg)
    \Bigg\}\nonumber\\
    & = d_{\max}\Tr{\frac{\Delta^{(n)}}{d_{\max}} (\tau_{\text{avg}}^{(n)} - \tau^{\otimes n})}  + \Tr{\Delta^{(n)}\tau^{\otimes n} }\nonumber\\
    & \leq d_{\max}\norm{ \tau_{\text{avg}}^{(n)} - \tau^{\otimes n}}_1  + \Tr{\Delta^{(n)}\tau^{\otimes n} } \nonumber
     \leq d_{\max}\norm{ \tau_{\text{avg}}^{(n)} - \tau^{\otimes n}}_1  + D, \label{eq:expected_average_distortion_inequality}
\end{align}
where $d_{\max}$ is the largest eigenvalue of the distortion observable. The first inequality holds by definition of the trace distance and the fact that $ 0 \leq \frac{\Delta^{(n)}}{d_{\max}} \leq I$. The second inequality holds because the average distortion of $n$ identical copies of the single-letter system is the same as single-letter distortion. Next, we take the expectation of  both sides with respect to all possible codebook realizations. Thus, for all sufficiently large $n$,
\begin{align*}
    \Exx{c}{\Exx{K}{{d_n}^{\{k\}}(\rho^{\otimes n}, \~D_n \circ \~E_n)}} &\leq d_{\max}\Exx{c}{\norm{ \tau_{\text{avg}}^{(n)} - \tau^{\otimes n}}_1}  + D
     \leq d_{\max}\epsilon_{mc} + D,
\end{align*}
where for the first inequality we take the expectation of \eqref{eq:expected_average_distortion_inequality} and the second inequality follows from \eqref{eq:faithful_simulation_ineq}. Further,  the LHS of above inequality can be rewritten as follows by changing the order of expectations,
\begin{align}
\begin{split}
    \Exx{c}{\Exx{K}{{d_n}^{\{K\}}(\rho^{\otimes n}, \~D_n \circ \~E_n)}} &= \Exx{K}{\Exx{c}{ {d_n}^{\{K\}}(\rho^{\otimes n}, \~D_n \circ \~E_n)}}
    = \Exx{c}{{d_n}^{\{k\}}(\rho^{\otimes n}, \~D_n \circ \~E_n)},
    \end{split}
\end{align}
where the second equality holds for any codebook $k\in\~K$ and follows because the expectation of the distance measure over all codebooks is independent of $K$.
Then it is proved that the expected average distortion for any codebook $k\in \~K$ is asymptotically bounded by $D$:
\begin{align}\label{eq:expected_distortion_bound}
    \Exx{c}{{d_n}^{\{k\}}(\rho^{\otimes n}, \~D_n \circ \~E_n)} \leq D + d_{\max}\epsilon_{mc}.
\end{align}

\subsubsection{Intersection of the Constraints}\label{sec:intersection-constraints}
 In this section, we show that the previous bounds on the expected codebook realizations have an intersection with nonzero probability. i.e., there exists a  codebook realization that can realize all events together. The following four cases are the required events in the achievability proof which were proved to hold for the expected codebook realizations. Here by applying the union bound, we ensure that there exists at least one codebook realization that satisfies all the constraints.
 
\ifmycomments
 \mynote{By definition, $P(x^n \notin \~T_\delta^n(X)) >1 - \epsilon$. So the way they did it in Cover's book is that they fix $\delta$ and then $\epsilon$ is also fixed. So for every $\delta$ there exists a small $\epsilon$.  The proof is from the weak law of large numbers. So $\epsilon $ is a fixed number when $n\to \infty$.
\\
 It is also possible to bound with an exponentially decaying function of $n$, instead of fixed epsilon. In that case, we use the Chernoff bound instead of the weak law of large numbers. But that is not needed in this case if we define the parameters correctly.}
\fi

\begin{enumerate}
     
    \item It is shown that the $\Gamma_{w^n}^{(m,k)}$ form valid sub-POVM for all $m\in \~M$ and $k \in \~K$. This is considered as event $E_1$. Using the Chernoff bound, \cite{wilde2012information_theoretic_costs} if $R > I(R;W)_\sigma$ then for some $c>0$, we have 
    \begin{align}
     \Pr{\neg E_1} \leq c \exp{ - 2^{n\delta} \epsilon^3  } .
    \end{align}
    \item Define $E_2$ as the event  $\{S_{11} \leq \exp{-\nu n}\}$ for some $\nu>0 $. Then by applying Markov inequality to expression \eqref{eq:softcoveringS11}, we find the bound
    \begin{align}
     \Pr{\neg E_2} = \Pr{ S_{11} \geq \exp{-\nu n}} \leq \frac{3}{2}\frac{\exp{-tn}}{\exp{-\nu n}}.
    \end{align}

    \item For the bounds on the expectations of $S_{12}$ and $S_{2}$, we let $E_{31}$ and $E_{32}$ to be the corresponding events for these random variables. Then applying the Markov inequality to these inequalities \eqref{eq:ES12}, \eqref{eq:expected_inequality_S2}, we have the following bound for a fixed value $\delta_3>0$:
  \begin{align}
      \Pr{\neg E_{31}\}=\Pr\{ S_2 \geq 2\delta_3} \leq \frac{\Ex{S_2}}{\delta_3} \leq \frac{\epsilon_3}{2\delta_3},\\
      \Pr{\neg E_{32}\}= \Pr\{S_{12} \geq 2\delta_3} \leq \frac{\Ex{S_{12}}}{\delta_3} \leq \frac{\epsilon_3}{2\delta_3}.
  \end{align}
    Note that we used the fact that $\epsilon_2 \leq \epsilon_3$.

    \item Define $E_4$ as the event when the average n-letter distortion constraint is satisfied. By applying Markov inequality to  \eqref{eq:expected_distortion_bound} we obtain for any fixed value $\delta_d>0$ that
    \begin{align}
       \Pr{ \neg E_4} = \Pr{  d(\rho^{\otimes n}, \~D_n \circ \~E_n) \geq D+  \delta_d } \leq \frac{\Exx{c}{ d(\rho^{\otimes n}, \~D_n \circ \~E_n) }}{D+  \delta_d} \leq \frac{D + d_\text{max} \epsilon_{mc}}{D +\delta_d }.
    \end{align}
\end{enumerate} 
    Then the probability of not being in the intersection is bounded by using the union bound
  \begin{align}
    \Pr{\neg{\bigcap_{i=1}^4 E_i}}  &\leq \sum_{i=1}^4 \Pr{\neg{E_i}}
    \leq 
    c \exp{ - 2^{n\delta} \epsilon^3  } 
    + \frac{3}{2}\frac{\exp{-kn}}{\exp{-\nu n}}
    +  \frac{\epsilon_3}{\delta_3}
    + \frac{D + d_\text{max} \epsilon_{mc}}{D +\delta_d }.
  \end{align}
With proper choice of $\delta$ and $\nu \in (0,t)$ ,  the first two terms of the RHS decay exponentially, while $\epsilon_{mc}, \delta_3, \delta_d$ are fixed. Then by a proper choice of the parameters $\epsilon_{mc}, \delta_3$ and $\delta_d$, we ensure that,
\begin{align}
\frac{\epsilon_3}{\delta_3} +  \frac{D+d_{\text{max}}\epsilon_{mc}}{D+\delta_d} <1.\label{eq:asymptoticintersectionprob}
\end{align}
This means there exists with nonzero probability, a valid quantum measurement coding scheme that satisfies all the above four conditions simultaneously.

\subsubsection{Exactly Satisfying IID Output Distribution} \label{sec:exactly_iid}

It remains to prove that the perfect IID output distribution can be achieved from the near-perfect one with an arbitrarily small increase in distortion level.
The desired perfect IID distribution and  the near-perfect induced output distribution for this source coding scheme are expressed by
\begin{align}
    P_{X^n}^{\text{des}}(x^n) &:= \braket{x^n}{(\sigma^{X}_\text{des})^{\otimes n}|x^n}= Q_X^n(x^n) 
    =\sum_{w^n} P_{X|W}^n (x^n|w^n) P_W^n(w^n),\\
    P_{X^n}^{\text{ind}}(x^n) &:= \braket{x^n}{\sigma^{X^n}_\text{ind}|x^n}  
    = \sum_{w^n \in W^n} P_{X|W}^n (x^n|w^n) \Tr{(\text{id}_R\otimes \tilde{\Lambda}^A_{w^n})(\Psi_{RA}^\rho)^{\otimes n}},
\end{align}
where, $(\sigma^{X}_\text{des})^{\otimes n}$ and $\sigma^{X^n}_\text{ind}$ are defined in \eqref{eq:sigma_desired_outputstate} and \eqref{eq:sigma_induced_outputstate}, respectively.
Using \cite[Theorem 1]{wagner_2022_ratedistortionperecptiontradeoff}, with a fixed measurement POVM and by just changing the IID post-processing channel $P_{X|W}$ to a batch decoder $\tilde P_{X^n|W^n}$ we show that one can satisfy the perfect IID condition from the near-perfect one. We define the batch decoder with the conditional probability of any event $A \subseteq \~X^n$ given a measurement outcome sequence $w^n$ as
\begin{align} \label{eq:newdecoderquantummeasurement}
    \tilde P_{\hat X^n|W^n}(A|w^n) = \sum_{x^n \in A \cap \~X^n_+} \theta_{x^n} P_{X|W}^n (x^n | w^n) + P_{X|W}^n (A \setminus \~X^n_+ | w^n)+ \phi_{w^n} Z(A),
\end{align}
where $\~X^n_+ := \left\{      x^n \in \~X^n \Big|  P_{X^n}^{\text{ind}}(x^n) >  P_{X^n}^{\text{des}}(x^n)    \right\}$,  and the given expressions are defined as follows:
\begin{align}
    \theta_{x^n} &:= \frac{ P_{X^n}^{\text{des}}(x^n)}{ P_{X^n}^{\text{ind}}(x^n)} \quad x^n \in \~X^n_+,  \qquad \qquad \quad \phi_{w^n} := \sum_{x^n \in \~X^n_+} ( 1 - \theta_{x^n}) P_{X|W}^n (x^n |w^n),\\
    Z(A) &:= \frac{ P_{X^n}^{\text{des}}(A\setminus \~X^n_+) -  P_{X^n}^{\text{ind}}(A \setminus \~X^n_+)}
    {d_{TV}( P_{X^n}^{\text{ind}},  P_{X^n}^{\text{des}})}.
\end{align}
The validity and admissibility of the new post-processing decoder can be verified with simple calculus, meaning that $ \tilde P_{\hat X^n|W^n}(\~X^n | w^n) = 1$ for all $w^n \in \~W^n$, and the new induced output distribution satisfies the desired IID condition:
\begin{align} 
    \tilde P_{\hat X^n}^{\text{ind}}(A) &:= \sum_{w^n} \tilde P_{\hat X^n|W^n}(A|w^n) \Tr{(\text{id}^R \otimes \tilde{\Lambda}^A_{w^n})(\Psi_{RA}^\rho)^{\otimes n}} 
    = P_{X^n}^{\text{des}}(A).
\end{align}
This way we characterize the $n$-letter decoder of the system for any event $A \in \~X^n$:
\begin{align}
    ~D_n(A | l,m) := \tilde P_{\hat X^n | W^n}(A | W^n(l,m)).
\end{align}
Also, the following set of equalities hold for any $w^n \in \~W^n$ by the definition of the batch decoder in \eqref{eq:newdecoderquantummeasurement},
\begin{align}
    &d_{TV} \Big(\tilde P_{\hat X^n|W^n}(.|w^n), P^n_{X|W}(.|w^n)\Big)\nonumber\\
    &= \frac{1}{2} \sum_{x^n \in \~X^n_+} \abs{\tilde P_{\hat X^n|W^n}(x^n|w^n) -  P^n_{X|W}(x^n|w^n)} + \frac{1}{2} \sum_{x^n \in \~X^n \setminus \~X^n_+} \abs{\tilde P_{\hat X^n|W^n}(x^n|w^n) -  P^n_{X|W}(x^n|w^n)}\nonumber\\
    &= \frac{1}{2} \sum_{x^n \in \~X^n_+} \abs{ (\theta_{x^n}-1) P^n_{X|W}(x^n|w^n) + \phi_{w^n} Z(x^n)}
    + \frac{1}{2}\sum_{x^n \in \~X^n \setminus \~X^n_+} \abs{\phi_{w^n} Z(x^n) }\nonumber\\
    &\stackrel{a}{=} \frac{1}{2} \sum_{x^n \in \~X^n_+} \abs{ (\theta_{x^n}-1) P^n_{X|W}(x^n|w^n) }
        + \frac{1}{2} \sum_{x^n \in \~X^n \setminus \~X^n_+} \abs{\phi_{w^n} \frac{ P_{X^n}^{\text{des}}(x^n) -  P_{X^n}^{\text{ind}}(x^n)}
        {d_{TV}( P_{X^n}^{\text{ind}},  P_{X^n}^{\text{des}})} } = \phi_{w^n},
\end{align}
where $(a)$ follows by the definition of $Z(A)$, and that $Z(x^n) = 0$ for all $x^n \in \~X^n_+$ and the last equality follows from the definition of total variation distance.
Thus, by definition of the total variation, there exists a coupling such that $\Pr(X^n \neq \hat X^n \Big| w^n) \leq \phi_{w^n}$ for all $w^n \in \~W^n$.
Then from the above inequality, using the argument in  \cite{wagner_2022_ratedistortionperecptiontradeoff}, the probability of outputs not being equal is bounded by 
\begin{align}
    \Pr(X^n \neq \hat X^n ) \leq d_{TV}( P_{X^n}^{\text{ind}},  P_{X^n}^{\text{des}})\leq \epsilon_{os},
\end{align}
and the second inequality appeals to \eqref{eq:outputstate-expected-inequality} and the union bound argument in section \ref{sec:intersection-constraints}.
\ifmycomments
{\color{blue}
\begin{align}
    \Pr(X^n \neq \hat X^n ) &= \sum_{w^n}\Pr(X^n \neq \hat X^n |w^n ) P_{W^n}(w^n)\\
    & \leq \sum_{w^n}\phi_{w^n}P_{W^n}(w^n)\\
    &= \sum_{w^n} \sum_{x^n \in \~X^n_+} \left( 1 - \frac{P_{X^n}^{\text{des}}(x^n)
}{P_{X^n}^{\text{ind}}(x^n)
}\right) P_{X|W}^n (x^n |w^n)P_{W^n}(w^n)\\
    &=  \sum_{x^n \in \~X^n_+} \left( 1 - \frac{P_{X^n}^{\text{des}}(x^n)
}{P_{X^n}^{\text{ind}}(x^n)
}\right) \sum_{w^n}P_{X|W}^n (x^n |w^n)P_{W^n}(w^n)\\
    &=  \sum_{x^n \in \~X^n_+} \left( P_{X^n}^{\text{ind}}(x^n) - P_{X^n}^{\text{des}}(x^n)
\right) \\
    &= d_{TV} \Big(  P_{X^n}^{\text{des}}(.), P_{X^n}^{\text{ind}}(.)  \Big)\\
    &\leq \delta
\end{align}
where the last inequality follows from the near-perfect faithfulness.
}
\fi
Next, we bound the n-letter distortion for the new decoder using the above bound. First, note that $\tau_{R_i \hat X_i}$ is the local $i$-th reference-output state of the system, given by
\begin{align}
    \tau^{R_i \hat X_i} &
    = \sum_{x^n}\Tr_{R^{n\setminus \{i\}} A^n}\left\{(\text{id}^{\otimes n} \otimes \hat \Lambda_{x^n})(\Psi_\rho^{R A})^{\otimes n}\right\}\otimes \ketbra{x_i}{x_i}  = \sum_{x_i} Q_X(x_i) \zeta_{x_i}^{R_i} \otimes \ketbra{x_i}{x_i},
\end{align}
where $\zeta_{x_i}^{R_i}$ is the PMR state of the $i$-th local state given the outcome $x_i$, given by
\begin{align*}
\zeta_{x_i}^{R_i} &= \frac{1}{Q_X(x_i)}\braket{\text{id}\otimes x_i|\tau_{R_i X_i}}{\text{id} \otimes x_i}
= \frac{1}{Q_X(x_i)} \Tr_{R^{n\setminus \{i\}} A^n}\left\{\left(\text{id}^{\otimes n} \otimes \left(\sum_{x^{[n] \setminus i}}\hat \Lambda_{x^n} \right)\right)(\Psi_\rho^{R A})^{\otimes n}\right\}
\end{align*}
where $\hat \Lambda \equiv \{\hat \Lambda_{x^n}\}_{x^n \in \~X^{\otimes n}}$ is the combined encoder-decoder collective POVM induced by the batch decoder. 
We expand the n-letter average distortion as

\begin{align}
    \Tr{\Delta^{(n)} \tau^{R^n \hat X^n} } &=    
     \frac{1}{n} \sum_{i=1}^n \Exx{\hat X_i}{\Tr_R \left\{ \zeta_{\hat X_i}^{R_i} \Delta_{R_i}(\hat X_i)\right\}\mathds{1}_{\hat X_i = X_i}}  + \frac{1}{n} \sum_{i=1}^n \Exx{\hat X_i}{\Tr_R \left\{ \zeta_{\hat X_i}^{R_i} \Delta_{R_i}( \hat X_i)\right\}\mathds{1}_{\hat X_i \neq X_i}}\nonumber\\
    &= \frac{1}{n} \sum_{i=1}^n \Tr{\Delta_{R_i X_i} \tau^{R_i  X_i}} 
    + \frac{1}{n} \sum_{i=1}^n \Exx{\hat X_i}{\Tr_R \left\{ \zeta_{\hat X_i}^{R_i} \Delta_{R_i}( \hat X_i)\right\}\mathds{1}_{\hat X_i \neq X_i}}\nonumber\\
    &\leq D+\epsilon +  \frac{1}{n} \sum_{i=1}^n\Exx{ \hat X_i}{d_\mathrm{max}(\hat X_i) \Trxx{R}{ \zeta_{ \hat X_i}^{R}} \mathds{1}_{\hat X_i \neq X_i}} 
    \leq D + \epsilon + \hat d_\mathrm{max} \epsilon_{os}:= D + \epsilon_4,
\end{align}
where $d_\mathrm{max}(x) = \max \mathrm{eig}(\Delta_R(x)) < \infty$ for $x\in \~X$, and $\hat d_\mathrm{max} = \max_{x\in\~X} d_\mathrm{max}(x)$. 
\ifmycomments
{\color{blue}
(This part uses the uniform integrability ...)
\begin{align}
    &\leq D+\epsilon + \sup_{A}  \Exx{ X}{\Tr_R \left\{ \zeta_{ X}^{R} \Delta_{R}(  X)\right\}\mathds{1}_{A}}.
\end{align}
with $A$ being any event with probability $P(A) = P(\hat X_i \neq X_i) \leq \epsilon_{os}$.
By accepting the assumption that the system is uniformly integrable (which is always true for the systems with finite dimensions), it is derived from the above that $\Tr{\Delta^{(n)} \tau_{R^n \hat X^n}} \leq D + \epsilon_4$.
}
\fi

\subsection{Proof of the Converse} \label{sec:converse}

Let us fix the desired output PMF $Q_X$ and the distortion level $D$.  Further, let $(R,R_c)$ be achievable with the definition \ref{def:achievable_pair}, which means $(R,R_c ) \in \mathrm{interior} \left[ \~R(D, \rho || Q_X)\right]$. This by definition, means that for any $\epsilon'>0$, and all sufficiently large $n$, there exists an $(n,R,R_c)$ coding scheme with $(\~E'_n, \~D'_n)$ that satisfies 
\begin{align}
    X^n \sim Q_X^n, \quad d_n(\rho^{\otimes n}, \~D'_n \circ \~E'_n) \leq D + \epsilon'.
\end{align}
Next, using the convexity lemma \cite[Lemma 1]{tamas_output_constrained_2015} and their argument of right continuity, it implies that the OC rate-distortion function $R(D; R_c, \rho||Q_X)$ is continuous on $D \in [0,\infty)$.

Because the rate pair $(R,R_c)$ is in the interior of the rate region, for any fixed $R_c$ we have $R > R(D;R_c, \rho || Q_X)$. By this and the continuity of $R(D; R_c, \rho||Q_X)$, there exists an $\epsilon > 0$ for which we still have $R > R(D - \epsilon;R_c, \rho || Q_X)$. Therefore, by the definition of achievability, for that specific $\epsilon$, and sufficiently large value of $n$, there exists $(n,R,R_c)$ coding scheme with $(\~E_n, \~D_n)$ that satisfies
\begin{align} \label{eq:tight-distortion-constraint}
    X^n \sim Q_{X}^n, \quad
     d_n(\rho^{\otimes n}, \~D_n \circ \~E_n)\leq D.
\end{align}

The corresponding n-letter collective measurements $\Upsilon^{(M)}$ selected by a shared random number $M$  on Alice's side, 
 with the relation in eq. \eqref{eq:definition_achievability_distortion}, results in the outcome $L$ which is sent to Bob. Finally, Bob uses a batch decoder $P_{X^n|L,M}(x^n|l,m)$, to generate the reconstructed state.  The resulting n-letter encoding quantum measurement composite state is
\begin{align} \label{eq:nletter_converse_codingscheme}
    \omega^{R^n L M} = \sum_{l,m} \Tr_{A^n}\left\{ (\text{id}\otimes \Upsilon_l^{(m)})(\Psi_\rho^{ R A})^{\otimes n} \right\}
    &\otimes \frac{1}{|\~M|} \ketbra{m}{m}
    \otimes \ketbra{l}{l}.
\end{align}

\subsubsection{Rate Inequalities}
Assuming the above system is achievable in the sense of definition \ref{def:achievable_pair}, using the technique of \cite{wilde2012information_theoretic_costs}, we obtain 
the following rate inequalities:
\begin{align}
    n R &\geq H(L)_\omega \geq I(L;M R^n)_\omega= I(L M; R^n)_\omega + I(L; M)_\omega - I(M; R^n)_\omega \nonumber\\
    &\stackrel{a}{\geq} I(L M; R^n)_\omega 
    \geq \sum_{k=1}^n I(L M ; R_k)_\omega \nonumber\\
    &\stackrel{b}{=} n I(L M ; R_K | K)_\sigma
    \stackrel{c}{=}n I(L M K; R_K)_\sigma, \label{eq:converse-rate-inequality-R}
\end{align}
where (a) follows because common randomness $M$ is independent of the source, and 
(b) follows by defining $K$ as a uniform random variable over the set $\{1,2,...,n\}$ which represents the index to the selected system. Then the overall state of the system can be redefined with $K$ being a random index as
\begin{align} 
    \sigma^{R L M K} 
    = \sum_{k,l,m} \Tr_{R^{[n]\setminus k} A^n}\left\{ (\text{id}\otimes \Upsilon_l^{(m)})(\Psi_\rho^{ R A})^{\otimes n} \right\}
    \otimes \frac{1}{|\~M|} \ketbra{m}{m}
    \otimes \ketbra{l}{l} \otimes \frac{1}{n} \ketbra{k}{k}.      \label{eq:overallstatewithKindex}
\end{align}
\ifmycomments
\mynote{They use Alicki-Fannes theorem to show that $I(R;K) \leq \epsilon'$, due to the faithfulness. Refer to \cite{wilde2012information_theoretic_costs} converse for more details. }
\myqu{Why do they use Alicki-Fannes to bound to $\epsilon$? I think the $I(R;K)$ is already equal to zero because input $R$ is tensor state and thus independent of $K$ !} \myans{So I think this $\epsilon$ is not needed in this part, and also not in the next part for our case. I think it was needed only for the $R+S$ in Wilde's paper because the output is near IID there. But the input is perfect i.i.d. so there is no need for $\epsilon $ bound there. Professor Pradhan also confirmed that $I(R;K)=0$ so there is no need for this Alicki-Fannes theorem.}
\fi
Finally, (c) holds because the reference and the index state are independent, i.e., $I(R;K)_\sigma=0$. This can be easily verified by tracing out other systems in \eqref{eq:overallstatewithKindex}:
\begin{align*}
    \sigma^{RK} &= \sum_{k,l,m}\frac{1}{ |\~M|}  \Tr_{R^{[n]\setminus k} A^n} \Big\{     (\text{id}\otimes \Upsilon_l^{(m)})   (\Psi_\rho^{ R A})^{\otimes n}  \Big\} \otimes\frac{1}{n} \ketbra{k}{k}\\
    &= \sum_{k} \Tr_{R^{[n]\setminus k}} \Big\{\rho^{\otimes n}   \Big\} \otimes\frac{1}{n} \ketbra{k}{k}= \rho^R \otimes \left(\frac{1}{n} \sum_k \ketbra{k}{k}\right)^K.
\end{align*}

\ifmycomments
\myqu{An important question that still holds is that this $R$  state, where is it? Because in the expression there is sum of $R_K$. right?(for similar notations look up page 37, 38 of Wilde's \cite{wilde2012information_theoretic_costs}).}\myans{Unless, after taking the trace with $R_k$, the result be presented as the $R$ space for all of $R_k$. In that case, because reference is a tensor state, the trace result can be fixed and the composite state will be a product state.}

\myqu{In the above expression, the $R^{[n] \setminus k}$ still depends on the value $k$. It is different from the classical case because here each of them is a different space. So you cannot just replace it with a fixed term independent of $k$. Maybe when applying the entropy function in that case it will become independent.}

\myans{We are just taking the average of the states. Refer to meeting 13 for more info.}
\fi

Note that the spaces $\~W$ and $\~X$ are classically quantum registers as they are the outcomes of the measurement. Therefore, the following inequalities hold for the classical entropy and Shannon's mutual information:
\begin{align}
    n(R + R_c) &\geq H(L M) \geq I(L M ; X^n) \nonumber\\
    &\geq \sum_k I(X_k; L M)=n I(X_K; L M | K)\nonumber\\
    &\geq n I(X_K; L M | K) + n I (X_K; K)
    = n I (L M K; X_K).  \label{eq:converse_rate_inequality_Rc}
\end{align}
The arguments for the above inequalities are similar to \eqref{eq:converse-rate-inequality-R}, with the exception that $I (X_K; K)=0$ is directly implied from the assumption that $X^n$ is exactly IID from the statement of the theorem.

Finally, by combining \eqref{eq:converse-rate-inequality-R} and \eqref{eq:converse_rate_inequality_Rc}, and by defining $W := (L, M, K)$, we observe that a quantum Markov chain of the form  $R - (L,M,K) -  X$ exists that satisfies the rate inequalities \eqref{eq:info_inequality1}, \eqref{eq:info_inequality2}. Also, the single-letter encoder POVM $\Lambda$  as applied on an arbitrary source state $\xi^A$ is of the form 
\begin{align*}
    &\xi^A \stackrel{\Lambda}{\longrightarrow} \frac{1}{n |\~M|} \Tr_{A_1^{k-1} A A_{k+1}^n} \left\{    (\text{id}\otimes \Upsilon_l^{(m)})^{A^n}(\Psi_\rho^{ R A})^{\otimes k-1} \otimes \xi^A \otimes  (\Psi_\rho^{ R A})^{\otimes n-k}  \right\}.
\end{align*}
Then by using the tensor decomposition of the collective measurement $\Upsilon_{l}^{(m)} = \sum_{t} \left(\Upsilon_{l,t}^{(m,1)}\right)^{A_1} \otimes \cdots \otimes 
\left(\Upsilon_{l,t}^{(m,n)}\right)^{A_n}$, then according to the composite state \eqref{eq:overallstatewithKindex}, 
the single-letter encoder POVM is given by 
\[
\Lambda_{w:=(l,m,k)} :=  \frac{1}{n|\~M|}\sum_{t} 
\left(\Upsilon_{l,t}^{(m,k)}\right)^{A_k}
\prod_{i\neq t} \Tr[\left(\Upsilon_{l,t}^{(m,i)}\right)^{A_i} \rho^{A_i}], 
\]
and the randomized decoder is provided below
\begin{align*}
    &P_{X|W}(x|w:=(l,m,k)) :=  \sum_{x^{[n]\setminus i} \in \~X^{n-1}} D_n(x^{k-1},x,x^{n-k}|l,m), \quad \forall x\in \~X, w\in \~W.
\end{align*}

\subsubsection{Distortion Constraint}

Using the tight distortion bound in \eqref{eq:tight-distortion-constraint}, we provide the following bound on the single-letter distortion which completes the proof for the converse of the Theorem \ref{th:maintheorem}:
\begin{align}
    d(\rho, \Delta_{RX}) &= \Exx{X}{\Tr{\rho_X \Delta_R(X)}} = \frac{1}{n}\sum_{k=1}^n \Exx{X}{\Tr{\rho_X \Delta_R(X)}|K=k} = \Exx{X^n}{\Tr{\rho_{X^n} \Delta^{(n)}(X^n)}} \leq D.
\end{align}
The cardinality bound on the alphabet of the auxiliary random variable in provided in Appendix \ref{appE}.


\section{Continuous-Variable Quantum System} \label{sec:continuous-systems}

In this section, we consider the measurement coding for the CV quantum systems with the infinite-dimensional separable Hilbert space $\~H_A = \~L^2(\mathbb{R})$  \cite[Chapters 11, 12]{Holevo_2019_quantum_systems_book}. 
The proof of the achievability of the random coding argument in previous sections does not directly apply to the continuous quantum systems. The first reason is that the operator Chernoff bound as defined in \cite{wilde2012information_theoretic_costs},
is applicable only for a finite-dimensional Hilbert space. 
Secondly, in infinite-dimensional systems, an observable with a non-discrete set of outcomes cannot define a quantum channel. 
Therefore, it is not possible to represent the outcome space using quantum registers defined on separable Hilbert spaces \cite{Holevo_2019_quantum_systems_book}.
Instead, we keep the output system as classical and use the generalized ensemble representation.
In order to properly define the continuous system model, we first provide the following generalized definitions. 

\subsection{Generalized Definitions of Continuous Quantum Systems} \label{subsec:generalized-definitions}

\begin{definition}
\cite[Definition 11.22]{Holevo_2019_quantum_systems_book} 
The generalized ensemble is defined as a Borel probability measure $\pi$ on the subspace of density operators $\~G(\~H_A)$. Then the average state of the ensemble is defined as $\bar \rho_\pi = \int \rho \cdot \pi(d\rho )$.

\end{definition}
In contrast to the finite-dimensional Hilbert space for which the POVM is defined for all possible outcomes, in the continuous quantum measurement systems, the generalized POVM is defined over the subset of $\sigma$-algebra of Borel subsets $\~B$.
\begin{definition}
\cite[Definition 11.29] {Holevo_2019_quantum_systems_book} 
A POVM is generally defined on a measurable space $\~X$ with a $\sigma$-algebra of measurable subsets $\~B(\~X)$, as a set of Hermitian operators $M = \{ M(B), B \in \~B(\~X) \}$ satisfying the following conditions:
    (I) $M(B)\geq 0, B \in \~B(\~X)$,
    (II) $M(\~X) = I$, and 
    (III)  For any countable (not necessarily finite) decomposition of mutually exclusive subsets $B = \cup B_j$ ($B_i \cap B_j = \emptyset, \; i \neq j$), the sum of the measures converge in weak operator sense to measure of the combined set; i.e. $M(B) = \sum_j M(B_j)$.

\end{definition}
An observable $M$ acting on a CV quantum state $\rho$, with outcomes in measurable space  $\~X$, results in the following probability measure $\pi_\rho^M (B) = \Tr{\rho M(B)}$ for all $ B\in \~B(\~X)$.
It is also necessary to have a proper definition of post-measurement states. The a posteriori average density operator for a subset $B\in \~B$ is defined in \cite{ozawa1985_aposteriori_state} for a general POVM $M$ as 
\begin{align}
    \rho_B = \frac{\sqrt{M(B)} \rho \sqrt{M(B)}}{\Tr{\rho M(B)}}.
\end{align}
Based on that, Ozawa defines the post-measurement state for a continuous quantum system in the following proposition.
\begin{proposition}
\cite[Theorem 3.1.]{ozawa1985_aposteriori_state} \label{prop:aposteriori-density-operator-ozawa}
For any observable $M$ and input density operator $\rho$, there exists a family of a posteriori density operators $
\{\rho_x; x\in \~X\}$, defined with the following properties:
    (I) For any $x \in \~X$, $\rho_x$ is a density operator in $\~H_A$;
    (II) The mapping $x \mapsto \rho_x$ is strongly Borel measurable;
    (III) For any arbitrary observable $N$ with outcome space $\~Y$, and any Borel sets $A\in \~B(\~Y)$ and $B \in \~B(\~X)$, the joint probability is given by
    \begin{align*}
        \Pr(X\in B, Y \in A) =  \Tr{\sqrt{M(B)} \rho \sqrt{M(B)} N(A)}
        = \int_B \Tr{\rho_x N(A)} \pi(dx),
    \end{align*}
    where $\pi(B) = \Tr{M(B) \rho}, \; B\in \~B$ is the probability measure of the outcome space.
\end{proposition}

If the $M$ POVM is such that the measure $M(B)$ is absolutely continuous with respect to some measure $\mu$ on $\~X$, then there exists a weakly measurable function $y \mapsto m(y)$ with values in the cone of bounded positive operators of $\~H$ (Radon-Nikodym derivative) such that \cite[Section IV]{holevo2020-gaussian-maximizers-observables}:
\begin{align*}
    \~E': \qquad \pi(B) = \Tr{\rho M(B)}, \quad \rho_y = \frac{\rho^{1/2} m(y) \rho^{1/2}}{\Tr{\rho m(y)}}, \quad M(B) = \int_B m(y) \mu(dy).
\end{align*}

Using the definition of post-measured ensemble one can define the information gain introduced by Groenwold \cite{groenewold1971_information_gain_measurement} for an input state $\rho$ and output ensemble $\{\rho_B, \pi_\rho^M(B)\}_{B\in \~B}$ as \cite{Holevo_2019_quantum_systems_book}:
\begin{align}
    I_g(\rho, X)  := H(\rho) - \int_\~X H(\rho_x) \pi_\rho^M(dx).
\end{align}
It is worth mentioning that the information gain equals the quantum mutual information of the finite-dimensional QC systems. 
We further redefine the Definition \ref{def:source-coding} (the discrete source-coding scheme) to match the CV quantum systems as follows.
\begin{definition}\label{def:source-coding-continuous}
    An $(n,R,R_c)$ source-coding scheme for the continuous quantum-classical system is comprised of an encoder $\~E_n$ on Alice's side and a decoder $\~D_n$ on Bob's side, with the 
 detailed description provided in the Definition \ref{def:source-coding}. The final output sequence $X^n$ is generated by the decoder in the output space $\~X^n$ with the probability measure $\{\pi_{X^n}(B), \; B\in \~B(\~X^n)\}$. Therefore, the average PMR state $\{\hat \rho^{R^n}_B\}_{B \in \~B(\~X^n)}$ along with the probability measure, form the output ensemble over their corresponding Borel subset $\~B(\~X)$, where 
\begin{align} \label{eq:post-measured-reference-nletter-continuous}
    \hat \rho^{R^n}_{B}
    &=\frac{1}{P_{X^n}(B)} \sum_{m,l} \frac{1}{|\~M|} \Tr_{A^n}\left\{(\text{id} \otimes \Upsilon^{(m)}_l)[\Psi^\rho_{RA}]^{\otimes n})\right\} \mathcal{D}_n(B|l,m) ,\\
    \pi_{X^n}(B) &= \sum_{m,l} \frac{1}{|\~M|} \Tr{ \Upsilon^{(m)}_l\rho^{\otimes n}} \mathcal{D}_n(B|l,m).
\end{align}
\end{definition}

We further define the average n-letter distortion for the source coding system with encoder-decoder pair $(\~E_n, \~D_n)$ as the average single-letter distortion of the local PMR states, given the set of distortion observable operators  $\Delta(x), x\in \~X$, and a continuous memoryless source  state $\rho^{\otimes n}$ as
\begin{align}   \label{eq:average-distortion-definition-continuous}
     d_n(\rho^{\otimes n}, \~D_n \circ \~E_n) &:= \frac{1}{n} \sum_{i=1}^n \Exx{\pi_{X_i}}{\Tr{\hat \rho_{X_i}^{R_i}\Delta_R(X_i) }} = \frac{1}{n} \sum_{i=1}^n \int_{x\in  \mathbb R} \Tr_{R_i} \left[\hat\rho_x^{R_i} \, \Delta_R(x)\right] \pi_{X_i}(dx),
\end{align}
where $\hat \rho_{x_i}^{R_i} := \Exx{x^{n \setminus [i]}}{\Trxx{n\setminus [i]}{\hat \rho^{R^n}_{X^n}}}$ is the $i$-th local PMR state, conditioned on local outcome $x_i\in\~X$ and $\pi_{X_i}$ is the marginal probability measure of the $i$-th local outcome. 

\begin{definition} \label{def:uniform-integ}
Consider a QC system with a distortion observable $\Delta_{RX}$ with operator mapping $x \mapsto \Delta_R(x), x\in \~X$, and an input quantum state $\rho$ forming $(\Delta_{RX}, \rho)$. The pair is called uniformly integrable if for any $\epsilon>0$ there exists a $\delta>0 $ such that 
\begin{align}
    \sup_{\Pi} \sup_{\Lambda} \Exx{ X}{\Tr_R \left\{ \Pi_X\rho_{ X}^{R} \Pi_X \Delta_{R}(  X)\right\}} \label{eq:unif_integ_def} \leq \epsilon,
\end{align}
where the supremum is over all POVMs $\Lambda \equiv \{\Lambda_x\}_{x\in \~X}$ and all projectors of the form $\Pi=\sum_{x} \Pi_x \otimes \ketbra{x}{x}$ such that $\Exx{X}{\Tr(\rho_X\Pi_X)} \leq \delta$, and $\rho^R_x$ is the PMR state of $\rho$ given the outcome $x$ with respect to $\Lambda$. 
\end{definition}

\begin{remark}
For the continuous-variable quantum system, we 
make the following assumptions. 
(I) The density operator $\rho$ belongs to a compact subset $K$ of $\~G(\~H)$ (see \cite[Definition 11.2]{Holevo_2019_quantum_systems_book}. 
(II) The system is uniformly integrable. 
(III) the operator mapping of the distortion $x \mapsto \Delta_R(x)$ is uniformly continuous with respect to the trace norm.
\end{remark}
\subsection{Main Results: Achievable Rate Region for CV Quantum Systems}\label{subsec:theorem-continuous} 

In the light of the above definitions and formulations, we redefine the Definition \ref{def:achievable_pair} of achievability for the continuous systems as follows.

\begin{definition} \label{def:achievable_pair-continuous}
Given a probability measure $\pi_X$ on $(\~X, \~B(\~X))$ and a distortion level $D$, and assuming a product input state of $\rho^{\otimes n}$ of infinite-dimensional separable Hilbert space, a rate pair $(R,R_c)$ is defined as achievable if, for any sufficiently large $n$ and any positive value $\epsilon>0$, there exists an $(n, R, R_c)$ coding scheme  comprising of a measurement encoder $\~E_n$ and a decoder $\~D_n$ as described in Definition \ref{def:source-coding-continuous} that satisfy the following conditions,
\begin{align}
    X^n \sim  \pi_X^n,  \quad 
     d_n(\rho^{\otimes n}, \~D_n \circ \~E_n) \leq D+\epsilon.
\end{align}
\end{definition}
The following theorem provides a single-letter characterization of the achievable rate region for the continuous quantum system.
\begin{theorem}\label{th:continuous-theorem}
    Given a pair $(\pi_X, D)$ and having a product input state $\rho^{\otimes n}$ of continuous infinite-dimensional Hilbert space with limited von Neumann entropy, a rate pair $(R,R_c)$ is inside the achievable rate region in accordance with the  definition \ref{def:achievable_pair-continuous} if and only if there exists an intermediate state $W$ with a corresponding measurement POVM $M_W= \{M_W(B), B\in \~B_\~W\}$ where $\~B_\~W$ is the $\sigma$-algebra  of the Borel sets of $\~W$, and randomized post-processing channel $P_{X|W}$  which satisfies the rate inequalities
    \begin{align} \label{eq:info_inequality1_continuous}
		    R &\geq I_g(R;W), \\
		    R+R_c &\geq  I(W;X),\label{eq:info_inequality2_continuous}
    \end{align}
    where $W$ constructs a quantum Markov chain of the form $R - W - X$, generating the ensemble $E_w := \{\hat \rho_w, \pi_W(w)\}_{ w\in \~W}$  on the intermediate space and the ensemble $E_x := \{\hat \rho_x, \pi_X(x)\}_{x\in \~X}$  on the output space, (as described by Proposition \ref{prop:aposteriori-density-operator-ozawa}) according to the following set:
    \begin{align} \label{eq:set_feasible_theorem_continuous}
		    \~M_c(\~D) = \left\{ (E_w, E_x) \middle\vert 
		    \begin{array}{l}
                    \int_w P_{X|W}(A|w) \pi_W(dw)   =  \pi_X(A),  \quad \text{for }A\in \~B(\~X)\\
                    \qquad \qquad \ \pi_W(B) = \Tr{ M(B) \rho}, \quad \text{for }B\in \~B(\~W)\\
                    \int_{x\in  \mathbb R} \Tr_R \left[\rho_x \, \Delta_R(x)\right] \pi_X(dx) \leq D
            \end{array}
            \right\}.
    \end{align}

\end{theorem}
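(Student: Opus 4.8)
The plan is to prove the converse along the lines of Section~\ref{sec:converse} and to establish achievability by a quantum clipping (truncation) argument that reduces the continuous problem to a sequence of finite-dimensional instances governed by Theorem~\ref{th:maintheorem}. For the converse, since $\rho$ has finite von Neumann entropy the chains $nR\ge H(L)\ge\cdots$ and $n(R+R_c)\ge H(LM)\ge\cdots$ of Section~\ref{sec:converse} go through verbatim; the only change is that $H(R_K)-H(R_K\mid LMK)$ is now read as the information gain $I_g(W;R)$ with $W=(L,M,K)$ (the two coincide here), all entropies involved are finite and nonnegative so no new subtlety arises, and the single-letter distortion bound follows as before. It therefore suffices to treat achievability.

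Fix a feasible pair $(E_w,E_x)\in\~M_c(\~D)$ with $I_g(W;R)<R$ and $I(W;X)<R+R_c$, realized by a POVM $M=\{M^A(B)\}$ on $\~H_A$ and a post-processing channel $P_{X|W}$. Let $\Pi_d$ project onto the span of the eigenvectors of $\rho$ with the $d$ largest eigenvalues, put $\rho_d:=\Pi_d\rho\Pi_d/\Tr{\Pi_d\rho}$, and clip the POVM to $M_d(B):=\Pi_d M^A(B)\Pi_d$, completed on a dummy outcome by $I-\Pi_d$; the clipped post-measurement ensembles $E_{w,d},E_{x,d}$ arise by the same conjugation. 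Since $\Tr{(I-\Pi_d)\rho}\to0$ and, crucially because $H(\rho)<\infty$, $H(\rho_d)\to H(\rho)$, the continuity theorems of Section~\ref{subsec:generalized-definitions} give: the distribution $\mu_{X,d}$ actually produced by $(M_d,P_{X|W})$ on $\rho_d$ satisfies $d_{TV}(\mu_{X,d},\mu_X)\to0$; the resulting distortion $D_d$ satisfies $D_d\to D$ (here also using uniform integrability, Definition~\ref{def:uniform-integ}, and the uniform continuity of $x\mapsto\Delta_R(x)$, since $\Delta_R$ need not be bounded); and $I_g(W;R)_{\rho_d}\to I_g(W;R)_\rho$ together with $I(W;X)_{\mu_{X,d}}\to I(W;X)_{\mu_X}$. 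Hence for all large $d$ the clipped instance is feasible with rates strictly below $(R,R_c)$, and Theorem~\ref{th:maintheorem} supplies, for every $n$, an $(n,R,R_c)$ scheme on the finite-dimensional source $\rho_d^{\otimes n}$ whose output is exactly $\mu_{X,d}^n$ and whose distortion is at most $D_d+\epsilon$. (Equivalently, one may re-run the random-coding construction of Section~\ref{sec:achievability} with all operations carried out inside the clipped subspace so that the operator Chernoff bound applies, if one prefers not to invoke Theorem~\ref{th:maintheorem} as a black box.)

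It remains to run this scheme on the genuine source, correct the output law to $\mu_X$, and control the distortion. Applying the clipped-source scheme to $\rho^{\otimes n}$ rather than to $\rho_d^{\otimes n}$ changes the induced output distribution by at most $O\!\big(n\,\Tr{(I-\Pi_d)\rho}\big)$ in total variation, by monotonicity of trace distance under the encoder--decoder channel and the telescoping bound $\norm{\rho^{\otimes n}-\rho_d^{\otimes n}}_1\le n\norm{\rho-\rho_d}_1$; it changes the distortion by an amount controlled, via the uniform integrability hypothesis, by the contribution of the low-probability event that the source lies outside the clipped subspace. Since $d_{TV}(\mu_{X,d},\mu_X)\to0$, we then replace the memoryless post-processor by the batch decoder of \cite{wagner_2022_ratedistortionperecptiontradeoff} used in Section~\ref{sec:exactly_iid}; this step is purely classical, hence indifferent to $\dim\~H_A=\infty$, and couples the old and new outputs with $\Pr(X^n\neq\hat X^n)$ at most the total variation between the induced output law and $\mu_X^n$. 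The extra distortion on $\{X^n\neq\hat X^n\}$ is bounded, exactly as in \eqref{eq:supremumA}, by the supremum of $\Exx{X}{\Tr_R\left\{\zeta_X^R\Delta_R(X)\right\}\mathds{1}_A}$ over events $A$ of vanishing probability, which tends to $0$ by uniform integrability, while the gap between the distortion evaluated against the clipped and against the genuine post-measurement states is handled by the uniform continuity of $x\mapsto\Delta_R(x)$ together with $\rho_d\to\rho$. Choosing $d=d(n)\to\infty$ fast enough, say $\Tr{(I-\Pi_{d(n)})\rho}\le n^{-2}$ and $d_{TV}(\mu_{X,d(n)},\mu_X)\le n^{-2}$, sends every error term to $0$, so that for each $\epsilon>0$ and all large $n$ we get a scheme with output $\sim\mu_X^n$ and distortion $\le D+\epsilon$, as required.

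The main obstacle is precisely the limiting step. In an infinite-dimensional Hilbert space the von Neumann entropy is only lower semicontinuous, so truncation cannot be controlled without further hypotheses; the finite-entropy assumption is exactly what forces $H(\rho_d)\to H(\rho)$ and, with it, $I_g(W;R)_{\rho_d}\to I_g(W;R)_\rho$ from below, keeping the rate strictly achievable along the truncation. Establishing this — continuity of the information gain, a Holevo-type quantity, under the joint clipping of the state and the observable, including the convergence of the averaged conditional entropies $\int H(\rho_{w,d})\,d\mu_d$ — is the delicate point, and it is here that the continuity theorems of Section~\ref{subsec:generalized-definitions} do the real work; the uniform-integrability and uniform-continuity conditions on $\Delta_R$ play the analogous role for the distortion, preventing it from escaping to infinity under truncation or under the low-probability mismatch event created by the batch decoder.
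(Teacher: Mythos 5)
There is a genuine gap in your achievability argument: you never discretize the \emph{output} (nor the intermediate variable $W$), yet you invoke Theorem \ref{th:maintheorem} as a black box to produce an exactly i.i.d.\ output with law $\mu_{X,d}^{\,n}$. Theorem \ref{th:maintheorem} is stated and proved for a reconstruction alphabet $\mathcal{X}$ that is a \emph{finite} set (and a finite intermediate alphabet $\mathcal{W}$, via the cardinality bound): the codebook construction, the classical soft-covering bound behind the near-i.i.d.\ step, and the batch-decoder correction of Section \ref{sec:exactly_iid} are all finite-alphabet arguments. When the target $\mu_X$ (and hence your surrogate $\mu_{X,d}$) is a continuous measure and the post-processing channel $P_{X|W}$ maps into $\mathbb{R}$, none of these steps applies as written; your remark that the batch decoder is ``purely classical, hence indifferent to $\dim\mathcal{H}_A=\infty$'' misses that the relevant obstruction is the cardinality of the output alphabet, not of $\mathcal{H}_A$. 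The paper closes exactly this hole by quantizing the output with $Q_{K_2}$, running the discrete theorem on the clipped source with the \emph{finite} alphabet $\mathcal{X}_{K_2}$, and then restoring the exact continuous law $\mu_X$ through a per-cell dequantization (optimal transport) channel $\pi_{OT}$, whose extra distortion is controlled by the uniform continuity of $x\mapsto\Delta_R(x)$ and uniform integrability (Lemma \ref{lemma:single-letter-distortion-upperbound}); the rate inequalities for the clipped/quantized chain are then recovered in the limits $k_1\to\infty$, $k_2,k_2'\to\infty$ via Shirokov's continuity of the information gain and lower semicontinuity of mutual information. Without an analogous quantize-then-dequantize step (or a genuinely continuous version of soft covering plus a measure-theoretic batch decoder, which you would have to prove), your reduction does not deliver the exact i.i.d.\ condition of Definition \ref{def:achievable_pair-continuous}.

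A secondary but real problem is your coupling $d=d(n)\to\infty$ with, e.g., $\Tr\{(I-\Pi_{d(n)})\rho\}\le n^{-2}$. Theorem \ref{th:maintheorem} guarantees a good code only for ``$n$ sufficiently large'' \emph{for a fixed finite-dimensional instance}, and the required blocklength depends on the dimension (through the operator Chernoff bound, typicality parameters, and $d_{\max}$ of the clipped instance); letting the dimension grow with $n$ requires a uniformity you have not established. The paper avoids this by keeping $k_1,K_2$ fixed while $n\to\infty$ and only then taking $k_1\to\infty$ and $k_2,k_2'\to\infty$, and—because the measurement physically acts on the true state $\rho^{\otimes n}$, not on $\rho_d^{\otimes n}$—by clipping each letter separately, transmitting the pattern of improper letters at an asymptotically negligible rate $\frac1n\log_2\bigl(\binom{n}{t_{\min}}+1\bigr)$, and having Bob fill those positions with locally generated $\mu_X$-samples, so that the output is exactly i.i.d.\ for every finite $n$ rather than only after a TV-repair step. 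Your converse paragraph and your use of uniform integrability/uniform continuity for the distortion mismatch are in the right spirit and essentially match the paper, but the achievability as proposed does not go through without the output quantization/dequantization protocol and the fixed-parameter limit ordering.
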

Note that the cardinality bound does not exist in the continuous case.
\begin{remark}
    The classical channel $P_{X|W}: \~W \times \~B(\~X) \to \~X$, is a mapping such that for every $w \in \~W$, $P_{X|W}(.|w)$ is a probability measure on $\~B(\~X)$ and for every $B\in \~B(\~X)$, $P_{X|W}(B|.)$ is a Borel-measurable function. 
\end{remark}
\begin{remark}
    The converse of the finite quantum system is applicable to the continuous variable quantum systems, except for the cardinality bound.
\end{remark}

\subsection{Proof of Achievability in the Continuous System}\label{subsec:achievability-continuous}
 Given the pair $(\pi_X, D)$, assume there exists a continuous intermediate state $W$ forming a quantum Markov chain $R - W - X$, with a corresponding continuous POVM $M_W= \{M_W(B), B\in \~B(\~X)\}$ with outcomes in $\~W$, defined as a set of Hermitian operators in $\~H_A$ satisfying conditions of Theorem \ref{th:continuous-theorem}, along with a corresponding classical post-processing channel $P_{X|W}$.
 

In this proof, we plan to employ our discrete source coding theorem from the previous section, by first applying a clipping projection to truncate the source state into a finite-dimensional space, and then using Theorem \ref{th:maintheorem}.
Referring to  \cite[Theorem 11.2]{Holevo_2019_quantum_systems_book}, the considered compact subset $K$ of density operators
has the property that for any $\varepsilon > 0 $, there exists a finite-rank projector $P_\varepsilon$ such that $\Tr{P_\varepsilon S} \geq 1 - \varepsilon$ for all $S \in K$.

Thus, similar to the proof of \cite[Lemma 11.5]{Holevo_2019_quantum_systems_book}, we can form a finite-rank spectral projection $\Pi_{k_1}$ onto the eigenspace corresponding to the first $k_1$ eigenvalues of the source state, so that $\Pi_{k_1} \uparrow I$.
This spectral projection obviously commutes with the source state $\rho^A$. Then based on the projection operator, the following sharp measurement is formed,
$C_{k_1} \equiv \left\{\Pi_{k_1} , \quad I - \Pi_{k_1} \right\}$.
An example of this projection is the truncation of the Fock basis to $\{\ket{n}\}_{n=0}^N$ for a thermal Gaussian state using the energy test \cite{ghorai2019asymptotic,lin2019asymptotic,renner2009finetti}. 
The outcome of the clipping measurement is the clipping error
\begin{align} \label{eq:indicatorAk1}
    A_{k_1} := \begin{cases}
            0 & \text{if } \rho_A \in \Pi_{k_1}\\
            1 & \text{otherwise}
        \end{cases},
\end{align}
and the probability of the clipping error
    $P_{k_1} := \Pr(A_{k_1} = 1 ) = \Tr{(I - \Pi_{k_1}) \rho_A}$.
Then the conditional post-clipping state given $A_{k_1} = 0, 1$ are respectively: 
\begin{align}\label{eq:a-posteriori-clipping}
    \hat\rho^{A'}_0 = \frac{\rho_A \Pi_{k_1}}{\Tr{\Pi_{k_1} \rho_A}}, 
    \quad \hat\rho^{A'}_1 = \frac{\rho_A (I - \Pi_{k_1})}{1 - \Tr{\Pi_{k_1} \rho_A}},
 \end{align}
where we used the fact that the projection operator $\Pi_{k_1}$ commutes with source state $\rho_A$.

\subsubsection{Information Processing Task}
It is possible to extend the original Markov chain $R - W -X$ by quantizing the classical output $X$ with quantizer $Q_{\bar{k}_2}$ where $\bar{k}_2 = (k_2, k_2')$ is the pair of clipping region parameters creating the cut-off range $[-k_2, k_2]$ and $k_2'$ is the precision parameter making $2^{k_2'}$ levels forming the quantized output space $\~X_{\bar{k}_2}$, such that
\begin{align}
    R \stackrel{M_W}{\longrightarrow} W \stackrel{P_{X|W}}{\longrightarrow} X \stackrel{Q_{\bar{k}_2}}{\longrightarrow} X_{\bar{k}_2}.
\end{align}

However, the $R$ system of the above quantum Markov chain cannot be easily extended to the clipped space $R_{k_1}$, because the forward path from $R_{k_1}$ to $R$ would require performing the inverse of the clipping projection POVM which is not straightforward.
Instead, using a new approach, we prepare a separate Markov chain by clipping the quantum source state with clipping projection $C_{k_1}$ and then directly feeding that clipped state $R_{k_1}$ into the same continuous measurement POVM $M_W$, as shown in figure \ref{fig:continuous-markov-chain}. This is specifically possible as the clipped input state lies in a subspace of the original quantum Hilbert space.

\begin{figure}
    \centering
    \includegraphics[width=0.8\textwidth]{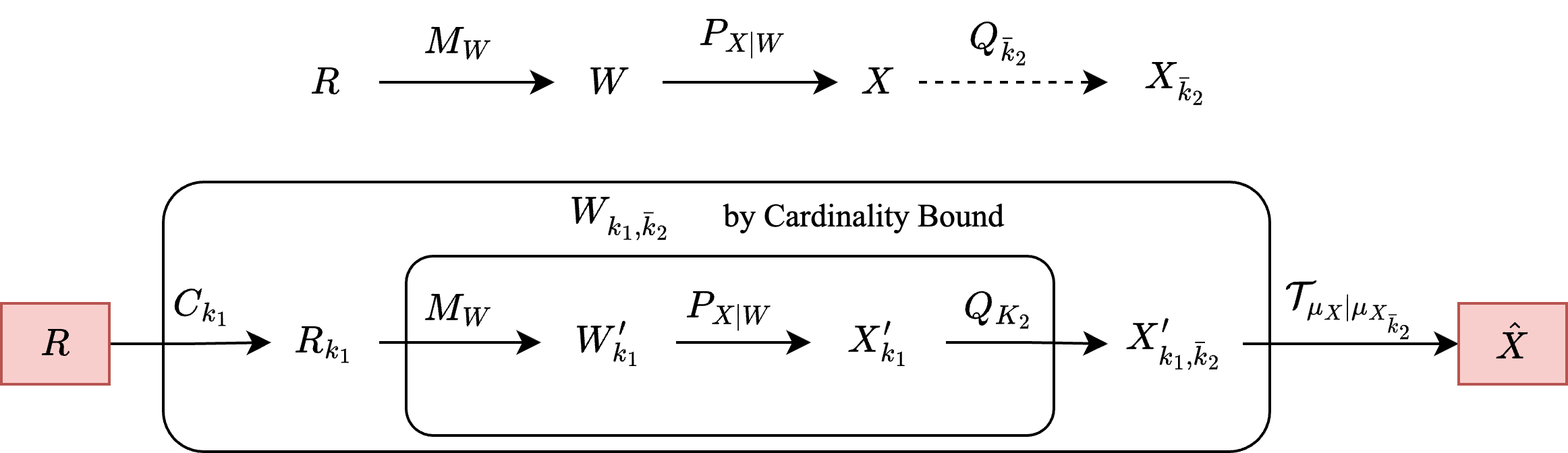}
    \caption[Markov Chain of the Clipping Approach in Continuous System]{Markov Chain of the new Approach: The upper diagram shows the original Markov chain provided by the single-letter intermediate state $W$. The lower diagram shows the single-letter Markov chain of the new approach in which the clipped source state is directly fed into the same continuous measurement $M_W$ and the discrete output is obtained by quantizing $X'_{k_1}$. Finally, the optimal transport transforms the discrete output back to the continuous output $\hat X$.}
    \label{fig:continuous-markov-chain}
\end{figure}


After applying the clipping measurement, the system has two possible scenarios which can be helpful in subsequent analysis. In the first scenario, the system reveals the outcome $A_{k_1}$, and based on its value, if the source state is inside the clipping subspace $(A_{k_1} = 0)$, then $\hat \rho^{A'}_0$ is sent to  $M_W$ POVM to produce the classical outcome. Otherwise if $(A_{k_1} = 1)$, we throw out the a-posteriori state $\hat \rho^{A'}_1$ and only notify the receiver by asserting $A_{k_1}$. This scenario produces the following new Markov chain:
\begin{align} \label{eq:Markov-chain-alternative-method}
    R\stackrel{C_{k_1}}{\longrightarrow} R_{k_1} \stackrel{M_W}{\longrightarrow} W'_{k_1} \stackrel{P_{X|W}}{\longrightarrow} X'_{k_1} \stackrel{Q_{k_2}}{\longrightarrow} X'_{k_1,\bar{k}_2} .
\end{align}

The second scenario is when the outcome $A_{k_1}$ is not revealed and thrown away. Then the system performs $M_W$ POVM on the post-projection state. Using \eqref{eq:a-posteriori-clipping}, the final conditional a-posteriori average density operator for an event $B \in \~B$ given the clipping POVM outcomes are
 \begin{align*}
      \hat\rho_{0}^{A''}(B) = \frac{\sqrt{M_W(B)} \rho_A \Pi_{k_1} \sqrt{M_W(B)}}{\Tr{M_W(B) \rho_A\Pi_{k_1}} }, \quad
      \hat\rho_{1}^{A''}(B) = \frac{\sqrt{M_W(B)} \rho_A (I - \Pi_{k_1}) \sqrt{M_W(B)}}{\Tr{M_W(B) \rho_A ( I - \Pi_{k_1})} }.
 \end{align*}
 Because $A_{k_1}$ is hidden, the average final a-posteriori density operator is
 \begin{align}
     \hat \rho^{A''}(B) = \hat\rho_{0}^{A''}(B) \cdot \Pr{A_{k_1} = 0 | B} + \hat\rho_{1}^{A''}(B) \cdot \Pr{A_{k_1} = 1 | B} =  \frac{\sqrt{M_W(B)} \rho^A \sqrt{M_W(B)}}{\Tr{M_W(B) \rho^A}}. 
 \end{align}
The above equality shows that by throwing out the outcome $A_{k_1}$, the system acts as if no clipping projection was performed, which reclaims the original Markov chain $R - W - X$. 
\begin{remark}
The importance of this second scenario is revealed in the next subsection while proving the rate inequalities. It is needed to establish a single probability space in which both unclipped outcome $W$ and $A_{k_1}$ exist. However, in reality, the continuous protocol only employs the first scenario.
\end{remark}

As we had to define a separate Markov chain, we cannot directly harness the data processing inequality to prove the rate inequalities for the quantized and clipped systems. Therefore we provide the following proposition to show the rate inequalities still hold after clipping and quantization.
\begin{proposition} \label{prop:continuous_rate_inequalities}
    Suppose having a quantum Markov chain of the form $R-W-X$ satisfying the conditions in Theorem \ref{th:continuous-theorem}. Using the new clipping method as described by \eqref{eq:Markov-chain-alternative-method}, the clipped states still satisfy the following rate inequalities in the asymptotic regime
\begin{align}
   \lim_{k_1 \to \infty} I_g( R_{k_1}; W'_{k_1}) \leq I_g(R;W), \quad \lim_{k_1 \to \infty} I(  W'_{k_1}; X'_{k_1,\bar{k}_2}) \leq I(W;X). \label{eq:clippedrate-inequality}
\end{align}
\end{proposition}
\begin{proof}
Proof in Appendix \ref{appA-cont-rate-inequalities}
\end{proof}

 \subsubsection{Source-Coding Protocol for Continuous States}
Having a quantum source generating a sequence of $n$ independent continuous states $\rho^{\otimes n}$, we apply a coding protocol on the source states, described as follows. First, separate the input states into \textit{proper} and \textit{improper} states (the states residing inside and outside the clipping region respectively) by applying the clipping POVM $C_{k_1}$. This generates a  sequence of error bits $A^n_{k_1}\equiv \{A_{i,k_1}\}_{i=1}^n$ where $A_{i,k_1}$ is the clipping error of the $i$-th state as defined in \eqref{eq:indicatorAk1}. 
Thus, according to \textit{Weak Law of Large Numbers}, for any fixed $\epsilon_{cl}>0$ and $k_1 \in \mathbb{N}$, there exists a value $N_0(\epsilon_{cl}, k_1)$ large enough such that for any $n\geq N_0(\epsilon_{cl}, k_1)$, the number of proper states
\begin{align}
    T:=  \sum_{i=1}^n \left(1 - A_{i,k_1}\right),
\end{align}
is within the range $T \in [n(1 - P_{k_1} \pm \epsilon_{cl})]$ with probability no less than $1 - \epsilon_{cl}$. Then for any sequence with $T < t_\text{min}$ where $t_\text{min}:= n(1 - P_{k_1} -\epsilon_{cl})$, we do not perform source coding,  and instead assert a \textit{source coding error} event $E_{ce}$. Upon receiving the coding error event, Bob will locally generate a sequence of random outcomes $\hat X^n_\text{local}$ with the desired $\mu_X^n$ output distribution. This ensures that in every sequence for which the coding is performed, there are $T \geq t_\text{min}$ independent source states which are inside the clipping region, for which we apply the coding scheme. For the rest of the $(n-T)$ states, we do not perform the coding, instead, we throw away the source state and send the error-index to the receiver. 

Next, the error bits sequence is coded into indices of size ${\binom{n}{t_\text{min}}}+1$ where the extra index is the event of a source coding error $E_{ce}$. Note here that the required classical rate, in this case, will be $R + \log_2\left(  {\binom{n}{t_\text{min}}}+1  \right)$. The following limit
\begin{align}
    \lim_{\epsilon_{cl}\to 0} \lim_{k_1 \to \infty} \lim_{n \to \infty}  \frac{1}{n}\log_2\left(  {\binom{n}{n(1 - P_{k_1} -\epsilon_{cl})}}+1  \right) = 0,
\end{align}
ensures that the extra error handling rate can be made arbitrarily small.

Then at Bob's side, the classical sequence $X'^t_{k_1,\bar{k}_2}$ is constructed using the discrete coding scheme, and is fed to a memoryless optimal transport block $\~T_{\mu|\mu_{\bar{k}_2}}$ to generate the final continuous sequence $\hat X^t_{k_1,\bar{k}_2}$. Finally, the sequence is padded with the $(n-T)$ locally generated independent values at the error positions to create the final $\hat X^n_{k_1,\bar{k}_2}$. Bob then uses this sequence to prepare his final quantum states. Figure \ref{fig:coding-protocol} shows the block diagram of this coding protocol.
\begin{figure}[ht]
    \centering
    \includegraphics[width=\linewidth]{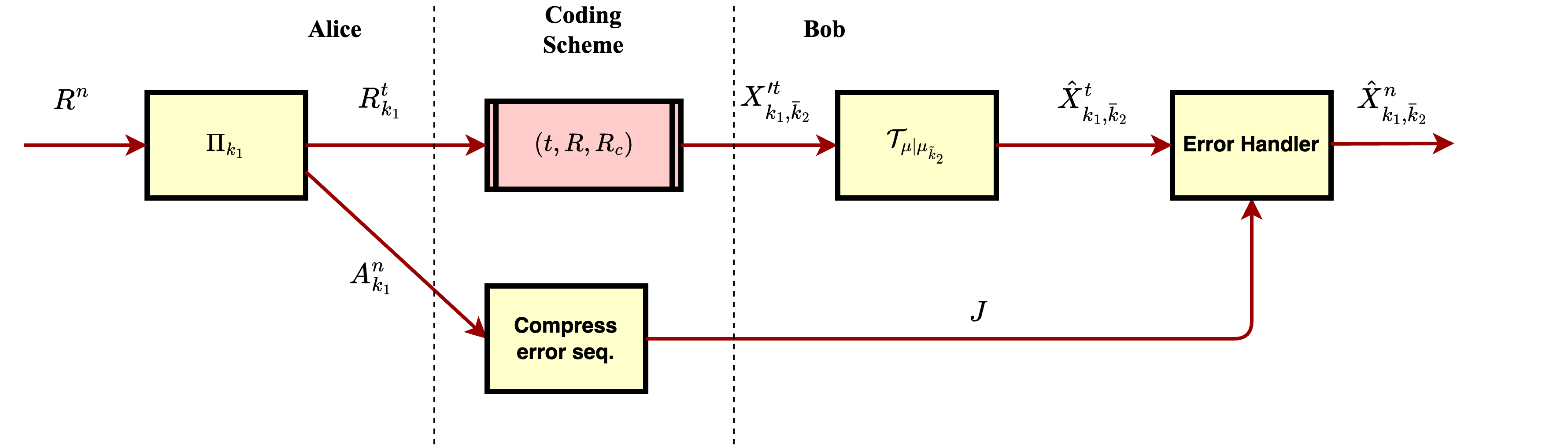}
    \caption[Coding Protocol]{Continuous Coding Protocol}
    \label{fig:coding-protocol}
\end{figure}

\subsubsection{Proof of Distortion Constraint}
The end-to-end average distortion for the above system is written as
\begin{align}
    d_n(R^n, {\hat X}^n_{k_1,\bar{k}_2}) &= d_n\left(R^n, {\hat X}^n_{k_1,\bar{k}_2} \Big|E_{ce}\right) \Pr(E_{ce}) + d_n\left(R^n, {\hat X}^n_{k_1,\bar{k}_2}\Big|\neg E_{ce}\right) \left( 1-   \Pr(E_{ce})  \right) \nonumber\\
    &\leq  d_n\left(R^n, \hat X_\text{local}^n| E_{ce} \right) \Pr(E_{ce}) + d_n\left(R^n, {\hat X}^n_{k_1,\bar{k}_2} \Big| \neg E_{ce}\right) \nonumber\\
    &= \frac{1}{n} \sum_{i=1}^n  d\left(R_i, \hat X_{i,\text{local}}| E_{ce} \right) \Pr(E_{ce}) + d_n\left(R^n, {\hat X}^n_{k_1,\bar{k}_2} \Big| \neg E_{ce}\right) ,
\end{align}
where $\hat  X_\text{local}^n$ is generated locally at Bob's side according to the fixed IID output distribution $\mu_X$ in the event of coding error $E_{ce}$. Therefore, in the first term above, for each $i$-th sample of the system, the uniform integrability of the distortion observable implies that it can be made arbitrarily small by selecting the proper value of $\epsilon_{cl}$. As for the second term, we use the following lemma to provide a single-letter upper bound:
\begin{lemma} \label{lemma:single-letter-distortion-upperbound}
    The end-to-end average n-letter distortion of the continuous system conditioned on the event of no coding error is bounded from above by the following single-letter distortion for any fixed value of $k_1, k_2, k_2' > 0$ and $\bar{\varepsilon}>0$ as:
\begin{align}
    d_n(R^n, {\hat X}^n_{k_1,\bar{k}_2}|\neg E_{ce}) 
    & \leq  d(R_{k_1} , X'_{k_1, \bar{k}_2})   + \bar{\varepsilon} +  \varepsilon(k_1,k_2),
    \label{eq:single-letter-upbnd}
\end{align}
for all sufficiently large $n$. Furthermore, $\varepsilon \downarrow 0$ as 
$k_1,k_2$ become large.
\end{lemma}

\begin{proof}
    See Appendix \ref{appA-singleletter-distortion}
\end{proof}

 Next, note that the following inequality holds by definition for the single-letter discrete distortion:
\begin{align}
    d(R,X_{\bar{k}_2}) &= d(R,X_{\bar{k}_2} | A_{k_1 } = 0) \Pr(A_{k_1} = 0) +  d(R,X_{\bar{k}_2} | A_{k_1 } = 1) \Pr(A_{k_1} = 1) \nonumber\\
    &= d(R_{k_1},X'_{k_1,\bar{k}_2}) \Pr(A_{k_1} = 0) +  d(R,X_{\bar{k}_2} | A_{k_1 } = 1) \Pr(A_{k_1} = 1).
\end{align}
Then, by having the probability of clipping approach zero $\lim_{k_1 \to \infty} \Pr(A_{k_1} = 1) = 0$, the second term above goes to zero as a direct result of uniform integrability, and we have the following asymptotic limit:
\begin{align}
    \lim_{k_1 \to \infty} d(R_{k_1}, X'_{k_1,\bar{k}_2}) &= d(R, X_{\bar{k}_2}):=  \int_{\~X}  \Tr{ \sqrt{\rho } \, \Lambda_X( d z) \sqrt{\rho}\,\Delta\big(Q_{\bar{k}_2}(z)\big)}.
\end{align}
Then as $k_2, k_2' \to \infty$, we can bound from above this RHS distortion value by
\begin{align}
    \lim_{k_2,k'_2 \to \infty} d(R,X_{\bar{k}_2}) &\leq  D + \lim_{k_2,k'_2 \to \infty} \left(d(R,X_{\bar{k}_2}) - d(R,X)\right) \nonumber\\
    &= D + \lim_{k_2,k'_2 \to \infty} \int_{\~X}  \Tr{ \sqrt{\rho } \, \Lambda_X( d z) \sqrt{\rho}\,\Big(\Delta\big(Q_{\bar{k}_2}(z)\big) - \Delta(z)\Big)} = D,
\end{align}
where the last equality follows similarly from continuity and uniform integrability of the distortion observable operator $\Delta(x)$ as a function of $x\in \~X$. Combining the above bounds to the single-letter expression in \eqref{eq:single-letter-upbnd} shows
$\lim_{n \to \infty}  d_n(R^n, {\hat X}^n_{k_1,\bar{k}_2}) \leq D$, which completes the proof of achievability.

\section{Evaluation of the Qubit-Binary System} \label{sec:evaluations_qubit}
In this section, we study the example of  qubit-binary systems. Having the qubit source state $\rho$, a Bernoulli output distribution $Q_X$, and entanglement fidelity as the distortion measure, we aim to find the OC rate-distortion function $R(D;\infty,\rho||Q_X)$ for the case of unlimited common randomness. By inverting this function we then achieve the RLOT cost $D(R; \infty, \rho || Q_X)$ function. We then provide the numerical results for a few numerical examples and plot the rate-distortion function.
\subsection{Qubit System with Unlimited Common Randomness}
For the case of the qubit QC system, we employ entanglement fidelity \cite{barnum_quantum_2000,datta_wilde_2012_quantum_ratedistortion_reverseshanon} as the distortion measure, which can be written as
\begin{align}
   \Tr{\Delta_{RX}\tau_{RX}} = 1 - \bra{\psi^{RA}} \left(    \sum_x \sqrt{\rho} M_x \sqrt{\rho} \otimes \ketbra{x}{x}    \right) \ket{\psi^{RA}}.\label{eq:eval_entanglement_fidelity1}
\end{align}
Using the spectral decomposition of $\rho = \sum_{t=1,2} P_T(t) \ketbra{\varphi_t}{\varphi_t}$, with the eigenbasis ${\ket{\varphi_t}}_{t=1}^2$, and by substituting the canonical purification of the above decomposition into \eqref{eq:eval_entanglement_fidelity1}, it simplifies to
\begin{align}
    \Tr{\Delta_{RX}\tau_{RX}} &= 1 - \sum_{x} \bra{x}\rho M_x^{\~T_\varphi}
        \rho\ket{x} ,\label{eq:entanglement-fidelity-distortion-simplified}
\end{align}
where $M_x^{\~T_{\varphi}}$ is the transpose of $M_x$ with respect to the $\{\varphi_t\}$ basis, defined as
\begin{align} \label{eq:transpose_finite_case}
    M_x^{\~T_{\varphi}} = \sum_{t,s} \braket{\varphi_t}{M_x|\varphi_s} \ketbra{\varphi_t}{\varphi_s}.
\end{align}

\begin{remark} \label{remark:transpose_entropy}
Although the distortion formula in \eqref{eq:entanglement-fidelity-distortion-simplified} has transposed POVM operator $M_0^{\~T_\varphi}$, we can ignore this transpose in the distortion constraint of the above optimization problem. The reason is that the eigenvalues of $\rho_x$ are preserved under the transposition with respect to any basis. This implies that the entropy function also does not change under transposition.
\end{remark}

\ifmycomments
\mynote{Also, note that $\ket{\varphi_t}$ eigenbasis of decomposition is generally different from $\ket{x}$ basis of the output measurement. Because, if they were the same basis, the problem would be simplified to classical source coding.}
\fi

In the presence of an unlimited amount of common randomness, the only effective rate becomes $I(W;R)$, which is lower-bounded by $I(X;R)$ because of the data processing inequality and the Markov chain $R-W-X$ \cite{tamas_output_constrained_2015}. Thus $W = X$ minimizes the mutual information, which means no local randomness is required at the decoder.  Therefore, using the main theorem, for a qubit system with input state $\rho$ and  Bernoulli($q_1$) output distribution, the OC rate-distortion function is obtained by
\begin{align} \label{eq:qubit_ratedist_problem}
    R(D;\infty, \rho||\text{Bern}(q_1)) = &\min_{M_x^A} I(R;X)_\tau, = \min_{M_x^A} \left[H(\rho) - \Big(q_0 H(\hat \rho_0) + q_1 H(\hat \rho_1)\Big)\right]\\
    \text{subject to: }
    &\begin{array}{l}
        \Tr{ M_x \rho} =  q_x, \quad \quad x=0,1, \\
                    \sum_x \Tr_A\left\{(\text{id}_R\otimes M_x) \psi^{R A}\right\} = \rho,\\
                   1 - \sum_{x} \bra{x}\rho M_x^{\~T_\varphi}\rho\ket{x} \leq D.
                \end{array} 
\end{align}

where the information quantity is with respect to the composite state
    $\tau_{R X} = \sum_x \hat\rho_x \otimes \ketbra{x}{x}^X.$

\subsubsection{Rate-Limited Optimal Transport for Qubit Measurement System}
By addressing the above optimization problem, we obtain Theorem \ref{th:qubit_ratedist_theorem}, which yields a transcendental system of equations that determines the OC rate-distortion function for this system.
We first define the quantum source $\rho$ and some operator, $N(n,s)$ in the following matrix format:
\begin{align} \label{eq:qubit_evaluation_Nopt_matrix}
        \rho = \begin{bmatrix}
        \rho_1 & \rho_2\\
        \rho_2^* & 1 - \rho_1
    \end{bmatrix}, \quad
    N(n,s) := \begin{bmatrix}
            n & s \rho_2 /|\rho_2| \\
            s \rho_2^* /|\rho_2| & q_0 - n
        \end{bmatrix},
\end{align}
 for some $\rho_1,\rho_2,n$, and $s$,  where $n$ and $s$ depend on the measurement operator $M_0$ and need to be determined by solving the optimization problems. We also define the following parameters:
    \begin{align} \label{eq:E1}
        E_1(n,s) &:=\sqrt{\left(n-\frac{q_0}{2}\right)^2 +s^2}, \qquad \qquad \qquad \qquad  E_2(n,s) :=\sqrt{\left(n-\rho_1 + \frac{1-q_0}{2}\right)^2 +(s - |\rho_2|)^2},\\ 
        a &:= 1 - \frac{4 |\rho_2|^2}{1+2k},
        \qquad\qquad\qquad\qquad \qquad \qquad \qquad b := \frac{2|\rho_2|(2\rho_1 -1)}{1+2k},\\
        c &:= q_0 \left(\rho_1 - 1 + \frac{2|\rho_2|^2}{1+2k} \right) + \braket{1|\rho^2}{1} - 1 +D,\ \
         k := \sqrt{\det{\rho}}.
    \end{align}
For the qubit measurement system, let $D_{OT}:= D(R = \infty, R_c, \rho||\text{Bern}(q_1))$ denote the QC optimal transportation cost, which is characterized in Theorem \ref{th:qubit_optimal_transport} in the next subsection. 
With these definitions of the parameters in place, we can characterize the OC rate-distortion function as follows.

\begin{theorem} \label{th:qubit_ratedist_theorem}
    For the case of qubit input state $\rho$ 
    the OC rate-distortion function and the corresponding optimal POVMs $M_0, M_1$ are provided as follows: 
    \begin{align} \label{eq:rate-dist-function-theorem}
        R(D;\infty, \rho||\text{Bern}(q_1)) &=\begin{cases}
            H(\rho) - q_0 H\left(N_{opt}/q_0\right) - (1 - q_0) H\left((\rho - N_{opt})/q_1  \right), & D_{OT}\leq D \leq D_{R_0},\\
            0, &  D_{R_0} \leq D
        \end{cases} 
        \end{align}
        where    
        $D_{R_0} := 1 - q_0 \braket{0}{\rho^2 |0} -  (1 - q_0) \braket{1}{\rho^2 |1}$. In the first case, POVM elements are $M_0=q_0I$, and $M_1=q_1I$. In the second case, they are 
    $M_0 = \sqrt{\rho}^{-1} N_{opt} \sqrt{\rho}^{-1}, \quad M_1 = I - M_0$.  
    Also, $N_{opt}/q_0 = N(n,s)/q_0$ with the optimal $n,s$ values, is the optimal PMR state conditioned on outcome zero, and that optimal $n,s$ satisfy the transcendental system of equations:
    \begin{align}
    \begin{cases}
        \frac{-as + b(n - q_0/2)}{E_1}\ln{\frac{q_0/2 + E_1}{q_0/2 - E_1}}
        +  \frac{-a (s - |\rho_2|) + b(n - \rho_1 + \frac{1-q_0}{2})}{E_2} \ln{\frac{\frac{1 - q_0}{2} + E_2}{\frac{1 - q_0}{2} - E_2}}&=0.\\
        a n + b s + c &=0.
    \end{cases}
    \end{align}

\end{theorem}
\begin{proof}
See Appendix \ref{appB}.
\end{proof}

\begin{remark}
 The following two special source states result in interesting  $N_{opt}$ optimal matrices.  
     (I) Pure input state: 
        For the case of pure input state, the rate-distortion curve reduces to a single point where the rate is $R =0$, the optimal $N_{opt} = q_0 \rho$ and $D = D_{R_0}$. This is because the pure input state has no correlation with the reference, so the receiver can simply use local randomness in its decoder.  
 (II)    Diagonal (among canonical eigen-basis) quantum  input state: 
        In this case, the optimal operator $N_{opt}$ will also be diagonal. 
\end{remark}

        \ifmycomments
        {\color{blue}
        One can simply find that in this case when $D \leq D_{R_0}$, the optimal operator is given by
        \begin{align*}
            N^{cl}_{opt} = \begin{bmatrix}
                1 - D + (1-\rho_1)( q_0 + \rho_1 - 1) & 0\\ 0 & D + \rho_1(q_0 + \rho_1 - 2)
            \end{bmatrix}.
        \end{align*}
        }
        \fi

\subsubsection{Optimal Transport for Qubit Measurement System}
The optimal transport scheme provides the minimum achievable distortion when the rate of information is not limited. The following theorem provides this value for the problem of the qubit measurement system.
\begin{theorem} \label{th:qubit_optimal_transport}
Defining the parameter $Q:=  \frac{\rho_1 - 1/2}{\sqrt{1 - 4|\rho_2|^2}}$, the optimal transportation cost $D_{OT}$ is given by:
    \begin{align}
    D_{OT} &= \begin{cases}
        q_0(1-\rho_1) + \det(\rho) + \frac{1-q_0}{2}\left(1 - \sqrt{1 - 4|\rho_2|^2}\right), &  \text{if }  \quad  Q \leq \frac{ \det(\rho)}{1 - q_0} - \frac{1}{2}\\
        (1-q_0)\rho_1 + \det(\rho) + \frac{q_0}{2}\left( 1 - \sqrt{1 - 4|\rho_2|^2}\right), & \textbf{ if }\quad Q \geq \frac{1}{2} -  \frac{\det(\rho)}{q_0}\\
        1 - q_0 \left(\rho_1 - 1 + \frac{2|\rho_2|^2}{1+2k}\right) - \braket{1|\rho^2}{1} - a \alpha - b \beta, & \text{ if } \quad \frac{ \det(\rho)}{1 - q_0} - \frac{1}{2} \leq Q \leq \frac{1}{2} -  \frac{\det(\rho)}{q_0}
    \end{cases}
\end{align}
where $\alpha$ and $\beta$ are the parameters given below and $\Delta' :=\left(q_0(1 - q_0) - \det(\rho)\right)\det(\rho)$,
\begin{align}
    \alpha &= \frac{\left(q_0 - 2\det(\rho)\right) |\rho_2| + \text{sgn}\{a-b\} (1 - 2\rho_1) \sqrt{\Delta'}}{4|\rho_2|^2 + (1 - 2\rho_1)^2},\nonumber\\
    \beta &= \frac{2q_0 |\rho_2|^2 - (1 -2\rho_1)(\rho_1 q_0 - \det(\rho)) + \text{sgn}\{a-b\}2|\rho_2|\sqrt{\Delta'}}{4|\rho_2|^2 + (1 - 2\rho_1)^2}.\nonumber
\end{align}
\ifmycomments
{\color{blue}
\begin{enumerate}
    \item if  $\quad  Q \leq \frac{ \det(\rho)}{1 - q_0} - \frac{1}{2}$ then

        \begin{align}
            D_{OT} &= q_0(1-\rho_1) + \det(\rho) + \frac{1-q_0}{2}\left(1 - \sqrt{1 - 4|\rho_2|^2}\right),\\
            s_{OT} &= \frac{b}{\sqrt{1 - 4|\rho_2|^2}}\frac{1-q_0}{2} + |\rho_2|, \quad
            n_{OT} =  \left(\frac{a}{\sqrt{1 - 4|\rho_2|^2}}-1\right)\frac{1-q_0}{2}+ \rho_1. \nonumber
        \end{align}
        
    \item Else if $\quad Q \geq \frac{1}{2} -  \frac{\det(\rho)}{q_0}$ then
        \begin{align}
        D_{OT} &= (1-q_0)\rho_1 + \det(\rho) + \frac{q_0}{2}\left( 1 - \sqrt{1 - 4|\rho_2|^2}\right),\\
            s_{OT} &=\frac{b}{\sqrt{1 - 4|\rho_2|^2}} \frac{q_0}{2},
            \quad n_{OT} = \left(\frac{a}{\sqrt{1 - 4|\rho_2|^2}} + 1\right) \frac{q_0}{2}.\nonumber
        \end{align}
    
    \item Else if $\quad \frac{ \det(\rho)}{1 - q_0} - \frac{1}{2} \leq Q \leq \frac{1}{2} -  \frac{\det(\rho)}{q_0}$ then,      
            \begin{align}
            D_{OT} &= 1 - q_0 \left(\rho_1 - 1 + \frac{2|\rho_2|^2}{1+2k}\right) - \braket{1|\rho^2}{1} - a n_{OT} - b s_{OT},\\
                s_{OT} &= \alpha,\nonumber\\
                n_{OT} &= \beta.\nonumber
            \end{align}
        where $\Delta' :=\left(q_0(1 - q_0) - \det(\rho)\right)\det(\rho)$.
\end{enumerate}
}
\fi
\end{theorem}
\begin{proof}
See Appendix \ref{appC}.
\end{proof}
Interestingly, when the input state is prepared with the diagonal density operator along the eigenbasis of the output state (i.e. $\rho_2 =0$), the optimal matrices $N_{opt}$ and $\rho- N_{opt}$ encompasses the classical binary optimal transport scheme. One can see that the first and second conditions reduce to $q_0\geq \rho_1$ and $q_0 < \rho_1$ and the third condition is empty. However, the distortion value will be obviously different.

\ifmycomments
{\color{blue}
\subsubsection{Minimum Required Rate for Optimal Transport}
Finally, it is important to note that an unlimited communication rate is not required for optimal transport. The minimum required rate $R_{\text{min, OT}}$ for the optimal transport scheme can be obtained by substituting optimal values $s_{OT}$ and $n_{OT}$ in \eqref{eq:rate-dist-function-theorem}, which results
\begin{align}
    R_{\text{min, OT}} = H(\rho) - q_0 H_b\left(\frac{1}{2} - \frac{E_1(n_{OT},s_{OT})}{q_0}\right) - (1- q_0)H_b\left(\frac{1}{2} - \frac{E_2(n_{OT},s_{OT})}{1 - q_0}\right).
\end{align}
where $H_b(.)$ is the binary entropy function, and $E_1(n_{OT},s_{OT})$ and $E_2(n_{OT},s_{OT})$ are functions defined in \eqref{eq:E1}.
}
\fi

\subsection{Numerical Results}
We used the CVX package \cite{cvx,gb08} to find numerical solutions for the examples of this convex optimization problem. Also  \cite{cvxquad} provides the CVX functions for the von-Neumann entropy functions. The OC rate-distortion function is numerically evaluated for the following set of examples (a), (b), (c) and (d) with fixed $q_0=1/2$ and $\rho_1 = 1/2$ parameters and different off-diagonal values: 
\begin{align*}
      &\text{Ex1: } \rho_a = \begin{bmatrix}1/2 &0\\ 0 & 1/2\end{bmatrix}, 
       &&\text{Ex2: }\rho_b = \begin{bmatrix}1/2 &0.132 - 0.036i\\ 0.132 + 0.036i & 1/2\end{bmatrix},\\
      &\text{Ex3: }\rho_c = \begin{bmatrix}1/2 &0.075 - 0.231i\\ 0.075 + 0.231i & 1/2\end{bmatrix}, 
     &&\text{Ex4: }\rho_d = \begin{bmatrix}1/2 &-0.140 - 0.387i\\ -0.140 + 0.387i & 1/2\end{bmatrix}, 
\end{align*}
with the purity values $\Tr{\rho_a^2} =  0.500$, $\Tr{\rho_b^2} =  0.537$, $\Tr{\rho_c^2} =  0.618$ and $\Tr{\rho_d^2} =  0.839$.
These rate-distortion functions are plotted in Figure \ref{fig:figure_ratedist1}, which shows that starting from a maximally mixed state (Ex.1.), as the source state becomes purer, it requires less communication rate to maintain the same level of entanglement fidelity. 

In the case of a pure source state, the rate-distortion function reduces to a single point at the no transmission rate. This is intuitively acceptable as the pure state is independent of the reference state. So the receiver can generate random outcomes independent of the source. However, the entanglement fidelity distortion will not be zero because the measurement collapses the state into deterministic outcomes and hence it will not fully recover the source state. On the contrary, the maximally mixed state has the maximum dependence on the reference state which requires the maximum rate of transmission to recover the state with the same level of distortion. 
\begin{figure}[ht]
\centering
\includegraphics[width=0.7\textwidth]{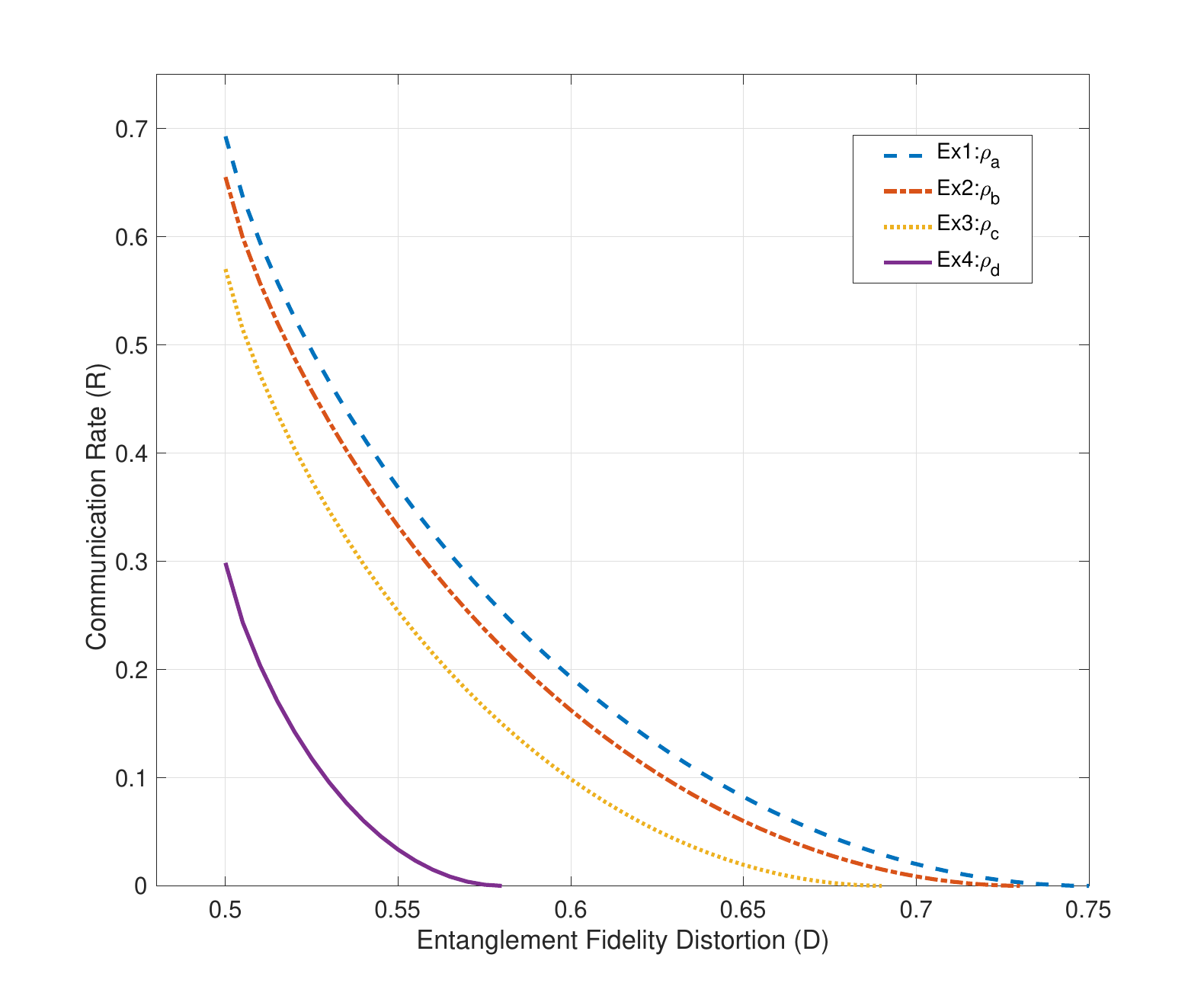}
\caption[Plot of OC Rate-Distortion Function for a Qubit System]{The OC rate-distortion function with unlimited common randomness for the examples}
\label{fig:figure_ratedist1}
\end{figure}

\ifmycomments
\mynote{Interesting facts: For the case with diagonal input state, all of the optimal operators $N$ are also diagonal as well. I must investigate if in this case, this rate distortion is the same as the classical case or not.}

\mynote{Another fact is that for all pure states, the rate-distortion is a single point and $R=0$. Is that right?}

\myqu{Just want to double check. If $N = \sqrt{\rho} M_0 \sqrt{\rho}$ is psd, then $M_0$ is also psd? }\myans{Yes, but this is only true when $\rho $ is not singular matrix. The proof simply follows from the definition of semidefiniteness:
\begin{align}
    X \succeq 0 \; \Longleftrightarrow \; z^H X z \geq 0 \quad \forall z \in \mathbb{C}^n
\end{align}}

\myqu{Why even for the classically quantum case with same input and output classical distributions, the optimal transport distance is not zero? Is there a bug with the distortion constraint?}\myans{This happens as a result of entanglement fidelity distortion. Although the input is considered classical, still it is a classically quantum state. As performing a measurement eliminates the connection with the reference, there are always some levels of distortion. The only case in which there is no distortion is when the input is pure state and the output is also deterministic. In general, if we would like to achieve zero distortion, a quantum channel should be used instead of the current classical channel.}
\fi


\section{Evaluation of the Gaussian Quantum States} \label{sec:evaluation_gaussian}

\subsection{A Brief Overview of Gaussian Quantum Systems}
Before providing the evaluation of the Gaussian systems, we first introduce the principal definitions and provide a brief overview of the Gaussian quantum systems.
\subsubsection{Gaussian Quantum Systems}\label{subsec:overview_gaussian_systems}
Let $\~H_A \equiv \~L^2(\mathbb{R}^s)$ be the separable infinite-dimensional Hilbert space of the square-integrable functions $\psi(\eta)$ for $\eta \in \mathbb{R}^s$ corresponding to $s$ Harmonic oscillators. 
The canonical observable operators of this system are $Q_1$, $P_1$, $Q_2$, $P_2, ..., Q_s$, $P_s$ with continuous eigenspectra, where $Q_i, P_i$ are the position and momentum quadrature operators of the $i$-th harmonic oscillator. In this section, we consider the energy-bounded harmonic oscillator system as described by \cite{depalma_quantum_optimal_transport_2021,Holevo_2019_quantum_systems_book}. Then referring to Lemma 11.5 of Holevo's \cite{Holevo_2019_quantum_systems_book}, we note that a subset of energy-bounded states $\~G_E \subset \~G(\~H)$ with a corresponding energy operator $F$ defined as
$\~G_E : = \{S: \Tr{SF} \leq E\}$,
is compact.
 The canonical observables satisfy the Canonical Commutation Relation (CCR) resulting from the Heisenberg uncertainty principle. For more convenience, we combine the quadrature operators as elements of a $2s\times 2s$ vector operator,
     $R_1 = Q_1,\; R_2 = P_1 ,\;  \cdots, \; R_{2s-1} = Q_{m}, \; R_{2s} = P_{s},$
which redefines the CCR as $[R_i,R_j] = i \Delta_{ij} I_\~H,$ for $i,j = 1 , ...,2m,$
with $\Delta$ being a non-degenerate skew-symmetric symplectic matrix defined as
$         \Delta = \bigoplus_{k=1}^s \begin{bmatrix}
             0 & 1 \\ -1 & 0
         \end{bmatrix}.$
     
The Weyl operator is further defined by $W(z) = \exp{iRz} $ where $z:= [q_1, p_1, ..., q_s, p_s], z\in \mathbb{R}^{2s}$ is a vector of values in some phase space corresponding to the eigenvalues of the quadrature operators.
The displacement operator is also defined by $D(m) := W(\Delta^{-1} m)$.
The \textit{Wigner characteristic function} is further given by $\phi_\rho (z) = \Tr{\rho W(z)}$. The domain of the Wigner function $\~Z = \mathbb{R}^{2s}$ together with the symplectic matrix $\Delta$ form a \textit{symplectic space} $\~K := (\~Z, \Delta)$ which is called the phase space.

\ifmycomments
{\color{blue}
We define the Weyl operator $W(z) = \exp{iRz} $ where $z:= [q_1, p_1, ..., q_s, p_s], z\in \mathbb{R}^{2s}$ is a vector of eigenvalues of the corresponding quadrature operators. The CCR can be characterized using the Weyl operators in the following Weyl-Segal format:
\begin{align}
    W(z) W(z') = \exp{-\frac{i}{2} \Delta(z,z')}W(z + z')
\end{align}
where 
\begin{align} \label{eq:bilinear_form}
    \Delta(z,z') = z^T \Delta z
\end{align}
is the canonical symplectic form and $\Delta$ is the symplectic matrix. Then for a density operator $\rho$ of an arbitrary quantum state, the Wigner characteristic function is defined as
\begin{align}
    \phi_\rho (z) = \Tr{\rho W(z)}.
\end{align}
By taking a Fourier transform of the characteristic function with respect to a variable $x \in \~X \equiv \~Z = \mathbb{R}^{2s}$, we obtain the Wigner quasi-probability distribution
\begin{align} \label{eq:wigner_quasi_prob}
    \Phi_\rho(x) = \int_Z  \phi_\rho (z) \exp{-ix^T z} \frac{d^{2s} z}{(2\pi)^{2s}}.
\end{align}
which is a real function normalized to 1, but not necessarily non-negative. 
For a detailed definition of the Wiegner function see \cite[Section II.A]{weedbrook_gaussian_2012}  and \cite[Chapter 4]{serafini_quantum_2017}.

 Hence, the Wigner characteristic function and Wigner distribution are the unique phase space representations of the quantum state $\rho$ according to Stone-von Neumann's uniqueness theorem.
A transformation matrix $T$ is called a symplectic transformation if it maps the symplectic space into itself; i.e. preserving the symplectic form
$\Delta(Tz,Tz') = \Delta(z,z'),$ for all $ z, z' \in \~Z.  $
}
\fi
For a density operator $\rho$, the mean vector and covariance matrix are given by 
    $m = \Tr{\rho R}$, and  $\alpha - \frac{i}{2} \Delta = \Tr{(R - m)^T\rho \ (R- m)}$,
respectively, where $\alpha$ is the covariance matrix of the Wigner quasi-probability distribution. 
Further, the following inequality holds (in the positive semi-definite sense) for the covariance matrix of the quantum state which is a result of the uncertainty principle:
\begin{align} \label{eq:uncertainty_relation}
    \alpha \geq \pm \frac{i}{2} \Delta.
\end{align}

The eigenvalues of $\alpha$ depended on the, and have no significance. However, for the operator $\tilde \alpha = \abs{\Delta^{-1} \alpha}$, the eigenvalues $\pm i\gamma_j$, for $j=1,\cdots,s$ are called the \textit{symplectic eigenvalues}. The uncertainty relation \eqref{eq:uncertainty_relation} translates to $\alpha_j>1/2$
or described as $\left( \Delta^{-1} \alpha \right)^2 \geq - \frac{1}{4} I$,
with equality when the system is in a pure state \cite{holevo2001evaluating-capacities-bosonic}.
In the special case when the state is a product state of the modes, which is when $\alpha$ is a block-diagonal  of $s$ separate 2-by-2 matrices, the symplectic eigenvalues are simply the determinant of each mode 
    $\gamma_j = \det{\alpha_j}$ \cite{holevo_hirota1999capacityofquantumgaussianchannels}.
The quantum Gaussian state is defined on $\~H$ as a state whose Wigner characteristic function and the Wigner quasi-probability function are Gaussian and given by:
\begin{align*}
    \phi_\rho(z) &= \exp[-\frac{1}{2} z^T \alpha z + i m^T z ],\quad \Phi_\rho(x) = \frac{\exp[ -\frac{1}{2} (z- m)^T \alpha^{-1}(z- m)]}{(2\pi)^s \sqrt{\det \alpha}} .
\end{align*}
Similar to classical Gaussian distributions, for a Gaussian quantum state, the displacement and covariance matrix are sufficient to fully represent the Gaussian state. For a more detailed review of the Gaussian quantum systems see \cite{Holevo_2019_quantum_systems_book,weedbrook_gaussian_2012,hafez2023thesis,serafini_quantum_2017}.

\subsubsection{Gaussian Observables}

The measurement POVM $M$ has a general form as defined in \cite{Holevo_2019_quantum_systems_book}. For any observable  POVM $M$ with the PMR ensemble $\{\rho_z, \pi_Z(dz)\}_{z\in \~Z}$, the information gain function is defined as 
\begin{align}
    I_g(\rho_N;M) = H(\rho_N) - \int_\~Z H(\rho_z) \pi_Z(dz).
\end{align}
An important group of observables is the general form of the covariant Gaussian observable as provided in \cite{holevo2020-gaussian-maximizers-observables}:
\begin{align} \label{eq:general-gaussian-observ}
    \tilde{M}(d^{2s} z) = D(Kz) \rho_G D(Kz)^\dagger \frac{\abs{\det K}^2 d^{2s}z}{\pi^s}, 
\end{align}
where $\rho_G$ is a density operator that is the parameter of the Gaussian observable and is different from the input density operator $\rho_N$. The von-Neumann entropy of the general Gaussian state with covariance matrix $\alpha$ is given by \cite{holevo2001evaluating-capacities-bosonic},
\begin{align} \label{eq:gaussian-quantum-entropy}
    H(\rho_N) = H_G(\alpha) := \frac{1}{2} \text{Sp}  g\left(\abs{\Delta^{-1} \alpha} - \frac{I}{2}\right).
\end{align}
where $\sp{.}$ is the matrix trace as opposed to the $\Tr(.)$ the trace in Hilbert space and that $g(x) = (x+1) \log(x+1) - x\log x$, for $x>0$, and $g(0 )  = 0$
is the Gordon function.

\subsubsection{Distortion Observable}
To introduce the distortion observable, we first provide the following definitions. Let $\~H^*$ be the space of continuous linear functionals on Hilbert space $\~H$ \cite{depalma_quantum_optimal_transport_2021}. We consider the following definition of the transposition.
\begin{definition}
    For any operator $X\in \~L(\~H)$, let $X^T$ be the linear operator on $\~H^* $ given by 
    \begin{align}
        X^T \bra{\phi} = \bra{\phi} X \quad \text{for all } \bra{\phi} \in \~H^*.
    \end{align}
\end{definition}
For finite-dimensional Hilbert spaces, this reduces to the definition of transpose specified in  \eqref{eq:transpose_finite_case}.

\begin{remark}
   Assume having a source state $\rho \in \~H_A$ and a measurement POVM with outcomes $z \in \mathcal{Z}$. Then the conditional PMR states 
   $(\hat \rho_z^\~T)^R$ and the unrevealed PMR state $(\rho^\~T)^R$ live in 
   $\~H^*$
\cite{wilde2012information_theoretic_costs, depalma_quantum_optimal_transport_2021}. 
    Furthermore, as was mentioned in remark \ref{remark:transpose_entropy}, the entropic quantities are invariant under the transposition.
\end{remark}

The following distortion observable operator was introduced by \cite{depalma_quantum_optimal_transport_2021} using the quadrature operators,
\begin{align} \label{eq:distortion_operator}
    C = \frac{1}{2s} \sum_{i=1}^{2s} (R_i^\~T \otimes I_\~H - I_{\~H^*} \otimes R_i)^2.
\end{align}
This operator was defined to address the problem of quantum optimal transport for Gaussian quantum states.
The significance of the operator is that it acts on the composite state of the reference and the destination system in the $\~H^* \otimes \~H$ space.
As a result, the optimal coupling that minimizes this distortion measure has a unique correspondence to the physical quantum channels. 
Note that, in a fully quantum setting, applying this observable to a purified state $\Psi^{RA}$ yields zero distortion, i.e., $\Tr{\Psi^{RA}C}$=0, because the observations on the local state and on the reference state have equal values with probability $1$.
The transpose operation is essential because the reference state resides in the $\~H^*$ Hilbert space, therefore, if we want to compare the outcomes of the states, we must use the $R_i^\~T$ operator for the reference so that $\Tr{R^\~T \rho^\~T}  = \Tr{R \rho}$. 
For a more detailed explanation refer to \cite{hafez2023thesis, depalma_quantum_optimal_transport_2021}.

\ifmycomments
{\color{blue}
\begin{lemma}
    Let $\rho \in \~H_A$ be the source state and $\~M \equiv \{M\}_u\in \~U$ be a measurement POVM where $\~U$ is the space of the outcomes. Then, the conditional post-measured reference states and the unrevealed post-measured reference state are all in the transpose Hilbert space $\~H^*$.
\end{lemma}
\begin{proof}
    Let $\Lambda = M_u^\dagger M_u$, and $\lambda_u = P(U = u) = \Tr{M_u \rho}$. Then the state of the system after measurement given that $\{X = u\}$ is 
    \begin{align}
        \ket{\psi_u^{RA'}} := \frac{(I \otimes M_u) \ket{\phi_{RA}}}{\sqrt{\lambda_u}},
    \end{align}
    and its density operator is 
    \begin{align}
        \psi_u^{RA'}  = \frac{(I \otimes M_u) \ketbra{\phi_{RA}}{\phi_{RA}} (I \otimes M_u^\dagger)}{\lambda_u}.
    \end{align}
    Then the state of the reference given that $\{U = u\}$ is
    \begin{align}
        \Trxx{A'}{\psi_u^{RA'}} &= \frac{1}{\lambda_u} \sum_{i,j} \braket{j|\sqrt\rho M_u^\dagger M_u \sqrt{\rho}}{i}_{A'} \ketbra{i}{j}_R\\
        &= \frac{\left(  \sqrt{\rho} \Lambda_u \sqrt{\rho}  \right)^\~T}{\lambda_u} :=  \left(\hat \rho_u^\~T \right)^R
    \end{align}
    If the outcomes are not revealed, then the reference state is 
    \begin{align}
        \sum_u \lambda_u \frac{\left(  \sqrt{\rho} \Lambda_u \sqrt{\rho}  \right)^\~T}{\lambda_u} = {(\rho^\~T)}^R.
    \end{align}
    The proof simply extends to the continuous space POVMs as well.
\end{proof}
}
\fi
 We use this distortion observable that acts on $\~H^* \otimes \~H$ and modify it in a way that fits our QC Gaussian measurement system. 
Consider a POVM $\Lambda$ with outcomes in $\mathcal{Z}$.
Let $\pi_Z(B) = \Tr{\rho^A \Lambda(B)}$ for all $B\in \~B(\~Z)$. 
We define the distortion of the PMR ensemble $\{\{\hat \rho_z\}_{z\in\~Z},\pi_Z\}$, in a similar fashion to the distortion observable mapping $x\mapsto\Delta_R(x)$, described in Subsection \ref{subsec:distortionmeasure} as 
\begin{align} 
    d(\rho_A, \Lambda) 
    &:= \frac{1}{2s} \sum_{i=1}^{2s} \int_\~Z  \Bigl[ \Tr{\hat\rho_z^R (R_i - \bar m_i(z) )^2} + \braket{z| (R_i - \bar m_i(z) )^2}{z} \Bigr] \pi_Z(dz) \label{eq:distortion_continuous_firstdefinition}\\
    &= \frac{1}{2s} \sum_{i=1}^{2s} \int_\~Z  \Bigl[ \left\{\Sigma(z)\right\}_{ii} + (z_i - \bar m_i(x) )^2 \Bigr] \pi_Z(dz)\\
    &= \frac{1}{2s} \int_{\~Z}  \left(\text{Sp}\{\Sigma(z)\}  +  {\norm{z - \bar m(z) }_2^2} \right) \pi_Z(dz) := d(\rho_Z, \pi_Z),
\end{align} 
where $\bar m_i(z) = \Tr{\hat \rho^R_z R_i } = \Tr{{(\hat \rho_z^\~T)}^R R^\~T_i }$ is the first moment and $\Sigma(z)$ is the covariance matrix of the state $\hat{\rho}^R_z$. 
Further, note that a Gaussian measurement system with the above distortion measure, satisfies the uniform integrability property defined in \ref{def:uniform-integ}. For the proof see Appendix \ref{appA-uniforminteg}.

\subsection{Problem Formulation}
In this subsection, we formulate the QC optimal transport for the case of Gaussian quantum systems with Gaussian source states and Gaussian output distribution. 
Recall the single-letter conditions of the rate pair in Theorem \ref{th:continuous-theorem}.
We further restrict our evaluation to the systems with unlimited common randomness $(R_c = \infty)$. In that case, 
we can restrict ourselves to $W=Z$ as in Section \ref{sec:evaluations_qubit}.
Moreover, due to the ensemble-observable duality \cite{holevo2020-gaussian-maximizers-observables}, instead of searching for the optimal measurement itself, we can search for the optimum measurement outcome ensembles $\{\{\hat \rho_z\}_{z \in \~Z},\pi_Z\}$.
In that case, the \textit{OC rate-distortion function} is defined by the following optimization problem:
\begin{align}
    R(D; R_c = \infty, \rho || \pi_Z) := &\min_{\{\hat \rho_z:\ z \in \~Z\}} \left[  H(\rho) - \int_\~Z H(\hat \rho_z) \pi_Z(dz) \right]\\
    \text{subject to: } &\int_\~Z \hat \rho_z \pi_Z(dz) = \rho,\\
    &\frac{1}{2s} \int_{\~Z}  \left(\mathrm{Sp}\{\Sigma(z)\}  +  {\norm{z - \bar m(z) }_2^2} \right) \pi_Z(dz) \leq D.
\end{align}

\begin{definition}\label{def:ratelim_wasserstein}
The \textit{rate-limited Wasserstein distance} $ W_2 (R, \rho || \pi_Z)$ of the system is defined as the square root of the inverse of the above OC rate-distortion function within the proper range.
 The QC Wasserstein distance of order 2 is defined by $W_2 (\rho || \pi_Z) := W_2 (\infty, \rho || \pi_Z)$.
 Then the rate of QC Wasserstein distance is defined as the lowest rate achieving the Wasserstein distance:
\begin{align}
    R_{W_2}(\rho ||\pi_Z):= \inf\Big\{R: W_2 (R, \rho || \pi_Z) = W_2 ( \rho || \pi_Z)\Big\}.
\end{align}
\end{definition}



\subsection{Main Result: Optimality of Gaussian Quantum Measurement}
We provide a Gaussian measurement optimality theorem which shows that for the aforementioned OC rate-distortion quantum measurement system, with Gaussian source state and Gaussian output distribution and unlimited common randomness, the Gaussian observables achieve the optimal rate.  We first provide the following lemma which is the QC version of the \textit{law of total variance}. Consider a measurement that is applied on a source state $\rho$ and results in the PMR ensemble $\{\{\hat \rho_z^\~T\}_{z\in\~Z}, \pi_Z\}$. In this setting,  $\bar m(Z) = \Tr{\hat \rho_Z R}$ can be interpreted as the classical estimator of the source state $\rho$.
\footnote{This is analogous to the estimator $\tilde X = \Ex{X|Y}$ for an observation $Y$ of a source $X$ in a fully classical system.} 
The following lemma provides the relation for the covariance matrix of this estimator.

\begin{lemma}\label{lemma:lawoftotalvariance} \textbf{(QC Law of Total Variance)} For a source quantum state $\rho$  with first moment $\bar m_\rho$ and covariance $\Sigma_\rho$ and a measurement POVM with the corresponding PMR ensemble $\{\{\hat \rho_z^\~T\}_{z\in\~Z},\pi_Z\}$, we have,
\begin{align}
    \Sigma_\rho = \int \hat \Sigma(z) \pi_Z(dz) + \tilde \Sigma,
\end{align}
where,
\begin{align}
    \Sigma_\rho - \frac{i}{2} \Delta =& \Tr{\rho (R- \bar m_\rho)(R - \bar m_\rho)^T}\\
    \hat \Sigma(z) - \frac{i}{2} \Delta =& \Tr{\hat \rho_z(R - \bar m(z))(R - \bar m(z))^T }\quad  \text{for all } z\in \~Z\\
    \tilde \Sigma :=& \text{Cov}(\bar m(Z)) = \Exx{Z}{\bar m(Z) \bar m(Z)^T} - \bar m_\rho \bar m_\rho^T 
\end{align}
are the covariance matrix of the source, the conditional covariance of PMR state $\hat \rho_z$ and the covariance of the classical estimator $\bar m(Z)$ respectively.
\end{lemma}
\begin{proof}
Starting with the second moment of the source state 
    \begin{align}
    \Tr{\rho R R^T} &= \Tr{\left(  \int \hat \rho_z \pi_Z(dz)  \right) R R^T} = \int \left(\hat \Sigma(z) - \frac{i}{2} \Delta + \bar m(z) \bar m(z)^T \right) \pi_Z(dz),
\end{align}
then the covariance of the source state is 
\begin{align}
    \Sigma_\rho &= \Tr{\rho R R^T} - \bar m_\rho \bar m_\rho^T + \frac{i}{2} \Delta = \int \left(\hat \Sigma(z)  \right) \pi_Z(dz) + \Exx{Z}{ \bar m(z) \bar m(z)^T}- \bar m_\rho \bar m_\rho^T.
\end{align}
\end{proof}

Using the above lemma we provide the following Gaussian optimality theorem.
\begin{theorem}\label{th:gaussian_optimality_theorem}
In a quantum measurement system, suppose having a Gaussian quantum source state $\rho \sim \~{QN}(\bar m_\rho, \Sigma_\rho)$ and a classical Gaussian output distribution $\pi_Z \equiv \~N(\mu_Z, \Sigma_Z)$. Also, suppose using the quadratic distortion observable operator \eqref{eq:distortion_operator} as the distortion measure. Then, for any feasible distortion value $D$, the information gain $I_g(R;Z)$ is minimized by a Gaussian observable. This optimal Gaussian observable has the output ensemble representation $\{\tilde \rho_{N_G,z}, \pi_Z(dz)\}_{z\in \~Z}$ with post-measured states
\begin{align}
    \tilde \rho_{N_G,z}:= D(Kz) \rho_{N_G} D(Kz)^\dagger
\end{align}
where $\rho_{N_G} \sim \~{QN}(0,\Sigma_N)$ is a zero-mean Gaussian state representing the measurement noise with some $\Sigma_N$ 
such that $\Sigma_N \leq \Sigma_\rho$ to be determined and $K$ is a transformation matrix of the form
\begin{align}\label{eq:K-classical-optimal-transport-mapping}
    K = \Sigma_Z^{-1/2} \left(  \Sigma_Z^{1/2} (\Sigma_\rho - \Sigma_N)  \Sigma_Z^{1/2} \right)^{1/2}  \Sigma_Z^{-1/2}.
\end{align}
\end{theorem}
 \begin{proof}
With no loss of generality, we first assume that the quantum source state and the classical output distribution are both transported to the origin having zero means.
Assume having an arbitrary set of PMR states $\{\hat \rho_z\}_{z\in \~Z}$ with mean values $\bar m(z)$ and covariance matrices $\hat \Sigma(z)$. According to lemma \ref{lemma:lawoftotalvariance}, the following relation holds for the covariances:
\begin{align} \label{eq:lawofvariance_intheorem}
    \Sigma_\rho = \int \hat \Sigma(z) \pi_Z(dz) + \tilde \Sigma.
\end{align}
We further define a centralized version of the PMR states 
\begin{align}
    \hat \rho_{c,z} := D^{\dagger}(\bar m(z)) \ \hat \rho_z \ D(\bar m(z)).
\end{align}
 Thus, we have the following upper bound for the conditional entropy of PMR states
\begin{align}
    H(R|Z) =  \int H\left(\hat \rho_{c,z }\right) \pi_Z(dz)
    \leq   H\left( \hat \rho_N := \int \hat \rho_{c,z } \ \pi_Z(dz)\right)
    \leq H(\tilde \rho_{N_G}), 
    \label{eq:conditional-entropy-ineq}
\end{align}
where the first inequality follows from the concavity of von Neumann entropy \cite[Property 11.4.1]{wilde2013quantum_information_theory_book}. We defined $\hat \rho_N$ to be the average state with zero mean and the covariance matrix $ \Sigma_N$, where
\begin{align}
     \Sigma_N &= \Tr{\hat \rho_N R R^T } 
     = \Tr{\left(  \int D^{\dagger}(\bar m(z)) \ \hat \rho_z \ D(\bar m(z)) \pi_Z(dz)  \right) R R^T } 
     = \int \hat \Sigma(z) \pi_Z(dz). 
     \label{eq:noise_variance}
\end{align}
Then by comparing \eqref{eq:noise_variance} with \eqref{eq:lawofvariance_intheorem}  we find the covariance of the classical estimator,
   $ \tilde \Sigma =  \Sigma_\rho -  \Sigma_N$.

Also in \eqref{eq:conditional-entropy-ineq}, in the last inequality, $\tilde \rho_{N_G}$ is introduced as a zero-mean Gaussian quantum state with the same covariance matrix $ \Sigma_N$  which appeals to the quantum Gaussian entropy maximization theorem \cite[Property (i)]{holevo_hirota1999capacityofquantumgaussianchannels}. Further, the distortion function for the arbitrary PMR ensemble is
\begin{align} 
    d(\hat \rho_z, \pi_Z) =  \frac{1}{2s} \ \int \left[\sp{\hat \Sigma(z) } +  \norm{z - \bar m(z)}_2^2 \right] \pi_Z(dz)= \frac{1}{2s} \ \sp{ \Sigma_N} +  \frac{1}{2s} \int \norm{z - \bar m(z)}_2^2 \pi_Z(dz). \label{eq:distortion_continuous_simplified}
\end{align}

We next form a different set of PMR ensembles equivalent to the quantum Gaussian measurement of the form
    $\tilde \rho_{z} = D(K z) \tilde \rho_{N_G} D(Kz)^\dagger$,
with $K$ being some transformation matrix to be determined. 

Note that by using $\tilde \rho_z$, because the covariance matrix $ \Sigma_N$ is fixed, the first term of the distortion \eqref{eq:distortion_continuous_simplified} does not change.
 We further limit the selection of the $K$ matrix such that it satisfies the relation $\tilde \Sigma = K \Sigma_Z K^T$ with the covariance matrix of the estimator. This, again using lemma \ref{lemma:lawoftotalvariance}, makes sure that the marginal constraint is satisfied $\rho = \int \tilde \rho_z \pi_Z(dz)$  because the source quantum state is assumed to be Gaussian and can be fully determined by its first moment and covariance matrix \cite{depalma_quantum_optimal_transport_2021}.
 \ifmycomments
 {\color{blue}
 The detailed proof is as follows. Define $\tilde \rho_\text{avg}:= \int \tilde \rho_z \pi_Z(dz)$. Then for the first moment
 \begin{align}
     \bar m_{\tilde \rho_\text{avg}} = \Tr{\tilde \rho_\text{avg} R} = \int\Tr{ \tilde \rho_z R} \pi_Z(dz) = 0 = \bar m_\rho
 \end{align}
 And for the second moment, from the lemma \ref{lemma:lawoftotalvariance} we have
 \begin{align}
     \Sigma_{\tilde \rho_\text{avg}} &= \int \Sigma(z) \pi_Z(dz) + K \Sigma_Z K^T\\
     &= \Sigma_N + \tilde \Sigma\\
     &= \Sigma_\rho
 \end{align}
 where the second equality is forced by us fixing $K$ such that the relation holds and that for a fixed $z$ the conditional covariance matrix is $\Sigma_N$.
 }
 \fi
 The second term of distortion \eqref{eq:distortion_continuous_simplified} is only a function of $\bar m(z)$. 
 
 So by preserving the covariance matrix $\tilde \Sigma = \Sigma_\rho -  \Sigma_N$, we can provide the following lower bound 
\begin{align}
    \Exx{Z}{\norm{\bar m(Z) - Z}_2^2} \geq W_2^2\left(\~N(\Sigma_\rho -  \Sigma_N) \; , \; \~N(\Sigma_Z)\right) = \Exx{Z}{\norm{Z - KZ}_2^2}.
\end{align}
where the inequality appeals to  
the following arguments. The $l_2$ norm depends only on the first two moments of the distributions, hence, one can replace them to Gaussian distributions. The lower bound on this distance is then the Wasserstein distance between these Gaussian distributions. Then the last equality follows by choosing $K$ to be the transportation that achieves the minimum distance from Gaussian distribution with $\Sigma_Z$ to Gaussian distribution with $\tilde \Sigma$ \cite{knott1984ontheoptimalmappingofdistributions}, i.e., 
 \begin{align} \label{eq:Kmapping}
     K := \Sigma_Z^{-1/2} \left(  \Sigma_Z^{1/2} (\Sigma_\rho -  \Sigma_N)  \Sigma_Z^{1/2} \right)^{1/2}  \Sigma_Z^{-1/2}.
 \end{align}
\ifmycomments
{\color{blue}
\begin{lemma}
    Assuming a classical system having a set of source distributions $\Omega_X (0, \Sigma_X)$ and a set of destination distributions $\Omega_Y (0, \Sigma_Y)$ with zero mean and fixed given covariance, the minimum MSE distance between the two sets is the Wasserstein distance between two Gaussian distributions with given moments. i.e.,
    \begin{align}
        \min_{\substack{X \sim \~X \in \Omega_X \\ Y \sim \~Y \in \Omega_Y}} \Ex{\norm{X- Y}^2} = W_2^2(\~N(0, \Sigma_X), \~N(0,\Sigma_Y)).
    \end{align}
    that is achieved by a linear mapping $Y = K X$ such that 
    \begin{align}
       K = \Sigma_Z^{-1/2} \left(  \Sigma_Z^{1/2} (\Sigma_\rho - \Sigma_N)  \Sigma_Z^{1/2} \right)^{1/2}  \Sigma_Z^{-1/2}.
    \end{align}
The desired conclusion follows from the observation that $\Ex{\norm{X- Y}^2}$ is preserved when  $(X,Y)$ is replaced by the jointly  Gaussian pair $(X_G, Y_G)$ with the same joint covariance matrix.
\end{lemma}
}
\fi
Then the following lower bound holds for the distortion
\begin{align}
    d(\hat \rho_z , \pi_Z) \geq \frac{1}{2s} \sp{\Sigma_N} + \Exx{Z}{\norm{Z - KZ}_2^2}.
\end{align}
This shows that for a fixed noise covariance $ \Sigma_N$, the PMR ensemble of the Gaussian form $\tilde \rho_z$ with $\bar m(z) = K z$, does not decrease the conditional entropy while also preserving the distortion and marginal constraints. This proves that the Gaussian observable achieves the optimality.
\end{proof}

\begin{corollary}
   The following rate and distortion are achievable:
\begin{align}
    I_g(R;X) &= H_G(\Sigma_\rho) - H_G(\Sigma_N),\\
    d(\tilde \rho_{N_G,z},  \pi_Z) &= \frac{1}{2s}\left( \sp{\Sigma_N} + \Exx{\bar \pi_Z}{Z - KZ} + \norm{\bar m_\rho - \mu_Z}_2^2 \right),
\end{align}
where the expectation in the second term is with respect to $\bar \pi_Z(B) = \pi_Z(B - \mu_Z)$ for $B \in \~B(\~Z)$.
\end{corollary}

\subsection{The QC Gaussian Optimal Transport}

To obtain the OC rate-distortion function, by using the above Gaussian optimality theorem, it suffices to find the optimal  noise covariance matrix $\Sigma_N$ as follows:
\begin{align} \label{eq:Gaussian_optimization_problem1}
    R(D; R_c = \infty, \rho|| \pi_Z) &= \min_{\Sigma_N} H_G(\Sigma_\rho) - H_G(\Sigma_N)\\
    \text{subject to:}\quad&  \frac{1}{2s} \ \sp{\Sigma_N} +  \frac{1}{2s} \int \norm{z - Kz}_2^2 \bar \pi_Z(dz) + \frac{1}{2s} \norm{\bar m_\rho - \mu_Z}_2^2 \leq D, \nonumber
\end{align}
where $H_G(\Sigma_\rho)$ is given in \eqref{eq:gaussian-quantum-entropy}, $K$ is given by \eqref{eq:K-classical-optimal-transport-mapping}. By further simplifying the distortion constraint, the optimization problem is reduced to
\begin{align} 
   R(D; R_c = \infty, \rho|| \pi_Z) := &\min_{\Sigma_N}  \frac{1}{2} \text{Sp } g\left(\abs{\Delta^{-1} \Sigma_\rho} - \frac{I}{2}\right) - \frac{1}{2} \text{Sp } g\left(\abs{\Delta^{-1}\Sigma_N} -  \frac{I}{2}\right)  \label{eq:gaussian_optimization_objective}\\
    \text{subject to: }
    &\; \frac{1}{s } \ \sp{ (\Sigma_Z^{1/2} (\Sigma_\rho - \Sigma_N) \Sigma_Z^{1/2})^{1/2}} \geq D_\text{max} - D \label{eq:gaussian_optimization_distortionconst}\\
    &  \pm\frac{i}{2}\Delta \leq \Sigma_N \leq \Sigma_\rho,    \nonumber
\end{align}
where $\abs{A}:= \sqrt{A A^T}$ is the absolute value of the matrix $A$, and where 
\begin{align} \label{eq:Dmax_Gaussian}
    D_\text{max} := \frac{1}{2s} \ \sp{\Sigma_\rho + \Sigma_Z} + \frac{1}{2s} \norm{\bar m_\rho - \mu_Z}_2^2
\end{align}
is the zero-crossing point. This means no transmission rate is required to achieve any distortion value above  $D_\text{max}$; i.e. the distortion tolerance is high enough to allow Bob to generate the output independent of the source according to the given distribution. 
Next, we evaluate this OC rate-distortion function for the following example systems.

\subsubsection{The isotropic Gaussian source and destination}
In this part, we formulate the RLOT problem for the isotropic Gaussian systems. Assume having a Gaussian source state $\rho$ and a Gaussian destination distribution $\pi_Z$ with mean vectors $\bar m_\rho = m_{\rho,1} \bs{1}_{2s}$ and $\mu_Z = \mu_{Z,1} \bs{1}_{2s}$, and covariance matrices $ \Sigma_\rho = \sigma_\rho^2 I_{2s\times 2s}$ and  $\Sigma_Z = \sigma^2_Z I_{2s \times 2s}$, respectively. Then the following theorem directly follows.
\begin{theorem}
    The OC rate-distortion function of the QC Gaussian isotropic system is given by
    \begin{align}
    R\bigg(D; \infty,  \~{QN}(m_{\rho,1}, \sigma_\rho^2) || \~N(\mu_{Z,1} , \sigma_Z^2)\bigg) = s \cdot \left[ g\Big(\sigma_\rho^2 - 1/2 \Big) - g\Big(n^* \sigma_\rho^2 - 1/2\Big)\right]. \label{eq:gaussian_isotropic_rate}
\end{align}
where $n^*(D)$ is the noise-to-signal power ratio of the Gaussian observable
\begin{align}
    n^*(D) = \begin{cases}
        1 - \left(\frac{ D_\text{max} - D }{2 \sigma_\rho \sigma_Z} \right)^2 \quad  & D_\text{OT }\leq D \leq  D_\text{max},\\
        1 & D_\text{max} < D.
    \end{cases}
\end{align}
The upper threshold value 
$D_\mathrm{max}$
is given by \eqref{eq:Dmax_Gaussian} and the lower threshold value is the 2-Wasserstein distance of the QC Gaussian isotropic system given by 
\begin{align} \label{eq:gaussian_isotropic_DOT}
    W_2^2\bigg(\~{QN}(m_{\rho,1}, \sigma_\rho^2) \;, \; \~N(\mu_{Z,1}, \sigma_Z^2)\bigg) = D_\text{OT} = D_\text{max} - 2\sigma_Z \sqrt{ (\sigma_\rho^2 - \frac{1}{2})}.
\end{align}
\end{theorem}
\begin{proof}
    As the marginal covariance matrices are both isotropic, it easily follows that the noise covariance $\Sigma_N$ is isotropic as well. Then let $\Sigma_N = n \sigma_\rho^2 I_{2s\times 2s}$, where $n$ is the variable of optimization. Then by simplifying the objective function in \eqref{eq:gaussian_optimization_objective}, it reduces to the form of \eqref{eq:gaussian_isotropic_rate} with $n^*$ to be determined. Next, the distortion constraint \eqref{eq:gaussian_optimization_distortionconst} gives
    $n \leq 1 - \left(\frac{D_\text{max} - D}{2 \sigma_Z \sigma_\rho}\right)^2$,
    when having $D \leq D_\text{max}$ and $n = 1$ otherwise. Finally, note that the feasibility constraint $\Sigma_N \geq \pm \frac{i}{2} \Delta $ imply $n \geq 1 / (2\sigma_\rho^2)$. This gives the corresponding minimum attainable distortion to be of the form \eqref{eq:gaussian_isotropic_DOT}.
\end{proof}

\subsubsection{The one-mode Gaussian case}
Solving the optimization problem for the one-mode systems gives the following-closed form expression for the rate-distortion function. We define a few parameters that are necessary for the theorem. Define  a matrix $\Sigma = \Sigma_Z^{1/2} \Sigma_\rho \Sigma_Z^{1/2}$ 
with eignen decomposition $\Sigma = U \text{diag}[\sigma_1^2, \sigma_2^2] U^T$.
 
Define $X=U \text{diag}[x_1, x_2] U^T$,
where $x_1$ and $x_2$ satisfy
\begin{align} \label{eq:ratedist_onemode_solution}
    \frac{x_1}{\sigma_1^2 - x_1^2} = \frac{x_2}{\sigma_2^2 - x_2^2}, \quad x_1 + x_2 = D_\text{max} - D.
\end{align}
Also define $X_{OT} = U \text{diag}[y_1, y_2] U^T$,  where $y_1$ and $y_2$ satisfy
\begin{align} \label{eq:gaussian_wasserstein_syseq}
     \frac{y_1}{\sigma_1^2 - y_1^2} = \frac{y_2}{\sigma_2^2 - y_2^2} ,\quad
         (y_1^2 - \sigma_1^2)(y_2^2 - \sigma_2^2) = \frac{1}{4} \det{\Sigma_Z}.
\end{align}

\begin{theorem} \label{th:gaussian_onemode}
For a one-mode Gaussian quantum measurement  system ($s=1 $), the OC rate-distortion function for the interval $D_{OT} \leq D \leq D_\text{max}$ is given by
\begin{align} 
    R(D; \infty,  \~{QN}(0, \Sigma_\rho) || \~N(0, \Sigma_Z)) =  \ g\left(\sqrt{\det{\Sigma_\rho}} - 1/2\right) - g\left(\sqrt{\det{\Sigma_N}} - 1/2\right)
\end{align}
where the noise covariance matrix 
 is $\Sigma_N = \Sigma_\rho - \Sigma_Z^{-1/2} X^2 \Sigma_Z^{-1/2}$.
  Furthermore, the QC Wasserstein distance $D_{OT}$ for the one-mode case is obtained by
 \begin{align}
     W_2^2\bigg(\~{QN}(0, \Sigma_\rho) \;,\; \~N(0, \Sigma_Z)\bigg) = D_{OT} = D_{\text{max}} - \sp{X_{OT}}.
 \end{align}
\end{theorem}
\begin{proof}
    See Appendix \ref{appD}
\end{proof}

\begin{remark}
 In the classical systems, the 2nd-order Wasserstein distance between Gaussian distributions occurs with infinite communication rate ($R = \infty$). In  contrast, the required communication rate for QC Wasserstein distance is bounded, and obtained as,
 \begin{align}
     R_{W_2}(\rho ||\pi_Z) = g\left(\sqrt{\det{\Sigma_\rho}}- 1/2\right),
 \end{align}
and is achieved when $\abs{\Delta^{-1}\Sigma_N} =  {I}/{2}$.
  In other words, increasing the rate beyond this point cannot reduce the distortion any further. This result can also be interpreted as a restatement of the famed Heisenberg uncertainty principle which says the position and momentum cannot be determined with arbitrary accuracy. 
\end{remark}

\ifmycomments
\myqu{How do you claim that this is because of Heisenberg and the noise having a lowerbound?}
\myans{Recall that in the classical setting, we had $X$ and $Y$, then we would write $Y - X = Y - \hat X + \hat X - X$, where $\hat X = \Ex{X|Y}$. Then we would name $N = X - \hat X$ to be the noise. Then with some calculations, we would arrive to the rate $I(X;Y) = H(X) - H(N)$. Then in classical case, this would be infinity because $\sigma_N =0$ and this resulted in $H(N) = -\infty$. But in quantum case, the minimum value of $H(N) = 0$ because the noise level cannot drop less than the $1/2 $ threshold because of Heisenberg's inequality}

\myqu{Why Holveo \cite[Theorem 2]{holevo2020-gaussian-maximizers-observables} has different formula?}\myans{Because it is a totally different thing. The accessible information he works with is the classical information. That is he prepares classical data with quantum ensemble and then performs measurement on that. All his entropy functions are Shannon entropy and the formulations are also $\log \det (.)$ which aligns with classical formulation. Here we are considering the quantum information of the system which is why we have the $g(.)$ function.}

\myqu{The information gain is maximum of the mutual information! not the minimum!} \myans{In our OC rate-distortion theorem, we reduce the distortion by allowing the system to have lower noise covariance. It shows that when we allow minimum possible distortion, which corresponds to the minimum possible noise, the corresponding rate is the information gain (maximum achievable rate)}
\fi

The extension of this problem to the multi-mode Gaussian quantum system with independent modes is provided in the thesis \cite{hafez2023thesis}.

\ifmycomments
{\color{blue}
\subsubsection{The multi-mode Gaussian case with independent modes}
We can easily extend the previous one-mode Gaussian system to $s$-mode independent Gaussian system, where source state is an $s$-mode Gaussian state with a block diagonal matrix $\Sigma_\rho = \oplus_{i=1}^s \Sigma_{\rho_i}$, where $\Sigma_{\rho_i}$ is the covariance matrix of $i$-th mode. Similarly, we assume the output distribution to be a multivariate Gaussian system with $s$ independent 2-dimensional Gaussian distributions comprising the covariance $\Sigma_Z = \oplus_{i=1}^s \Sigma_{Z_i}$. Therefore, in this case the optimization problem in \eqref{eq:quantum_gaussian_optimization} is generalized as follows
\begin{align} \label{eq:nmode-quantum_gaussian_optimization}
    R(D; \infty,  \~{QN}(0, \Sigma_\rho) || \~N(0, \Sigma_Z)) = & \min_{\stackrel{\Sigma_{N_i},}{i=1,\cdots, s}}  \sum_{i=1}^s \left[ g\left(\sqrt{\det{\Sigma_{\rho_i}}} - 1/2\right) - g\left(\sqrt{\det{\Sigma_{N_i}}} - 1/2\right) \right]\\
    \text{subject to }&\; \sum_{i=1}^s \frac{1}{s} \sp{ (\Sigma_{Z_i}^{1/2} (\Sigma_{\rho_i} - \Sigma_{N_i}) \Sigma_{Z_i}^{1/2})^{1/2}} \geq D_\text{max} - D.
\end{align}
Again, similar to the proof in Appendix \ref{appD}, we change the variables of optimization to the set of variables 
    $X_i:= \left(  \Sigma_{Z_i}^{1/2} (\Sigma_{\rho_i} - \Sigma_{N_i})  \Sigma_{Z_i}^{1/2}   \right)^{1/2}$ 
 for $i = 1,\cdots, s$.
This results in the following equivalent optimization problem:
\begin{align}
    \min_{\substack{X_i,\\i=1,...,s}} -  \sum_{i=1}^s  \ & g\left(  \sqrt{ \frac{\det{\Sigma_i - X_i^2}}{\det{\Sigma_{Z_i}}}}    - \frac{1}{2}  \right),\\
    &  \det(\Sigma_i - X_i^2 ) \geq \frac{1}{4} \det(\Sigma_{Z_i}) ,\quad  0\leq X_i^2,\\
    &  \sum_i\sp{X_i} \geq c,
\end{align}
where $\Sigma_i := \Sigma_{Z_i}^{1/2}\Sigma_{\rho_i} \Sigma_{Z_i}^{1/2}$ and  $c = D_\text{max} - D$. The above problem is solved with a similar approach as the one-mode case by considering $X_i$ to be same diagonalizable as $\Sigma_i$, which results in an extension of the one-mode solution \eqref{eq:onemode_solution} as follows:
\begin{align}
        \phi(X_i) \cdot \log\frac{\phi(X_i) + 1/2}{\phi(X_i) - 1/2} \frac{x_{ij}}{\sigma_{ij}^2 - x_{ij}^2}&= \pi, \quad \text{for }i = 1, \cdots, s, \quad j = 1,2 \label{eq:pi_xi_equation_nmode}\\
    \sum_{i=1}^s \sum_{j=1}^2 x_{ij} &= c.
\end{align}
where for the $i$-th mode, $X_i = U_i \text{diag}(x_{ij})_{j=1,2} U_i^T$ and $\Sigma_i = U_i \text{diag}(\sigma^2_{ij})_{j=1,2} U_i^T$.
}
\fi


\subsection{Numerical Example of Quantum-Classical Gaussian system}
We consider an example of the above system when the input state is a Gaussian quantum state and output system is a classical Gaussian distribution with covariance matrices
\begin{align} \label{eq:gaussian_numerical_example}
    \Sigma_\rho = \begin{bmatrix}
         1.0783 & -0.8976\\
       -0.8976&  1.3155 
    \end{bmatrix}, \quad
    \Sigma_Z = \begin{bmatrix}
        1.7471 & -1.2224\\
        -1.2224 &  0.8583
    \end{bmatrix}.
\end{align}
One can find the Symplectic eigenvalues of the input covariance matrix as $\alpha_s = 0.7828$ and the eigenvalues of the output matrix as $\lambda_{1,2} = 0.002, 2.6034$. The displacement is assumed to be zero for both source and destination. For this system, the suboptimal OC rate-distortion function generated by calculating the noise covariance from the above expressions is given in Figure \ref{fig:gaussian-rate-dist-plot}. The interesting observation is that in contrast to the classical Gaussian optimal transport for which the Wasserstein distance is achieved when the information rate is infinite $(R = \infty)$, in the QC setting, due to the Heisenberg Uncertainty principle, the maximum rate cannot be infinite as the accessible information of measurement is limited according to Theorem 2 of \cite{holevo2020-gaussian-maximizers-observables}.
\begin{figure}
    \centering
    \includegraphics[width=0.7\textwidth]{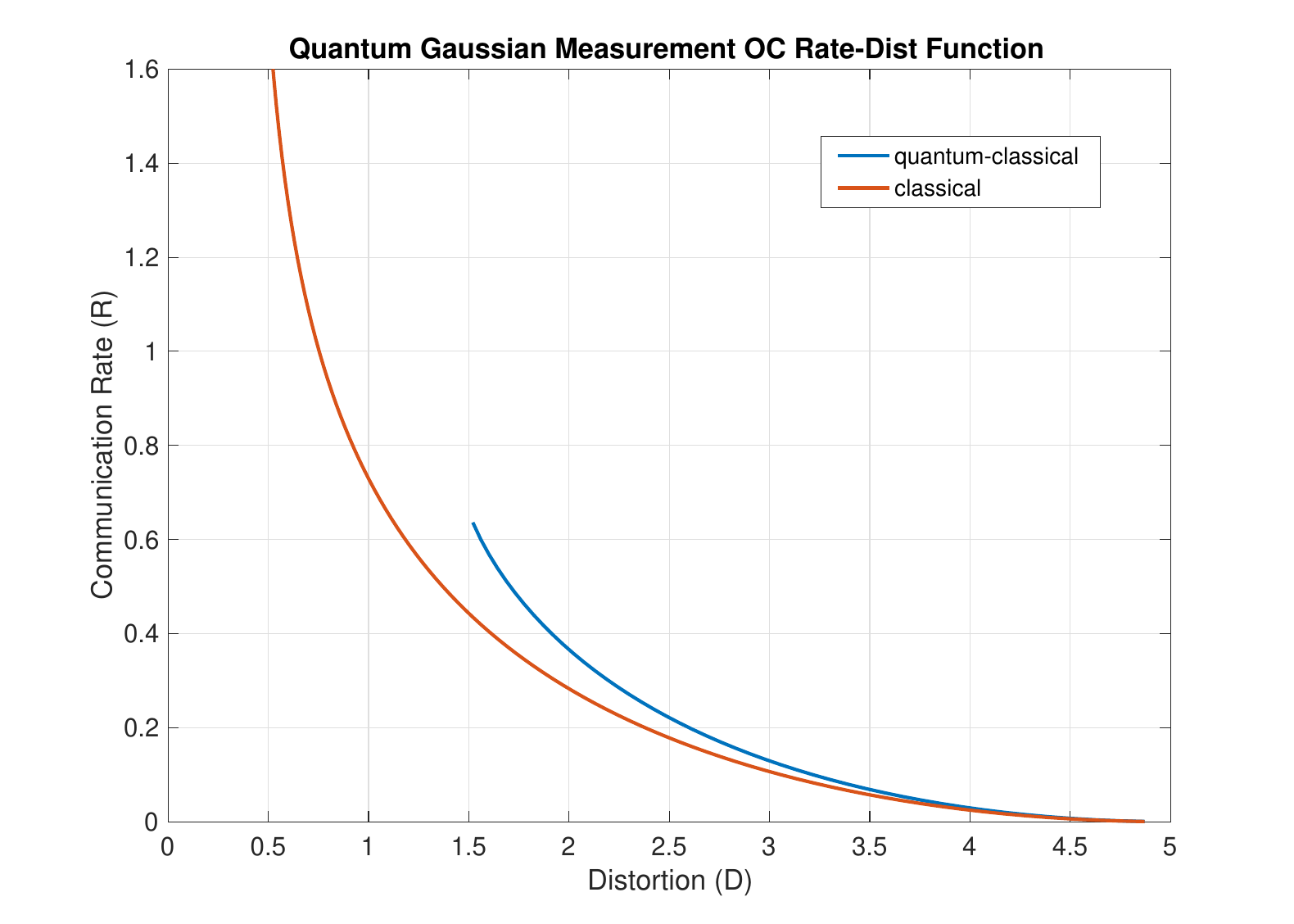}
    \caption[Plot of Rate-Limited QC Wasserstein Distance of 2nd Order]{Plot of the OC rate-distortion function of (Blue): the QC Gaussian measurement system with zero mean and marginal covariance matrices in \eqref{eq:gaussian_numerical_example}, and (Red): the  fully classical Gaussian system with Gaussian marginal distributions with zero mean and same covariance matrices as \eqref{eq:gaussian_numerical_example}.}
    \label{fig:gaussian-rate-dist-plot}
\end{figure}
The simulations are performed using Strawberry fields \cite{Killoran_2019strawberry_sim} and Walrus \cite{Gupt2019walrus_sim} packages and Matlab.


\section{Conclusion}
We established a Quantum-to-Classical Rate-Limited Optimal Transport problem formulation involving both discrete and continuous-variable quantum systems.
We provided a single-letter computable characterization of the performance limits of this problem using quantum information quantities.
The performance limit is given by the achievable rate region which is the collection of all communication and common randomness rates $(R,R_c)$ fulfilling the distortion level in accordance with a generally defined form of distortion observable. 
Considering a general form of distortion observable in the coding theorem, made it possible to use different distortion measures corresponding to discrete and continuous-variable QC systems.
Next, the example of Qubit-Bernoulli, the RLOT was evaluated for the unlimited common randomness and entanglement fidelity distortion measure.
For the Gaussian measurement system, by reformatting the quadratic cost operator of \cite{depalma_quantum_optimal_transport_2021}, we introduced the QC rate-limited Wasserstein distance of order 2. Our Gaussian optimality theorem further showed that the Gaussian RL 2-Wasserstein distance is achieved by the Gaussian measurements. Based on that, the closed-form expressions for the rate-limited QC 2-Wasserstein distance of Gaussian isotropic systems were provided. The conventional 2-Wasserstein distance was also obtained as the special case when $R=\infty$.

As described in the introduction, in the finite-dimensional systems, the fully quantum version of rate-distortion coding has been studied for the memoryless sources in works like \cite{datta_wilde_2012_quantum_ratedistortion_reverseshanon,wilde2013quantum_ratedist_auxiliary,devetak2002quantum_ratedist_memorylesss,Salek_quantumratedist_relevant}, and for one-shot setting in \cite{datta2013oneshot_quantum_compression,atifpradhan2023quantumsoftcovering}.
We plan to address as a part of our future work the QC RLOT problem in one-shot settings for both finite-dimensional and continuous-variable systems, in addition to the evaluations of the Quantum Gaussian Measurement systems. Another important line of work that may be of interest is to consider the fully quantum counterpart of these RLOT problems both in asymptotic IID and one-shot regimes in finite-dimensional and continuous-variable settings as well as their evaluations. We also plan to address the applications of these concepts in developing new quantum measure concentration inequalities and their relation to the capacity of the quantum relay channels.

\begin{appendices}

\section{Proof of Useful Lemmas } \label{appA}

\subsection{Proof of Proposition \ref{prop:continuous_rate_inequalities}} \label{appA-cont-rate-inequalities}
The following lemmas are used in this proof.
\begin{lemma} \label{lemma:information_gain_convergence}
    \cite[Proposition 6]{shirokov2021lower_semicontinuity}
    Let $\{{\rho^{(n) A}}\}$ be a sequence of states converging to a state ${\rho^{(0) A}}$ (See Section 11.1 of \cite{Holevo_2019_quantum_systems_book}). Also, let $\{M_n\}$ be any arbitrary sequence of POVMs weakly converging to a POVM $M_0$, with the outcome space $\~W$. If either $\abs{\~W} < +\infty$ or $\lim_{n \to \infty} H({\rho^{(n) A}}) = H({\rho^{(0) A}}) < +\infty$ then 
       $ \lim_{n \to \infty } I_g(M_n, {\rho^{(n) A}})  = I_g(M_0, {\rho^{(0) A}})$,
    where $I_g(M,\rho)$ is the information gain as defined by Groenwold.
\end{lemma}

\begin{lemma} \label{lemma:weak_convergence_M}
    The sequence of $\hat M_{k_1}$ POVMs converges weakly to the $M_W$ POVM.
\end{lemma}
\begin{proof}
    See Appendix \ref{appA-weak_convergence_M}.
\end{proof}
Then the proof of proposition \ref{prop:continuous_rate_inequalities} is as follows.
\begin{proof}
To prove the first inequality of \eqref{eq:clippedrate-inequality} we directly use Lemma \ref{lemma:information_gain_convergence} by Shirokov.
To use this lemma in our system, we demonstrate a refined POVM by combining the clipping projection and the measurement POVM $M_W$ into,
     $\hat M_{k_1} = \left\{  \left\{  M_W(B) \Pi_{k_1}, B\in \~B  \right\}, I -\Pi_{k_1}  \right\}$.
It is straightforward that this refined POVM is a valid POVM. Next, Lemma \ref{lemma:weak_convergence_M} proves the weak convergence of this refined POVM.
Therefore, by applying Lemma \ref{lemma:information_gain_convergence} and gentle measurement lemma, it follows that $\lim_{k_1 \to \infty} I_g(R_{k_1},W'_{k_1}) = I_g(R;W)$.
To prove the second rate inequality of \eqref{eq:clippedrate-inequality}, we first apply the chain rule for mutual information, $I(W,A_{k_1};X_{\bar{k}_2}) = I(W;X_{\bar{k}_2}) + I(A_{k_1};X_{\bar{k}_2}|W) \leq H(A_{k_1}) + I(W;X_{\bar{k}_2})$.

Hence,
    $I(W,A_{k_1};X_{\bar{k}_2}) - H(A_{k_1}) \leq I(W;X_{\bar{k}_2}) \leq I(W,A_{k_1};X_{\bar{k}_2})$.
As $H(A_{k_1})$ decays to zero when $k_1 \to \infty$, from squeeze theorem we have the following limit
\begin{align}\label{eq:A-equals-first-system}
    \lim_{k_1 \to \infty} I(W,A_{k_1}; X_{\bar{k}_2}) = I(W;X_{\bar{k}_2}).
\end{align}
Further, applying another chain rule we get
\begin{align}
     I(W,A_{k_1}; X_{\bar{k}_2})
    &= I(A_{k_1}; X_{\bar{k}_2}) +  (1 - P_{k_1}) \cdot I(W; X_{\bar{k}_2}|A_{k_1}=0)  + P_{k_1} \cdot I(W; X_{\bar{k}_2}|A_{k_1}=1) .
\end{align}
As the clipping region grows to infinity, the probability of clipping decays to zero $\lim_{k_1 \to \infty} P_{k_1} = 0$. Then, because $A_{k_1}$ is a simple Bernoulli random variable with an arbitrarily small probability, we have $I(A_{k_1}; X_{\bar{k}_2}) \leq H(A_{k_1}) \to 0$. 
Also for the third term, the following asymptotic bound exists:
\begin{align}
    \lim_{k_1 \to \infty } P_{k_1} \cdot I(W;X_{\bar{k}_2}|A_{k_1}=1) \leq \lim_{k_1 \to \infty } P_{k_1} \cdot H(X_{\bar{k}_2})  = 0,
\end{align}
where we appealed to the fact that quantized output with limited alphabet has limited entropy.

Finally, consider that $I(W; X_{\bar{k}_2}|A_{k_1}=0) = I(W'_{k_1}, X'_{k_1,\bar{k}_2})$ holds by definition because when the input state is inside the clipping subspace, the system behaves as if no clipping was performed. Combining all together, for any fixed $\bar{k}_2$, we have
\begin{align} \label{eq:A-equals-second-system}
    \lim_{k_1 \to \infty} I(W,A_{k_1};X_{\bar{k}_2}) = \lim_{k_1 \to \infty} I(W'_{k_1}, X'_{k_1,\bar{k}_2}) .
\end{align}
Then \eqref{eq:A-equals-first-system} and \eqref{eq:A-equals-second-system} together show that for any fixed $\bar{k}_2$ we have $\lim_{k_1 \to \infty} I(W'_{k_1}, X'_{k_1,\bar{k}_2}) = I(W,X_{\bar{k}_2}) \leq I(W;X)$,
where the inequality follows directly from the Data Processing Inequality and completes the proof. In addition, note that as $k_2,k_2' \to \infty$, $X_{\bar{k}_2}$ converges weakly to $X$. Therefore, using lower semi-continuity of mutual information \cite{pinsker1964_information_stability, posner1975random_coding_stratergies}, combined with the above data-processing inequality we further have $\lim_{k_2,k_2' \to \infty}  \lim_{k_1 \to \infty} I(W'_{k_1}, X'_{k_1,\bar{k}_2}) = \lim_{k_2,k_2' \to \infty} I(W,X_{\bar{k}_2}) = I(W;X)$.
\end{proof}


\subsection{Proof of Lemma \ref{lemma:weak_convergence_M}} \label{appA-weak_convergence_M}
By definition of the weak convergence of operators in  \cite[Section 11.1]{Holevo_2019_quantum_systems_book}, it suffices to show that for each subset $B \in \~B$ and any two arbitrary states $\phi, \psi \in \~H$, it holds that $\lim_{k_1 \to \infty} \braket{\psi \Big|\hat M_{k_1}(B)}{\phi} = \braket{\phi|M(B)}{\psi}$.
Starting with $I- \Pi_{k_1}$, we show that this operator vanishes to zero when $k_1 \to \infty$, for any $\psi, \phi \in \~H$
\begin{align*}
    \lim_{k_1 \to \infty} \Big|\braket{\phi|(I - \Pi_{k_1})}{\psi} \Big|  &\leq \lim_{k_1 \to \infty}\norm{(I - \Pi_{k_1})\ket{\psi}}_2 . \norm{(I - \Pi_{k_1})\ket{\phi}}_2 \nonumber \\
    &= \lim_{k_1 \to \infty} \sqrt{\braket{\phi|(I - \Pi_{k_1})}{\phi}} \cdot \sqrt{\braket{\psi|(I - \Pi_{k_1})}{\psi}},
\end{align*}
where the inequality appeals to Cauchy-Schwartz's inequality for Hilbert space. By using the definition of the projector operator and Bessel's inequality and the fact that $\ket{n}$ are orthonormal set, we have 
$\braket{\phi|\Pi_{k_1}}{\phi} = \sum_{n=1}^{k_1} |\braket{n}{\phi}|^2    \leq \norm{\phi}^2$.
The inequality turns into equality when the orthonormal set is a complete orthonormal basis (Parseval's identity), which is when $k_1 \to \infty$. Substituting this into the inner-product expression proves that 
    $\lim_{k_1 \to \infty} \Big|\braket{\phi|(I - \Pi_{k_1})}{\psi} \Big| = 0$.
Moreover, for the other operators, we have $\lim_{k_1 \to \infty} \braket{\phi| M(B) \Pi_{k_1}}{\psi} =\lim_{k_1 \to \infty} \braket{\phi'|\Pi_{k_1}}{\psi}= \braket{\phi'}{\psi}\nonumber = \braket{\phi|M(B)}{\psi}$,
where $\ket{\phi'} = M(B)\ket{\phi}$, and the argument is similar to previous operator. These together prove that the sequence of POVMs $\hat M_{k_1}$ weakly converge to the $M$ POVM.

\subsection{Proof of Lemma \ref{lemma:single-letter-distortion-upperbound}} \label{appA-singleletter-distortion}
We condition the average n-letter distortion on the sequence of clipping errors $A_{k_1}^n$ as follows:
\begin{align}
    &d_n(R^n, {\hat X}^n_{k_1,\bar{k}_2}|\neg E_{ce}) = \Exx{A_{k_1}^n | \neg E_{ce}}{d_n(R^n, \hat X^n_{k_1,\bar{k}_2} | A_{k_1}^n)} \nonumber\\
    &=\Exx{A_{k_1}^n | \neg E_{ce}}{\frac{1}{n} \sum_{i: A_{i,k_1}=1} d(R_i, \hat X_{i,k_1,\bar{k}_2} | A_{k_1}^n) + \frac{1}{n} \sum_{i: A_{i,k_1}=0} d(R_i, \hat X_{i,k_1,\bar{k}_2} | A_{k_1}^n) }\nonumber\\
    &=\Exx{T | \neg E_{ce}}{\frac{n-T}{n} d(R,  \hat X_\text{local} | A_{k_1}=1) + \frac{T}{n} d_T(R_{k_1}^T, \hat X^T_{k_1,\bar{k}_2})  }\nonumber\\
    &\leq  (P_{k_1} + \epsilon_{cl}) d\left(R, \hat X_\text{local} | A_{k_1}=1\right) +    \Exx{T | \neg E_{ce}}{d_T(R_{k_1}^T, \hat X^T_{k_1,\bar{k}_2})} \label{eq:continuous-nletter-distortion-noEce}
\end{align}
where in the last equality, we generate a local random value $\hat X_\text{local}$ at Bob's side for any sample with an asserted error bit $A_{i,k_1}=1$. To determine the second term, note that the event $\neg E_{ce}$ means $T \geq t_\text{min}$. Then for any value of $\bar\varepsilon > 0$, by the Theorem \ref{th:maintheorem}, 
for all sufficiently large $n$ (resulting in a sufficiently large value of $t_\text{min}$) and for any   $t \geq t_\text{min}$, we have that $d_t(R^t_{k_1}, {X'}^t_{k_1,\bar{k}_2}) \leq d(R_{k_1}, {X'}_{k_1,\bar{k}_2}) + \bar\varepsilon$, thus,
\begin{align}
    d_t(R^t_{k_1}, {\hat X}^t_{k_1,\bar{k}_2})
    &\leq d(R_{k_1}, {X'}_{k_1,\bar{k}_2}) + \bar\varepsilon + \Big(d_t(R^t_{k_1}, {\hat X}^t_{k_1,\bar{k}_2}) -  d_t(R^t_{k_1}, { X'}^t_{k_1,\bar{k}_2}) \Big). \label{eq:cont-nletter-separation}
\end{align}

The second part of \eqref{eq:cont-nletter-separation} is the distortion caused by the dequantizer block in the receiver. 
However, we cannot use the triangle inequality to sum up the distortion of the blocks of the system. Instead, we expand the expressions of each term and find the difference. Define $\{\Lambda^{\~X^t_{\bar{k}_2}}_{x^t}\}_{x^t \in \~X^t_{\bar{k}_2}}$ as the $t$-collective measurement of the $t$-letter discrete coding scheme which generates discrete output sequence with perfect IID pmf $\mu_{X_{\bar{k}_2}}$.  Also define the continuous measurement POVM $\hat \Lambda \equiv \{\hat \Lambda(B), B\in \~B(\~X^t)\}$ by combining the discrete measurement with the dequantizer optimal transport block,  which is expressed for any event $B \subseteq \~B(\~X^t)$ by
    $\hat \Lambda(B) := \sum_{x^t \in \mathcal{X}_{\bar{k}_2}^t} \Lambda^{\~X^t_{\bar{k}_2}}_{x^t} \pi^t_{OT} ( B| x^t ).$

The conditional distribution $\pi_{OT}(.|x^t)$ for $x^t\in \~X^t_{\bar{k}_2}$ representing the dequantizer optimal transport is a memoryless channel because the discrete coding produces IID discrete pmf $\mu_{X_{\bar{k}_2}}$ and the final output is required to have IID continuous distribution $\mu_X$. For  any event $B \subseteq \~B(\~X^t)$ it can be expressed by
    $\pi^t_{OT} ( B | x^t ) = \mu^t_X(A \cap {RQ}_{\bar{k}_2}(x^t)) \; / \; \mu^t_X\left(\~{RQ}_{\bar{k}_2}(x^t)\right),$
where  $\~{RQ}_{\bar{k}_2}(x^t)$ is the quantization region of $x^t \in \~X_{\bar{k}_2}^t$.
Next, the distortion of the discrete coding block for the $i$-th local state is expanded as follows:
\begin{align}
    &d(R_{i,k_1} , X'_{i,k_1,\bar{k}_2}) = \sum_{x^t \in \~X^t_{\bar{k}_2}} \Tr{ \Trxx{[t]\setminus i}{\omega_{k_1} \Lambda^{\~X^t_{\bar{k}_2}}_{x^t} \omega_{k_1} }  \Delta(x_i)} \nonumber\\
    &= \sum_{x_i \in \~X_{\bar{k}_2}} 
    \Tr\bigg\{ 
        \Tr_{[t]\setminus i}\bigg\{
            \omega_{k_1}\bigg(
                \sum_{x^{[t]\setminus i} \in \~X^{t-1}_{\bar{k}_2}}  \Lambda^{\~X^t_{\bar{k}_2}}_{x^t} 
            \bigg)\omega_{k_1} 
        \bigg\} \Delta(x_i)
    \bigg\} 
    = \sum_{x \in \~X_{\bar{k}_2}} \Tr \bigg\{ \hat \rho^{R_i}_{x}  \Delta(x)\bigg\} \mu_{X_{\bar{k}_2}}(x), \label{eq:distortion-Xprime}
 \end{align}
where $\omega_{k_1}:= \sqrt{\rho_{k_1}^{\otimes t}}$ and $    \hat \rho^{R_i}_{x} 
    :=\frac{1}{\mu_{X_{\bar{k}_2}}(x)} 
    \Tr_{[t]\setminus i}\bigg\{
        \omega_{k_1}\bigg(
                \sum_{
                    x^{[t]\setminus i} \in \~X^{t-1}_{\bar{k}_2}
                }  \Lambda^{\~X^t_{\bar{k}_2}}_{x^t} 
        \bigg)\omega_{k_1} 
    \bigg\}$,  
is the PMR state of the $i$-th local system after discrete measurement POVM. The state is normalized by $\mu_{X_{\bar{k}_2}}(x)$ because the output-constrained coding implies the IID distribution of the output.
We further have the distortion of the $i$-th local system up to the output of dequantizer as,
\begin{align}
    d(R_{i,k_1} , \hat X_{i,k_1,\bar{k}_2}) &= \int_{\~X^t} \Tr{ \Trxx{[t]\setminus i}{\omega_{k_1} \hat \Lambda(dz^t) \omega_{k_1} }  \Delta(z_i)}  \nonumber\\
    &= \sum_{x^t\in \~X^t_{\bar{k}_2}}\int_{\~{RQ}_{\bar{k}_2}(x^t)} \Tr{ \Trxx{[t]\setminus i}{\omega_{k_1} \Lambda^{\~X^t_{\bar{k}_2}}_{x^t}\pi^t_{OT} \Big( dz^t\Big| x^t \Big) \omega_{k_1} }  \Delta(z_i)} \nonumber \\
    &= \sum_{x_i\in \~X_{\bar{k}_2}}\int_{\~{RQ}_{\bar{k}_2}(x_i)} \Tr{ \Trxx{[t]\setminus i}{\omega_{k_1} \left( \sum_{x^{[t] \setminus i }}\Lambda^{\~X^t_{\bar{k}_2}}_{x^t}  \right) \omega_{k_1} }  \Delta(z_i) \pi_{OT} ( dz_i|x_i) } \nonumber \\
        &= \sum_{x\in \~X_{\bar{k}_2}}\Tr{ \hat \rho^{R_i}_{x} \left( \int_{\~{RQ}_{\bar{k}_2}(x)} \Delta(z) \pi_{OT} ( dz|x)\right) } \mu_{X_{\bar{k}_2}}(x), \label{eq:distortion-hatX}
\end{align}
 Then having the expressions \eqref{eq:distortion-Xprime} and \eqref{eq:distortion-hatX}, the distortion caused by the optimal transport block is:
\begin{align}
     d_t(R_{k_1}^t , \hat X_{k_1,\bar{k}_2}^t) - d_t(R_{k_1}^t , {X'}^t_{k_1,\bar{k}_2}) 
     &= \frac{1}{t} \sum_{i=1}^t \sum_{x\in \~X_{\bar{k}_2}}\Tr{ \hat \rho^{R_i}_{x} \left(\int_{\~{RQ}_{\bar{k}_2}(x)} \Delta(z) \pi_{OT} ( dz|x) - \Delta(x)\right)} \mu_{X_{\bar{k}_2}}(x)\nonumber\\
     &= \frac{1}{t} \sum_{i=1}^t\int_\~X \Tr{  \hat \rho^{R_i}_{Q_{\bar{k}_2}(x)} \Big(\Delta(x) - \Delta(Q_{\bar{k}_2}(x))\Big) } \mu_{X}(dx),
    \label{eq:continuous_optimal_transport_distortion}
\end{align}
Finally, substituting \eqref{eq:continuous_optimal_transport_distortion} and \eqref{eq:cont-nletter-separation} into \eqref{eq:continuous-nletter-distortion-noEce} provides the following bound
\begin{align*}
    d_n(R^n, {\hat X}^n_{k_1,\bar{k}_2}|\neg E_{ce}) &\leq  d(R_{k_1} , X'_{k_1, \bar{k}_2})  + \bar\varepsilon + \varepsilon_1(k_1) + \varepsilon_2(n,k_1, \bar{k}_2),
\end{align*}
where 
\begin{align}
    \varepsilon_1(k_1) &:= (P_{k_1} + \epsilon_{cl}) d\left(R, \hat X_\text{local} | A_{k_1}=1\right), \\
    \varepsilon_2(n, \bar{k}_2) &:=   \Exx{T|\neg E_{ce}}{ \frac{1}{T} \sum_{i=1}^T \int_\~X \Tr{   \hat \rho^{R_i}_{Q_{\bar{k}_2}(x)}  \Big(\Delta(x) - \Delta(Q_{\bar{k}_2}(x))\Big) } \mu_{X}(dx)}.
\end{align}

 For $\varepsilon_1(k_1)$, by letting $\epsilon_{cl} \leq P_{k_1}$ and as a result of uniform integrability, for any $\varepsilon'_1>0$ there exists a sufficiently large $k_1$ such that $\varepsilon_1(k_1) \leq \varepsilon'_1$. 
For $\varepsilon_2(n, \bar{k}_2)$, the integral inside the expectation is split into the inside and outside the clipping interval and then bounded by
\begin{align*}
    \varepsilon_2(n, \bar{k}_2) &=   \Exx{T | \neg E_{ce}}{  \frac{1}{T} \sum_{i=1}^T   \int_{-2^{k_2}}^{2^{k_2}}  \Tr{\hat \rho^{R_i}_{Q_{\bar{k}_2}(x)} \Big(\Delta(x) - \Delta(Q_{\bar{k}_2}(x))\Big)} \mu_X(dx)} \\
     &+ \Exx{T | \neg E_{ce}}{  \frac{1}{T} \sum_{i=1}^T\int_{\~X \setminus [-2^{k_2},2^{k_2}]} \Tr{\hat \rho^{R_i}_{Q_{\bar{k}_2}(x)} \Big(\Delta(x) - \Delta(Q_{\bar{k}_2}(x))\Big)} \mu_X(dx)} \leq \varepsilon_{21}(k_1) + \varepsilon_{22}(n,\bar{k}_2)
\end{align*}
where
\begin{align}
    \varepsilon_{21}(k_1) &:=   \int_{-2^{k_2}}^{2^{k_2}}  \norm{\Delta(x) - \Delta(Q_{\bar{k}_2}(x))}_1 \mu_{X}(dx),   \\
    \varepsilon_{22}(n,\bar{k}_2) &:=  \Exx{T | \neg E_{ce}}{  \frac{1}{T} \sum_{i=1}^T \int_{\~X \setminus [-2^{k_2},2^{k_2}]} \Tr{\hat \rho^{R_i}_{Q_{\bar{k}_2}(x)} \Big(\Delta(x) - \Delta(Q_{\bar{k}_2}(x))\Big)} \mu_X(dx) } .
\end{align}

 To obtain $\varepsilon_{21}(k_1)$ we used Holder's inequality on the first term and the fact that the PMR state $\hat \rho_x^{R_i}$ for $x\in \~X_{\bar{k}_2}$ is a state of finite-dimensional space. The expectation term is excluded because the resulting term is not a function of $T$.  Then the first term above can be made arbitrarily small to any $\varepsilon'_{21}$ by choosing a large enough $k_1$, due to the continuity of the $\Delta(x)$ operator ($\varepsilon_{21}(k_1) \leq \varepsilon'_{21}$).
 
  In the second term $\varepsilon_{22}(n, \bar{k}_2)$, for the inner integral, by the definition of the uniform integrability, regardless of the index $i$, for an arbitrarily small value $\varepsilon'_{22} > 0 $ there exists a large enough $k_2$ such that 
 \begin{align}
     \int_{\~X \setminus [-2^{k_2},2^{k_2}]} \Tr{\hat \rho^{R_i}_{Q_{\bar{k}_2}(x)} \Big(\Delta(x) - \Delta(Q_{\bar{k}_2}(x))\Big)} \mu_X(dx) \leq \varepsilon'_{22} \quad \text{for } i = 1, \cdots, T. 
 \end{align}
 Then for any value of $n$ we have that
     $\varepsilon_{22}(n,\bar{k}_2) \leq \Exx{T | \neg E_{ce}}{\frac{1}{T} \sum_{i=1}^T \varepsilon'_{22}} = \varepsilon'_{22}$.
 

Finally putting it all together gives the single-letter upper bound
\begin{align}
d_n(R^n, {\hat X}^n_{k_1,\bar{k}_2}|\neg E_{ce}) &\leq  d(R_{k_1} , X'_{k_1, \bar{k}_2})  + \bar\varepsilon +  \varepsilon'_{21} + \varepsilon'_{22}.
\end{align}
where $\bar\varepsilon>0$ can be made arbitrarily small given sufficiently large $n$ and  $\varepsilon'_{21}, \varepsilon'_{22}$ can be made arbitrarily small given sufficiently large $k_1, k_2$.

\subsection{Uniform Integrability of Gaussian Measurement System} \label{appA-uniforminteg}

The distortion  function \eqref{eq:distortion_continuous_firstdefinition} can be expressed in a different form  as follows:
\begin{align}
    d(\rho_A, \Lambda) 
    &= \frac{1}{2s} \sum_{i=1}^{2s} \Exx{Z}{  \Tr{\rho_Z R_i^2} + \Tr{\rho_Z} Z_i^2 - 2 \Tr{\rho_Z R_i} Z_i  }:= d(\rho_Z, \pi_Z).
\end{align} 
\begin{lemma}
Considering a Gaussian quantum Optimal transport measurement system comprised of a Gaussian source state $\rho$ and a Gaussian output distribution $\pi_Z$, with the  PMR ensemble $\{\{\rho_z\}_{z\in \~Z}, \pi_Z\}$. For any set of projection operators $\Pi_Z$ such that $\Exx{Z}{\rho_Z \Pi_Z} \leq \delta$, we have that
$d(\Pi_Z\rho_Z \Pi_Z, \pi_Z) \leq f(\delta),$
where $f(\delta) \downarrow 0$ as $\delta \downarrow 0$.
\end{lemma}
\begin{proof}
  The proof follows a nested use of the Cauchy-Schwartz inequality.  The distortion of the unnormalized projected state is expanded as
  \begin{align} \label{eq:distortion-terms-uniformproof}
      d(\Pi_Z\rho_Z \Pi_Z, \pi_Z) = \frac{1}{2s} \sum_{i=1}^{2s} \Exx{Z}{  \Tr{\Pi_Z\rho_Z \Pi_Z R_i^2} + \Tr{\Pi_Z\rho_Z} Z_i^2 - 2 \Tr{\Pi_Z\rho_Z \Pi_Z R_i} Z_i  }.
  \end{align}
  We further examine each term separately.  The operation $\langle M, N \rangle = \Tr{M^\dagger N}$ is the Hilbert-Schmidt inner product, then it satisfies the Cauchy-Schwartz inequality. Thus, for the first term of  \eqref{eq:distortion-terms-uniformproof} we have,
  \begin{align*}
      \Tr{\Pi_Z\rho_Z \Pi_Z R_i^2} &= \Tr{\sqrt{\rho_Z} \Pi_Z R_i^2 \Pi_Z \sqrt{\rho_Z}} 
      \leq \sqrt{\Tr{\rho_Z \Pi_Z}}\sqrt{ \Tr{\Pi_Z \rho_Z \Pi_Z R_i^4}}.
  \end{align*}
  Then we use another round of Cauchy-Schwartz inequality for its classical expectation:
  \begin{align}
      \Exx{Z}{\sqrt{\Tr{\rho_Z \Pi_Z}}\sqrt{ \Tr{\Pi_Z \rho_Z \Pi_Z R_i^4}}} &\leq \sqrt{\Exx{Z}{\Tr{\rho_Z \Pi_Z}}} \sqrt{\Exx{Z}{\Tr{\Pi_Z \rho_Z \Pi_Z R_i^4}}}  \nonumber \\
      &\leq \sqrt{\delta} \sqrt{\Exx{Z}{\Tr{\Pi_Z \rho_Z \Pi_Z R_i^4}}} := f_{1,i}(\delta).
  \end{align}
  By the Gaussian Optimality Theorem \ref{th:gaussian_optimality_theorem}, the optimal PMR states $\rho_Z$ corresponding to the rate-limited optimal transport measurement are all Gaussian quantum states. Then in the above inequality, $f_{1,i}(\delta)\downarrow 0$ as $\delta \downarrow 0$ because the fourth moment of a Gaussian quantum source state $\rho_Z$ is finite, and thus, the moments of its projection by $\Pi_Z$ will still remain finite. 
  
  The second term of \eqref{eq:distortion-terms-uniformproof} is upper-bounded by the Cauchy-Schwartz inequality as:
  \begin{align}
      \Exx{Z}{Z_i^2 \Tr{\Pi_Z \rho_Z}} &\leq \sqrt{\Exx{Z}{Z_i^4}} \sqrt{\Exx{Z}{\left(\Tr{\Pi_Z \rho_Z}\right)^2}} \nonumber \\
      &\leq \sqrt{\Exx{Z}{Z_i^4}} \sqrt{\Exx{Z}{\Tr{\Pi_Z \rho_Z}}} 
      \leq \sqrt{\Exx{Z}{Z_i^4}} \sqrt{\delta} := f_{2,i}(\delta)
  \end{align}
  where we used the fact that the fourth moment of the output Gaussian distribution $\pi_Z$ is finite. The third term of \eqref{eq:distortion-terms-uniformproof} follows a similar approach to the above steps:
  \begin{align}
      - \Exx{Z}{\Tr{\Pi_Z\rho_Z \Pi_Z R_i} Z_i} &\leq \sqrt{\Ex{Z_i^2}}\sqrt{\Exx{Z}{\left(\Tr{\Pi_Z\rho_Z \Pi_Z R_i}\right)^2}} \nonumber \\
      &\leq \sqrt{\Ex{Z_i^2}}\sqrt{\Exx{Z}{\Tr{\rho_Z \Pi_Z}\Tr{\Pi_Z\rho_Z\Pi_Z R_i^2}}} \nonumber \\
      &\leq \sqrt{\Ex{Z_i^2}}
      \sqrt[4]{
        \Exx{Z}{\left(\Tr{\rho_Z \Pi_Z}\right)^2}\cdot \Exx{Z}{\left(\Tr{\Pi_Z\rho_Z\Pi_Z R_i^2}\right)^2}
        } \nonumber \\
        &\leq \sqrt[4]{\delta} \sqrt{\Ex{Z_i^2}}
      \sqrt[4]{\Exx{Z}{\left(\Tr{\Pi_Z\rho_Z\Pi_Z R_i^2}\right)^2}} := f_{3,i}(\delta)
  \end{align}
  Putting all the bounds above together shows that the projected distortion in \eqref{eq:distortion-terms-uniformproof} is upper-bounded by a continuous function $f(\delta)$ as follows, where $f(\delta)  \downarrow 0 $ as $\delta \downarrow 0$. Hence by definition, the Gaussian measurement system is Uniformly integrable, $d(\Pi_z\rho_z \Pi_z, \pi_z) \leq  \frac{1}{2s} \sum_{i=1}^{2s} \left(f_{1,i}(\delta) + f_{2,i}(\delta) + 2f_{3,i}(\delta)\right) := f(\delta)$.
\end{proof}

\section{Proof of Theorem \ref{th:qubit_ratedist_theorem} } \label{appB}

\ifmycomments
{\color{blue}
We simplify the distortion term in the following form
\begin{align}
    \bra{0}\rho M_0^T\rho\ket{0} +  \bra{1}\rho M_1^T\rho\ket{1} \geq 1 - D\\
    \braket{1}{\rho^2|1} + \bra{0}\rho M_0^T\rho\ket{0} -  \bra{1}\rho M_0^T\rho\ket{1} \geq 1 - D\\
    \Tr{\rho M_0^T \rho \begin{bmatrix}
        1 & 0\\0&-1
    \end{bmatrix} } \geq 1 - D -  \braket{1}{\rho^2|1}
\end{align}
Note that the above inequality has only real terms therefore we can transpose the LHS. Also note that as the transpose is w.r.t. the eigenbasis of the input state $\rho$, then $\rho^\~T = \rho$ as it is diagonal w.r.t. its own basis. Then we have
\begin{align}
    \Tr{ M_0 \rho \begin{bmatrix}
        1 & 0\\0&-1
    \end{bmatrix}^T \rho} \geq 1 - D -  \braket{1}{\rho^2|1}
\end{align}
Then just like before, by setting $N:= \sqrt{\rho} M_0 \sqrt{\rho}$ and 
\begin{align}
    G:= \sqrt\rho \begin{bmatrix}
        1 & 0\\0&-1
    \end{bmatrix}^T \sqrt\rho
\end{align}
we will have the distortion inequality as 
\begin{align}
        \Tr{N G^T} \geq 1 - D -  \braket{1}{\rho^2|1}
\end{align}
Therefore, the only change is in the transpose above the matrix $G$. If we can show that the matrix $G^T$ can still be shown in the format:
\begin{align}
    G^T:= \begin{bmatrix}
                g_1& g_2 \rho_2/|\rho_2|\\ g_2 \rho_2^*/|\rho_2| & g_3
            \end{bmatrix}
\end{align}
then the optimization is already solved for that.
}
\fi

We start with the optimization problem in \eqref{eq:qubit_ratedist_problem} 
and change the variable of optimization to $N = \sqrt{\rho} M_0 \sqrt{\rho}$, which is the unnormalized conditional PMR state given outcome zero. The optimization problem reduces to the following form
\begin{align} \label{eq:optimization_problem_N}
    \min_{N} {\quad \Tr{N \ln(N/q_0)} + \Tr{(\rho - N) \ln(\frac{\rho - N}{1 - q_0})}},\\
    \text{such that,} \; \bra{0}\sqrt\rho N\sqrt\rho\ket{0} +  \bra{1}\sqrt\rho (\rho - N) \sqrt\rho\ket{1}\geq 1- D, \label{eq:example_distortion_const_notsimplified}\\
    \Tr{N}= q_0, \quad 0 \preceq N \preceq \rho. \label{eq:infinitrandom_const2}
\end{align}
This is a convex optimization problem, as the objective function comprises negative entropy functions that are concave, and the constraint is linear. The distortion constraint \eqref{eq:example_distortion_const_notsimplified} is further simplified to
\begin{align} \label{eq:distortion_constraint_NG}
    \Tr{N G} \geq 1 - D - \braket{1|\rho^2}{1},
    \quad G := \sqrt{\rho}\begin{bmatrix}1 & 0\\0 & -1\end{bmatrix} \sqrt{\rho}.
\end{align}

We first obtain the zero-crossing point $D_{R_0}$ which is the threshold that any distortion value $D \geq D_{R_0}$, is achieved by zero communication rate ($R=0$). Setting the rate to zero $I(R;X) = H( \rho) - q_0 H(\hat\rho_0) - q_1 H(\hat\rho_1) = 0$ gives $\hat\rho_0 = \hat\rho_1  = \rho$. 
The measurement POVMs corresponding to these PMR states are $M_0 = q_0 I, \; M_1 = (1-q_0) I$.
Substituting $\hat \rho_0= \hat \rho_1=\rho$ in the distortion constraint gives the zero-crossing point
      $D_{R_0} := 1 - q_0 \braket{0}{\rho^2 |0} -  (1 - q_0) \braket{1}{\rho^2 |1}$.

Next, in the non-zero rate region $D_{OT} \leq D \leq D_{R_0}$, the distortion constraint is active. Note that $D_{OT}$ is the optimal transportation cost which is the minimum feasible distortion possible for the system.

We first show here that the optimal solution of $N$ is of the form presented in \eqref{eq:qubit_evaluation_Nopt_matrix}. Recall that the convex problem \eqref{eq:optimization_problem_N} is on the domain of positive semi-definite Hermitian matrices $N \in \~S^n_+$, and has a fixed trace $\Tr{N} = q_0$. So it can be initially considered as the following matrix form:
\begin{align} \label{eq:initialN_noff_format}
     N = \begin{bmatrix}n& n_\text{off} \\ n_\text{off}^*& q_0 - n \end{bmatrix},
\end{align}
where $n \in \mathbb{R}^+$ and $n_\text{off}\in \mathbb{C}$. Further, the PSD constraints of \eqref{eq:infinitrandom_const2} reduce to $\det(N)\geq 0$, and $\det(\rho - N) \geq 0$, and $0\leq \Tr{N} \leq 1$. Thus we have
\begin{align} 
    |n_\text{off}|^2 \leq n(q_0 - n) ,\qquad &|\rho_2 - n_\text{off} |^2 \leq (\rho_1 - n) (1 - \rho_1 - q_0 + n ) \label{eq:noff_constraint}.
\end{align}
Next, by examining the objective function \eqref{eq:optimization_problem_N} and the constraints, we observe a symmetry in $n_\text{off}$ values along the real and imaginary parts of the complex plain. 
Note that the first term in \eqref{eq:optimization_problem_N} is a function of eigenvalues of matrix $N$ which only depends on $|n_\text{off}|$.
Similarly, the second term is only a function of $|\rho_2 - n_\text{off}|$. 
These two expressions can be illustrated as two circles in the complex plane of $n_\text{off}$, which are symmetrical w.r.t. the direction of the vector $\rho_2$ in this complex plane. 
Moreover, by representing $G$ in the following matrix elements format:
\begin{align*}
G:= \begin{bmatrix}
    g_1 & \frac{g_2}{|\rho_2|}\rho_2 \\ \frac{g_2}{|\rho_2|} \rho_2^* & g_3
    \end{bmatrix}    
= \frac{1}{1+2k}
    \begin{bmatrix}
        (\rho_1+k)^2 - |\rho_2|^2 
        & (2\rho_1 -1)\rho_2 \\b
        (2\rho_1 -1)\rho_2^*
        & |\rho_2|^2 - (1 - \rho_1 +k)^2
    \end{bmatrix},
\end{align*}
with $k:= \sqrt{\det{\rho}}$, and \eqref{eq:initialN_noff_format},
we can rewrite the LHS of distortion constraint \eqref{eq:distortion_constraint_NG} as 
\begin{align*}
    \Tr{NG} = (g_1 - g_3)n + \frac{2g_2}{|\rho_2|} \Re{n_\text{off} \,\rho^*_2}+q_0 g_3 \geq  1 - D - \braket{1|\rho^2}{1}.
\end{align*}
 By expanding the term $\Re{n_\text{off} \,\rho^*_2}$ in above expression, the distortion constraint becomes
\begin{align*}
    \Re{n_\text{off}}\Re{\rho_2} + \Im{n_\text{off}}\Im{\rho_2} + f(n,q_0,\rho) \geq 0,
\end{align*}
where $f(n,q_0,\rho)$ is a function comprised of the remaining terms. This is a half-plane in the complex plane of $n_\text{off}$, which is orthogonal to the symmetry line $\rho_2$. Therefore, the objective function and all the constraints are symmetric w.r.t. the line $\rho_2$ in the complex plane. In view of the fact that the problem is a convex program, the solution must occur on the line of symmetry. This means the optimal solution has the shape of the matrix described in \eqref{eq:qubit_evaluation_Nopt_matrix},
where $n, s \in \mathbb{R}$ are the variables of optimization. 
\ifmycomments
{\color{blue} Also, the feasibility constraints \eqref{eq:noff_constraint} reduce to the following constraints on real variables $n,s$,
\begin{align} 
    \det{N} &= n(q_0 - n) - s^2 > 0 \label{eq:detNconst},\\
    \det{\rho - N} &= (\rho_1 - n)(1 - \rho_1 - q_0 + n) - (|\rho_2| - s)^2 > 0. \label{eq:detdconst}
\end{align}
}
\fi
Therefore, by rewriting the distortion constraint and the objective function as functions of $n$ and $s$ we arrive at the following optimization problem:
\begin{align*}
    \min_{n,s} &\sum_{i=1,2} \lambda_{N_i}(n,s) \ln(\lambda_{N_i}(n,s)/q_0) + \sum_{i=1,2} \lambda_{d_i}(n,s) \ln(\lambda_{d_i}(n,s)/(1-q_0))\\
    \text{such that,} \;
    &\Big(g_1 - g_3 \Big) n + \Big(2 g_2\Big)s  + \Big(q_0 g_3 + \braket{1|\rho^2}{1} - 1 + D \Big) \geq 0 
\end{align*}
where $n,s \in \mathbb{R}$. The terms $\lambda_{N_1, N_2}(n,s) =   \frac{q_0}{2} \pm E_1(n,s)$ and $\lambda_{d_1, d_2}(n,s) =  \frac{1 - q_0}{2}\pm E_2(n,s)$  
are the eigenvalues of $N$ and $\rho - N$ respectively, with $E_1(n,s)$ and $E_2(n,s)$ defined in \eqref{eq:E1}. 

We solve by substituting the linear constraint into the objective function, then taking the derivative w.r.t. $n $ and equating it to zero.
The final solution becomes in the form of the following transcendental system of equations of $n,s$ 
 which provides the optimal values $n_{opt}$ and $s_{opt}$:
\begin{align*}
    \frac{-as + b(n - q_0/2)}{E_1}\ln{\frac{q_0/2 + E_1}{q_0/2 - E_1}}
    +  \frac{-a (s - |\rho_2|) + b(n - \rho_1 + \frac{1-q_0}{2})}{E_2} \ln{\frac{\frac{1 - q_0}{2} + E_2}{\frac{1 - q_0}{2} - E_2}}=0,\\
     \bigg(  1 - \frac{4 |\rho_2|^2}{1+2k}  \bigg) n
    +\bigg(  \frac{2|\rho_2|(2\rho_1 -1)}{1+2k}  \bigg)  s 
    +\bigg(  q_0\left(  \rho_1 - 1 + \frac{2|\rho_2|^2}{1+2k}  \right) + \braket{1|\rho^2}{1} - 1 +D  \bigg) =0.
\end{align*}

\ifmycomments
\myqu{The way this optimal N depends on the parameters seems inefficient to me. I think it would be a better way of defining the parameters of input. Maybe if we define the $\rho$ matrix as $\rho = p_0 \nu_0 + p_1 \nu_1$ where $\nu_0, \nu_1$ are orthogonal, then it would be more meaningful. }\myans{Probably our parameters are also meaningful because the distortion constraint is finding the inner product based on $\braket{0|.}{0}$ and $\braket{1|.}{1}$. We also saw that the diagonal quantum state reduces to the classical optimal transport solution. So these parameters also make sense.}
\fi

\section{Proof of Theorem \ref{th:qubit_optimal_transport}} \label{appC}

To find the optimal transportation cost $D_{OT}$ of the Qubit system, we assume having an unlimited available rate and find the minimum possible distortion. In this case, the problem reduces to
\begin{align*}
    D_{OT} :=  \min_{M_0, M_1}\; &1- \sum_x \braket{x|\rho M_x \rho}{x},\quad \text{such that: } \; & \Tr{M_0 \rho} = q_0, \quad M_0 + M_1 = I,\quad M_0, M_1 \geq 0.
\end{align*}
Note that we used the same argument as in Appendix A to remove the transpose operator from the formulations. Then again a change of variable $N:= \sqrt{\rho } M_0 \sqrt{\rho}$ reduces the problem to the following semi-definite programming:
    $\min_N \quad  1 - \braket{1|\rho^2}{1} - \Tr{NG},$ such that $ \; \Tr{N} = q_0$, and $ 0 \preceq N \preceq \rho.$

Similar to the rate-distortion problem in Appendix \ref{appB}, the symmetry requires the optimal solution $N_{OT}$ to be in the matrix form described in \eqref{eq:qubit_evaluation_Nopt_matrix}. With this assumption, the problem reduces to the scalar optimization below:
\begin{align*}
    \min_{n,s} f_0(n,s) &:= -(g_1 - g_3) n - (2g_2)s + 1- q_0g_3 - \braket{1|\rho^2}{1},\\
\text{such that, }\;f_1(n,s)&:=\left(n - \frac{q_0}{2}\right)^2 + s^2 - \left(\frac{q_0}{2}\right)^2 \leq 0, \\
    f_2(n,s)&:=\left(n - \left(\rho_1 - \frac{1-q_0}{2}\right)\right)^2 + (s - |\rho_2|)^2 - \left(\frac{1-q_0}{2}\right)^2  \leq  0.
\end{align*}
This is a quadratic-constrained linear programming which is a convex problem. The Lagrangian function is 
    $\~L(n,s, \lambda_1,\lambda_2)= f_0(n,s) + \lambda_1 f_1(n,s) + \lambda_2 f_2(n,s)$, with $\lambda_{1,2} \geq 0$.
Then by taking the partial derivatives with respect to $n,s$ and equating to zero, we obtain the optimal variables as a function of $\lambda_i$ Lagrange multipliers,
\ifmycomments
    \mynote{The parameter $a$ is always non-negative ($a\geq 0$). It can be proved simply by writing the terms and multiplying.}
\fi
\begin{align} \label{eq:qubit_Dot_nop_sop_eqsinproof}
    n_{opt }(\lambda_1,\lambda_2) = \frac{a + q_0 \lambda_1 + 2(\rho_1 - \frac{1-q_0}{2})\lambda_2}{2(\lambda_1+ \lambda_2)},
    \quad s_{opt}(\lambda_1,\lambda_2) = \frac{b+ 2\lambda_2|\rho_2|}{2(\lambda_1+ \lambda_2)}.
\end{align}
Then by substituting the above expressions in the complementary slackness conditions
\begin{align*}
    \lambda_1 f_1\Big(n_{opt}(\lambda_1,\lambda_2),\, s_{opt}(\lambda_1,\lambda_2)\Big)=0, \quad \lambda_2 f_2\Big(n_{opt}(\lambda_1,\lambda_2),\, s_{opt}(\lambda_1,\lambda_2)\Big)=0,\quad \lambda_{1,2}\geq 0,
\end{align*}
the following solutions are the results of these separate scenarios.
\begin{enumerate}
    \item Minimum happens at $f_2$ circle ($\lambda_1 =0, \; \lambda_2 \neq 0$):
    Because $\lambda_2 \neq 0$, then $f_2(n_{opt},s_{opt})=0$. This results in $\lambda_2 = \frac{\sqrt{a^2+b^2}}{1-q_0}$. Substituting $\lambda_{1,2}$ into \eqref{eq:qubit_Dot_nop_sop_eqsinproof} and then substituting the obtained $n_{opt}$ and $s_{opt}$ into objective function gives,
        \begin{align*}
            D_{OT} 
            &= q_0(1-\rho_1) + \det(\rho) + \frac{1-q_0}{2}\left(1 - \sqrt{1 - 4|\rho_2|^2}\right).
        \end{align*}
        Also, substituting the $n_{opt}$ and $s_{opt}$ values into  $f_1(n_{opt}, s_{opt})\leq 0$ provides the conditions required for this scenario:
            $\frac{1-q_0}{2} + \frac{1-q_0}{\sqrt{a^2 +b^2}}\left(    a(\rho_1 - \frac{1}{2}) + b|\rho_2|    \right) \leq \det(\rho).$

    \item Minimum happens at $f_1$ circle ($\lambda_1 \neq 0, \lambda_2 =0$):
        Because $\lambda_1 \neq 0$, then $f_1(n_{opt},s_{opt})=0$. This results in
            $\lambda_1 = \frac{\sqrt{a^2+b^2}}{q_0}$.
        Again, substituting these $\lambda_{1,2}$  into \eqref{eq:qubit_Dot_nop_sop_eqsinproof} gives,
        \begin{align*}
            D_{OT} 
            &= (1-q_0)\rho_1 + \det(\rho) + \frac{q_0}{2}\left( 1 - \sqrt{1 - 4|\rho_2|^2}\right).
        \end{align*}
        Also, substituting the corresponding $n_{opt}$ and $s_{opt}$ values into $f_2(n_{opt}, s_{opt})\leq 0$ provides the conditions required for this scenario:
            $\frac{q_0}{2} - \frac{q_0}{\sqrt{a^2 +b^2}}\left(    a(\rho_1 - \frac{1}{2}) + b|\rho_2|    \right) \leq \det(\rho)$.
    
    \item Minimum happens at intersection of circles ($\lambda_1 \neq 0, \lambda_2 \neq 0$):
        If neither of the conditions of the above scenarios is satisfied, then the result happens at the intersection of two circles $f_1(n,s)=0$ and $f_2(n,s)=0$,
         which results in the following expressions under separate conditions:
        \begin{itemize}
        \item If $a\geq b$, then 
                $s_{opt} = \frac{2B +q_0 A + A \sqrt{\Delta}}{2(1+ A^2)}$, and $n_{opt} = \frac{q_0 - 2A B + \sqrt{\Delta}}{2(1+ A^2)}$.
        \item If $a<b$, then
                $s_{opt} = \frac{2B +q_0 A - A \sqrt{\Delta}}{2(1+ A^2)}$, and $
                n_{opt} = \frac{q_0 - 2A B - \sqrt{\Delta}}{2(1+ A^2)}$.
        \end{itemize}
        where
            $A:= \frac{1-2\rho_1}{2|\rho_2|}, \quad
            B:= \frac{\rho_1 q_0 - \det(\rho)}{2|\rho_2|}, \quad 
            \Delta := q_0^2 - 4 B^2 - 4q_0AB$.
        
        Then by substituting the above optimal values in $f_0(n_{opt},s_{opt})$ gives the optimal transportation cost 
           $ D_{OT} = 1 - q_0 g_3 - \braket{1|\rho^2}{1} - a n_{opt} - b s_{opt}$.
\end{enumerate}

\section{Proof of Theorem \ref{th:gaussian_onemode}} \label{appD}
Recall from subsection  \ref{subsec:overview_gaussian_systems} that for the one-mode system, $\abs{\Delta^{-1} \Sigma_N} = \sqrt{\det{\Sigma_N}} I_{2\times 2}$. Then in the objective function \eqref{eq:gaussian_optimization_objective}, we can simplify $\frac{1}{2} \text{Sp } g\left(\abs{\Delta^{-1} \Sigma_N} - \frac{I}{2}\right)  = g\left(\sqrt{\det{\Sigma_N}} - 1/2\right)$. 
We further change the variable of optimization to
$X:= \left(  \Sigma_Z^{1/2} (\Sigma_\rho - \Sigma_N)  \Sigma_Z^{1/2}   \right)^{1/2}$.
This gives the noise covariance $\Sigma_N = \Sigma_\rho - \Sigma_Z^{-1/2} X^2 \Sigma_Z^{-1/2}$. Then the optimization problem for the one-mode system is equivalent to
\begin{align}
    \min_X \; &  \left[\Theta(X) := - g\left(  \sqrt{ \frac{\det{\Sigma - X^2}}{\det{\Sigma_Z}}}    - \frac{1}{2}  \right) \right], \label{eq:Objfunc-onemode}\\
    \text{such that: }& c \leq \sp{X},\\
    &0\leq X  \leq \left(  \Sigma_Z^{1/2} \left(\Sigma_\rho \pm \frac{i}{2} \Delta \right)  \Sigma_Z^{1/2}   \right)^{1/2}, \label{eq:X_constraint} 
\end{align}
where $\Sigma := \Sigma_Z^{1/2}\Sigma_\rho \Sigma_Z^{1/2}$ and  $c = D_\text{max} - D$. 

Note that the inequality $ \pm \frac{i}{2} \Delta \leq \Sigma_N$ implies that $0 \leq \Sigma_N $. This in turn implies that $X^2 \leq \Sigma$.
\ifmycomments
{\color{blue}
To prove this it just suffices to write the eigenvalues for the $2\times 2 $ matrix of $\Sigma_N$ and $\Sigma_{N} \pm \frac{i}{2}\Delta$. Having a $\Sigma_N = \begin{bmatrix}
    a & c\\ c & b
\end{bmatrix}$, its eigenvalues are
\begin{align}
    \lambda_{1,2} = \frac{1}{2}\left((a+b) \pm \sqrt{(a+b)^2 + 4(c^2 - ab)}\right).
\end{align}
Then the eigenvalues of the $\Sigma_N \pm \frac{i}{2}\Delta$ are 
\begin{align}
    \lambda'_{1,2} = \frac{1}{2}\left((a+b) \pm \sqrt{(a+b)^2 + 4(c^2 + \frac{1}{4} - ab)}\right).
\end{align}
It is obvious that when $\lambda'_2 \geq 0$, then it means that $\lambda_2\geq 0$ as well. 
BTW, it could just be said in a direct way that because $\Sigma_N$ is the covariance matrix of the Wigner representation as described before, so it is PSD. One can say that the CCR inequality $\Sigma_N \geq \pm\frac{i}{2}\Delta$ is tighter than the PSD inequality.
}
\fi
Then on this region $\Sigma \geq X^2$, the term $\Sigma - X^2$ is a concave and decreasing function of $X\in S^2_{+}$ in the generalized matrix inequality sense \cite{boyd2004convex}. We also have that $\det{Y}^{1/n}$ is concave for $Y \in S_{++}^n$ \cite[Exercise 3.18]{boyd2004convex}. Applying the composition rule, the term inside the $g(.)$ function is convex. Then by the fact that $-g(.)$ is a convex function, it follows that the optimization problem is convex and can be solved using  KKT conditions. The Lagrangian function is  
    $L(X) = \Theta(X)  - \Tr{\Lambda X^2 } + \Tr{\Pi X^2} -  \pi \sp{X}$. 
\ifmycomments
{\color{blue}
Taking the derivative of the objective function gives
\begin{align}
    \frac{\partial \Theta(X)}{ \partial X} =
     \frac{1}{2} \phi(X) \cdot  \log \frac{\phi(X)+ 1/2}{\phi(X) - 1/2}  \cdot \frac{1}{\det{\Sigma - X^2}} \cdot  \left[ \text{adj}(\Sigma - X^2) \cdot X + X \cdot \text{adj}(\Sigma - X^2) \right],
\end{align}
where $\phi(X):= \sqrt{\frac{\det{\Sigma - X^2}}{\det{\Sigma_Z}}}$ and $\text{adj}(X)$ is the adjugate of the matrix.
}
\fi
We develop the KKT conditions,
\begin{align}
    \frac{1}{2}\phi(X) \cdot \log\frac{\phi(X) + 1/2}{\phi(X) - 1/2} \left[(\Sigma - X^2)^{-1} X + X (\Sigma - X^2)^{-1}\right]  - \Lambda X - X\Lambda + \Pi X + X \Pi 
    &= \pi I,\\
    \Lambda X^2 = 0, \quad \Lambda &\geq 0, \\
    \Pi \left(X^2 - \Sigma \pm \frac{i}{2} \Sigma_Z^{1/2} \Delta \Sigma_Z^{1/2} \right) = 0, \quad \Pi &\geq 0, \label{eq:Pi-equality}\\
    \pi (\sp{X} - c ) = 0, \quad \pi &\geq 0,
\end{align}
where $\phi(X):= \sqrt{\frac{\det{\Sigma - X^2}}{\det{\Sigma_Z}}}$. The constraint \eqref{eq:X_constraint} is equivalent to the following two inequalities, $\Tr{X^2} \leq \Tr{\Sigma}$, and $\det{\Sigma - X^2 } \geq \frac{1}{4} \det{\Sigma_Z}$.
\ifmycomments
{\color{blue}
We can prove that by simply expanding the inequality as follows:
\begin{align}
    \Sigma - X^2 \pm \Sigma_Z^{1/2} \Delta \Sigma_Z^{1/2} \geq 0 
\end{align}
Then by defining $M := \Sigma - X^2$, by the element parameters 
\begin{align}
    M = \begin{bmatrix}
        a & c\\
        c& b 
    \end{bmatrix}
\end{align}
the eigenvalues of the $\Sigma - X^2 \pm \Sigma_Z^{1/2} \Delta \Sigma_Z^{1/2}$ are
\begin{align}
    \lambda_{1,2} = \frac{1}{2} \left[  \sp{M} \pm \sqrt{ \sp{M}^2 - 4\det{M} + \det{\Sigma_Z}}  \right].
\end{align}
}
\fi
Note that the second inequality above is the feasibility condition of the objective function $\Theta(X)$ and is always satisfied inside the domain. Also, the first inequality is satisfied if $X^2 \leq \Sigma$. Therefore, under this condition, $\Pi=0$ in the constraint \eqref{eq:Pi-equality}.

Next we restrict ourself to the solutions of the form $X = U \text{diag}(x_i) U^T$, where $X$ has the same diagonalization as $\Sigma = U \text{diag}(\sigma_i^2)_{i=1:2} U^T$, with $0 \leq x_i \leq {\sigma_i}$. 
This reduces the KKT conditions to the following system of equations:
\begin{align}
    \phi(X) \cdot \log\frac{\phi(X) + 1/2}{\phi(X) - 1/2} \frac{x_i}{\sigma_i^2 - x_i^2}&= \pi, \quad \text{for }i = 1, 2 \label{eq:pi_xi_equation}\\
    \sum_i x_i &= c, \quad \pi \geq 0.
\end{align}
The term $\phi(X)$ does not depend on the index $i$, and therefore, it is just a scaling of $\pi$ which can be absorbed in 
$\tilde \pi = \pi \left(\phi(X) \log\frac{\phi(X) + 1/2}{\phi(X) - 1/2}    \right)^{-1}$.
This gives the final solution by the system of equations in \eqref{eq:ratedist_onemode_solution}.
It suffices to show that for any given $c_\text{min} < c < c_\text{max}$, there exists a solution in the form of the equations in \eqref{eq:ratedist_onemode_solution}. 
Isolating the equality for $x_i$ gives $x_i = -\frac{1}{2 \tilde \pi} + \sqrt{(\frac{1}{2 \tilde \pi})^2 + \sigma_i^2}$, for $i = 1,2$,
which is always valid inside the domain of $x_i$. Then by defining $y:= 1/(2\tilde \pi)$,  we rewrite the second equation as a function of $y$,
\begin{align} \label{eq:pi_c_formulation}
    f(y) := -n y + \sum_{i=1}^2 \sqrt{y^2 + \sigma_i^2} = c.
\end{align}
Recall that $c = D_\text{max} - D$ and $D\in (D_\text {OT}, D_\text {max})$. Therefore, $c_\text{min} = 0$. The value of  $c_\text{max}$ corresponds to $D_\text {OT}$ which is the Wasserstein distance of the QC system, implied by the feasibility constraint \eqref{eq:X_constraint}.

We next provide a loose upper bound $\hat c_\text{max} >> c_\text{max}$  by relaxing the constraint to $X^2 \leq \Sigma $. Therefore, ${\hat c}_\text{max} = \sp{\Sigma^{1/2}} = \sum_i \sigma_i$. 
As $c_\text{min}=0$ results in $y = \infty$ and $c_\text{max} = \sum_i \sigma_i$ results in $y = 0$, then both have valid solutions. Further, $ f(y)$ function is a monotonically decreasing function of $y$, therefore, there is a one-to-one correspondence between values of $c\in (c_\text{min}, \hat c_\text{max})$ and the values of $y\in (0,\infty)$. Thus, for any feasible value of $c \in (c_\text{min}, {\hat c}_\text{max})$, there exists a solution of the form \eqref{eq:ratedist_onemode_solution}. This approach is similar to the solution in the classical case proved in \cite{chenjun_unpublished_privatecommuncation}.

Next, the following optimization problem obtains the Wasserstein distance $D_{OT}$,
 \begin{align}
     D_{OT} = \min_{X'} D_\text{max} - \sp{X'}, \quad \text{such that:} \; \pm \frac{i}{2} \Delta \leq \Sigma_N \leq \Sigma_\rho .
 \end{align}
 For the case of one-mode, similar to the approach above, by setting $X = U \text{diag}[y_1, y_2]U^T$, this optimization problem reduces to maximizing $\sum_i y_i$ subject to constraint $\prod_{i=1}^2 (\sigma_i^2  - y_i^2) \geq \frac{1}{4} \det{\Sigma_Z}$ which results in \eqref{eq:gaussian_wasserstein_syseq}.



\section{Proof of Cardinality Bound} \label{appE}
Here, we provide a cardinality bound for the sample space of the intermediate system $\~W$. We start by introducing the QC version of the classical support lemma \cite{elgamal2011network}:
\begin{lemma} \textbf{(Quantum-Classical Support Lemma)} \label{lemma:support-lemma}
    Let $\~P \subseteq \~G(\~H_A)$ be a compact and connected  subset\footnote{The set of density operators is equipped with the weak operator topology and, by \cite[Lemma 11.1]{Holevo_2019_quantum_systems_book}, this topology coincides with the trace norm topology on this set.} of density operators on finite-dimensional Hilbert space $\~H_A$, and let $\~W$ be an arbitrary set. Further, let $\{ \rho_w \}_{w\in\~W} \in \~P$ be a collection of (conditional) density operators. Let $g_j(\rho), j=1,...,J$ be a collection of real-valued continuous functions of quantum states $\~P$. For every random variable $W$ with probability measure $F(w)$ defined on $\~W$, there exist a random variable $W'$ defined on $\~W'$ with PMF $P_{W'}$ with $|\mathcal{W}' | \leq J$ and a collection of conditional density operators $\tilde \rho_{w'} \in \~P$ indexed by $w' \in \~W'$ such that:
    \begin{align}
        \int_\~W g_j(\rho_w) dF(w) = \sum_{w'\in \~W'} g_j(\tilde\rho_{w'}) P_{W'}(w'), \quad \text{for }j=1,...,J.
    \end{align}
\end{lemma}
\begin{proof}
    First note that the set of density operators $\~G(\~H_A)$ on finite-dimensional Hilbert space $\~H_A$ is compact under weak topology.  
    Further, it is connected because it is a convex set. Then the lemma follows directly from the Fenchel-Eggleston-Caratheodory Theorem \cite{elgamal2011network}.
\end{proof}

Assume having a composite QC state $\tau^{RWX} \in \~G(\~H_R \otimes \~H_W \otimes \~H_X)$ forming a quantum Markov chain $R - W - X$  with the PMR ensemble $\{\hat \rho_w, P_W(w)\}_{w\in \~W}$  with an arbitrary discrete alphabet $\~W$ as the intermediate outcomes, and the classical channel $P_{X|W}$ to the output space $\~X$. 
Consider the following functions of the composite states $\gamma^{RX} \in \~G(\~H_R \otimes \~H_X)$: 
\begin{align} \label{eq:supportlemmafunctions}
g_{j}(\gamma^{RX})=
    \begin{cases}
        \text{param}_{j}(\gamma^R), &\text{for }\; j = 1, \cdots, (\dim \~H_A)^2 - 1 \\
        \braket{x_{j'}|\gamma^X}{x_{j'}}, &\text{for }\;  j' = 1, \cdots, \abs{\~X}-1, \quad \And j = j' + (\dim \~H_A)^2 - 1\\
        H(R)_{\rho}-H(R)_\gamma,  &\text{for }\;  j =  \abs{\~X}+(\dim \~H_A)^2 - 1\\
        H(X)_{Q_X}-H(X)_\gamma, &\text{for }\;  j =  \abs{\~X}+(\dim \~H_A)^2\\
        Tr[\Delta^{RX} \tau^{RX}], &\text{for }\;  j =  \abs{\~X}+(\dim \~H_A)^2 + 1.
    \end{cases}
\end{align}
where $\text{param}_{j}(\sigma)$ is the function returning the $j$-th parameter of the density operator $\sigma$.
Note that a density operator $\rho \in \~G(\~H_A)$ can be expressed on an arbitrary orthonormal basis $\{\ket{i}\}_{i=1: d}$ as 
    $\rho = \sum_{i,j}^d \alpha_{i,j} \ketbra{i}{j},$
where $d$ is the rank of the density operator. The coefficients  $\alpha_{i,j} \in \mathbb{C}$ for $i\neq j$, and such that $\alpha_{i,j} = \alpha^*_{j,i}$ and, $\alpha_{i,i} \in \mathbb{R}$ such that $0 \leq \alpha_{i,i } \leq 1$, for $i=1,...,d$, and $\sum_i \alpha_{i,i} = 1$. 

By directly applying the QC support lemma \ref{lemma:support-lemma} to the set of functions \eqref{eq:supportlemmafunctions}, we obtain that there exists a $\tilde{\~W}$ random variable such that $    \sum_w g_j(\tau^{RX}_w) P_W(w) = \sum_{\tilde w} g_j(\nu^{RX}_{\tilde w}) \tilde P_{\tilde W}(\tilde w)$ for all $j = 1,\cdots, (\dim \~H_A)^2 - 1 $,
where $\tau_w^{RX}$ and $\nu^{RX}_w$ are the conditional composite states given the measurement outcome $w$, respectively. 
This proves the desired cardinality bound.

\ifmycomments
{\color{blue}
This proves the existence of a random variable $W'$ with cardinality at most 
 $|\~W'| \leq (\dim \~H_A)^2+|\~X|+ 1$  ,
with a composite state $\nu^{RW'X}$ which forms the quantum Markov chain $R - W' - X$ that satisfies the equalities:
\begin{align}
    \nu^X = \tau^X \equiv Q_X, \ \ 
    \nu^R = \tau^R \equiv \rho, & \ \ 
    \Tr{\Delta_{RX} \nu^{RX}} = \Tr{\Delta_{RX} \tau^{RX}},\\
    I(X;W')_\nu = I(X;W)_\tau, & \ \ 
    I(R;W')_\nu = I(R;W)_\tau .
\end{align}
}
\fi


\ifmycomments
\myqu{Is this correct? If yes, why didn't Wilde use this simple argument in his paper?}

\myqu{What to do about the continuous case?}
\fi

\end{appendices}
\medskip
\bibliography{references_first.bib} 

\end{document}